\newcommand{\pre}{\mathsf{pre}}
\newcommand{\eq}{\leftrightarrow}
\newcommand{\Eq}{\Leftrightarrow}
\newcommand{\imp}{\rightarrow}
\newcommand{\Imp}{\Rightarrow}
\newcommand{\et}{\wedge}
\newcommand{\vel}{\vee}
\newcommand{\Et}{\bigwedge}
\newcommand{\Vel}{\bigvee}
\newcommand{\proves}{\vdash}
\newcommand{\is}{\exists}
\newcommand{\T}{\top}
\newcommand{\F}{\bot}
\newcommand{\Dia}{\lozenge}
\renewcommand{\Box}{\square}
\newcommand{\dia}[1]{\langle #1 \rangle}
\newcommand{\M}{M}   
\renewcommand{\phi}{\varphi}
\newcommand{\powerset}{{\mathcal P}}
\newcommand{\bisim}{{\raisebox{.3ex}[0mm][0mm]{\ensuremath{\medspace \underline{\! \leftrightarrow\!}\medspace}}}}
\newcommand{\bisrel}{\ensuremath{\mathfrak{R}}}
\newcommand{\lumis}{\succeq}
\newcommand{\weg}[1]{}
\newcommand{\Formulas}{{\mathcal L}}
\newcommand{\langu}{\Formulas}
\newcommand{\States}{S}
\newcommand{\state}{s}
\newcommand{\stateb}{t}
\newcommand{\Atoms}{P}
\newcommand{\atom}{p}
\newcommand{\Agents}{A}
\newcommand{\Group}{B}
\newcommand{\agent}{a}
\newcommand{\agenta}{\agent}
\newcommand{\agentb}{b}
\newcommand{\arel}{\ensuremath{\mathsf{R}}}
\newcommand{\Actions}{\ensuremath{\mathsf{S}}}
\newcommand{\actiona}{\ensuremath{\mathsf{s}}}
\newcommand{\Domain}{{\mathcal D}}
\newcommand{\dom}{\mathit{dom}}
\newcommand{\lang}{\langu}
\newcommand{\Lpl}{\ensuremath{\lang_{\mathit{pl}}}}
\newcommand{\langml}{\ensuremath{\lang_{\mathit{ml}}}}
\newcommand{\langpl}{\Lpl}
\newcommand{\langauml}{\ensuremath{\lang_{\mathit{auml}}}}
\renewcommand{\@}{{\color{red} @}}
\newtheorem{theorem}{Theorem}
\newtheorem{definition}[theorem]{Definition}
\newtheorem{proposition}[theorem]{Proposition}
\newtheorem{corollary}[theorem]{Corollary}
\newtheorem{lemma}[theorem]{Lemma}
\newcommand{\aumodel}{U}
\newcommand{\austates}{O}
\newcommand{\austate}{o}
\newcommand{\aufunction}{R\!R}
\newcommand{\covers}{\nabla}
\newcommand{\AAUL}{\uparrow}
\begin{document}

\title{Arrow Update Synthesis}
\author{Hans van Ditmarsch\thanks{LORIA, CNRS, France, {\tt hans.van-ditmarsch@loria.fr}} $^,$\thanks{Hans van Ditmarsch is corresponding author. He is also affiliated to IMSc, Chennai, India. Support from ERC project EPS 313360 is kindly acknowledged.} \and Wiebe van der Hoek\thanks{Computer Science, University of Liverpool, United Kingdom, {\tt Wiebe.Van-Der-Hoek@liverpool.ac.uk}} \and Barteld Kooi\thanks{Department of Philosophy, University of Groningen, the Netherlands, {\tt b.p.kooi@rug.nl}} \and Louwe B.\ Kuijer\thanks{Computer Science, University of Liverpool, United Kingdom, {\tt Louwe.Kuijer@liverpool.ac.uk}}}
\date{}

\maketitle

\begin{abstract}
In this contribution we present {\em arbitrary arrow update model logic} (AAUML). This is a {\em dynamic epistemic logic} or {\em update logic}. In update logics, static/basic modalities are interpreted on a given relational model whereas dynamic/update modalities induce transformations (updates) of relational models. In AAUML the update modalities formalize the execution of {\em arrow update models}, and there is also a modality for quantification over arrow update models. Arrow update models are an alternative to the well-known action models. We provide an {\em axiomatization} of AAUML. The axiomatization is a rewrite system allowing to eliminate arrow update  modalities from any given formula, while preserving truth. Thus, AAUML is decidable and equally expressive as the base multi-agent modal logic. Our main result is to establish {\em arrow update synthesis}: if there is an arrow update model after which $\phi$, we can {\em construct} (synthesize) that model from $\phi$. 
We also point out some pregnant differences in {\em update expressivity} between arrow update logics, action model logics, and refinement modal logic. 
\end{abstract}

\paragraph*{Keywords:} modal logic, synthesis, dynamic epistemic logic, expressivity

\section{Introduction} \label{sec.introduction}

\paragraph*{Modal logic}
In modal logic we formalize that propositions may not be merely true or false, but that they are necessarily or possibly true or false, or that they may be desirable, or forbidden, or true later, or never, or that they are {\em known}. A common setting is for such modal propositions to be interpreted in \emph{relational models}, also known as \emph{Kripke models}. They consist of a domain of abstract objects, called states or worlds; then, given a set of labels, often representing agents, for each such agent a binary relation between those states; and, finally, a valuation of atomic propositions on the domain, typically seen as a unary relation, i.e., a property satisfied on a subset of the domain. The truth of a modal proposition is relative to a state in the relational model, called the actual state or the {\em point} of the model. The unit of interpretation is thus a {\em pointed model}: a pair consisting of a relational model and an actual state. 

If a pair $(s,s')$ is in the relation for $a$ this can mean that after executing action $a$ in state $s$ the resulting state is $s'$. But it can also mean that agent $a$ considers state $s'$ desirable in case she is in state $s$. The interpretation that we focus on, is that of \emph{information}. That is, it is consistent with $a$'s information in state $s$ that the state would be $s'$. In state $s$ it is true that agent $a$ \emph{knows} $\phi$ (or \emph{believes} $\phi$, depending on the properties of the relation), notation $\Box_a \phi$, if the formula $\phi$ is true in all states $s'$ accessible from $s$, i.e., for all $s'$ with $(s,s')$ in the  relation for $a$. The modal logics using that kind of interpretation of modalities are called epistemic logics  \cite{modalhandbook,hvdetal.handbook:2015}.

As an example, consider two agents $a,b$ (commonly known to be) uncertain about the truth of a propositional variable $p$. The uncertainty of $a$ and $b$ can be pictured as follows. We `name' the states with the value of the variable $p$. The actual state is framed. Pairs in the accessibility relation are visualized as labelled {\em arrows}. In the actual state: $p$ is true, agent $a$ does not know $p$ because she considers a state possible wherein $p$ is false (formally $\neg \Box_a p$), agent $a$ also does not know $\neg p$ because she considers a state possible wherein $p$ is true (formally $\neg \Box_a \neg p$, also written as $\Dia_a p$), and similarly for agent $b$. Agent $a$ also knows that she is ignorant about $p$, as this is true in both states that she considers possible. The accessibility relations for $a$ and $b$ are both equivalence relations. This is always the case if the modalities represent knowledge. 

\begin{center}
\begin{tikzpicture}
\node (0) at (0,0) {$\neg p$};
\node (1) at (2,0) {$\fbox{$p$}$};
\draw[<->] (0) -- node[above] {$ab$} (1);
\draw[->] (0) edge[loop above,looseness=15] node[above] {$ab$} (0); 
\draw[->] (1) edge[loop above,looseness=9] node[above] {$ab$} (1); 
\node (1t) at (7,0) {\color{white}$\fbox{$p$}$};
\end{tikzpicture}
\end{center}

\paragraph*{Update logic} In this work we focus on modal logics that are \emph{update logics}. Apart from the modalities that are interpreted {\em in} a relational model, they have other modalities that are interpreted by {\em transforming} a relational model (and by then interpreting the formulas bound by that modality in the transformed model). If the modal logic is an epistemic logic, update logics are called {\em dynamic epistemic logics}. To distinguish them we call the former {\em static} and we call the latter {\em dynamic}.

The {\em updates} $X$ that we consider can be defined as transformers of relational models. This transformation induces a binary update relation between pointed models. To an update relation corresponds an {\em update modality} (often also called update) that is interpreted with this relation, so we can see those as $[X]$ or $\dia{X}$, where $[X]\phi$ means that $\phi$ is true in all pointed models transformed according to the $X$ relation, and  $\dia{X}\phi$ that there is a pair of pointed models in the relation. Given a relational model we can change its domain of states, the relations between the states, or the valuations of atomic propositions, or two or more of those at the same time. There are therefore many options for change. Change the valuation of a model is also known as {\em factual change} \cite{hvdetal.aamas:2005,jfaketal.lcc:2006}. Update involving factual change is an interesting topic, but it is outside the scope of the current paper.

\paragraph*{Public announcement logic}
The basic update for states is the model restriction, and the basic update operation interpreted as a model restriction is a \emph{public announcement}. The logic with epistemic modalities and public announcements is {\em public announcement logic} (PAL) \cite{plaza:1989,baltagetal:1998}. A public announcement of $\phi$ restricts the domain to all states where the announced formula $\phi$ is true, thereby decreasing the uncertainty of the agents. As a result of the domain restriction, the relations and the valuation are adjusted in the obvious way. A condition for the transformation is that the actual state is in the restriction. This means that the announcement formula is true when announced.

As an example, after the public announcement of $p$, both $a$ and $b$ know that $p$:

\begin{center}
\begin{tikzpicture}
\node (0) at (0,0) {$\neg p$};
\node (1) at (2,0) {$\fbox{$p$}$};
\draw[<->] (0) -- node[above] {$ab$} (1);
\draw[->] (0) edge[loop above,looseness=15] node[above] {$ab$} (0); 
\draw[->] (1) edge[loop above,looseness=9] node[above] {$ab$} (1); 
\node (t) at (3.5,0) {\large $\Imp$};
\node (1t) at (7,0) {$\fbox{$p$}$};
\draw[->] (1t) edge[loop above,looseness=9] node[above] {$ab$} (1t); 
\end{tikzpicture}
\end{center}

\paragraph*{Arrow update logic}
The basic update for relations is the relational restriction, i.e., a restriction of the {\em arrows}: a pair in the relation is called an `arrow'. This leaves all states intact, although some may have become unreachable. In {\em arrow update logic} (AUL), proposed in \cite{kooietal:2011} we specify which arrows we wish to preserve, by way of specifiying what formulas should be satisfied at the source (state) of the arrow and the target (state) of the arrow. This determines the model transformation. Such a specification is called an {\em arrow update}. The logic AUL contains modalities for arrow updates.

Given initial uncertainty about $p$ with both agents, a typical arrow update is the action wherein Anne ($a$) opens an envelope containing the truth about $p$ while Bill ($b$) observes Anne reading the contents of the letter. We preserve all arrrows satisfying one of $p \imp_a p, \neg p \imp_a \neg p$, and $\T \imp_b \T$. Therefore, only two arrows disappear, $\neg p \imp_a p$ and $p \imp_a \neg p$.

\begin{center}
\begin{tikzpicture}
\node (0) at (0,0) {$\neg p$};
\node (1) at (2,0) {$\fbox{$p$}$};
\draw[<->] (0) -- node[above] {$ab$} (1);
\draw[->] (0) edge[loop above,looseness=15] node[above] {$ab$} (0); 
\draw[->] (1) edge[loop above,looseness=9] node[above] {$ab$} (1); 
\node (t) at (3.5,0) {\large $\Imp$};
\node (0t) at (5,0) {$\neg p$};
\node (1t) at (7,0) {$\fbox{$p$}$};
\draw[<->] (0t) -- node[above] {$b$} (1t);
\draw[->] (0t) edge[loop above,looseness=15] node[above] {$ab$} (0t); 
\draw[->] (1t) edge[loop above,looseness=9] node[above] {$ab$} (1t); 
\end{tikzpicture}
\end{center}

The boundary between state elimination and arrow elimination is subtle. If $p$ is true, the following arrow update with $\T \imp_a p, \T \imp_b p$ is the same update as a public announcement of $p$. This is because there is no arrow from the $p$ state to the $\neg p$ state after the update. Therefore, if $p$ is true, the $\neg p$ state does not matter.
In another formalism this arrow update is known as the arrow elimination semantics of public announcement \cite{gerbrandyetal:1997,kooi.jancl:2007}. 

\begin{center}
\begin{tikzpicture}
\node (0) at (0,0) {$\neg p$};
\node (1) at (2,0) {$\fbox{$p$}$};
\draw[<->] (0) -- node[above] {$ab$} (1);
\draw[->] (0) edge[loop above,looseness=15] node[above] {$ab$} (0); 
\draw[->] (1) edge[loop above,looseness=9] node[above] {$ab$} (1); 
\node (t) at (3.5,0) {\large $\Imp$};
\node (0t) at (5,0) {$\neg p$};
\node (1t) at (7,0) {$\fbox{$p$}$};
\draw[->] (0t) -- node[above] {$b$} (1t);
\draw[->] (1t) edge[loop above,looseness=9] node[above] {$ab$} (1t); 
\end{tikzpicture}
\end{center}

\paragraph*{Generalizations} In PAL and AUL the complexity (the number of states) of the relational model cannot increase. By generalizing the mechanism underlying state elimination and arrow elimination we can achieve that, and thus express more model transformations. This increases their {\em update expressivity}. From the perspective of information change, this adds uncertainty about what is happening. We obtain \emph{action models} \cite{baltagetal:1998} as a generalization of public announcements, and \emph{arrow update models} \cite{kooirenne} as a generalization of arrow updates. 

\paragraph*{Action model logic}
{\em Action model logic} (AML) was proposed by Baltag, Moss and Solecki in \cite{baltagetal:1998}. An {\em action model} is like a relational model but the elements of the domain are called {\em actions} instead of states, and instead of a valuation a {\em precondition} is assigned to each domain element. The transformed relational model is then the modal product of the relational model and the action model, restricted to (state,action) pairs where the action can be executed in that state. We refer to Section \ref{sec.arrowvaction} for a formal introduction. 

An example is the action as above wherein Anne reads the contents of a letter containing $p$ or $\neg p$, but now with the increasing uncertainty that Bill is uncertain whether Anne has read the letter (and that they are both aware of these circumstances). The action model is not depicted (details are in Section \ref{sec.arrowvaction}). The model transformation is as follows. In the resulting framed state, $a$ knows that $p$, but $b$ considers it possible that $a$ is uncertain about $p$, i.e., $\Box_a p \et \Dia_b \neg (\Box_a p \vel \Box_a \neg p)$. In the figure we assume transitivity of the relation for $b$.

\begin{center}
\begin{tikzpicture}
\node (m) at (-2.6,0) {\color{white}$\neg p$};
\node (0) at (0,0) {$\neg p$};
\node (1) at (2,0) {$\fbox{$p$}$};
\draw[<->] (0) -- node[above] {$ab$} (1);
\draw[->] (0) edge[loop above,looseness=15] node[above] {$ab$} (0); 
\draw[->] (1) edge[loop above,looseness=9] node[above] {$ab$} (1); 
\node (t) at (3.5,0) {\large $\Imp$};
\node (0r) at (6,0) {$\neg p$};
\node (1r) at (8,0) {$p$};
\node (0ra) at (6,2) {$\neg p$};
\node (1ra) at (8,2) {$\fbox{$p$}$};
\draw[<->] (0r) -- node[above] {$ab$} (1r);
\draw[<->] (0ra) -- node[above] {$b$} (1ra);
\draw[<->] (0r) -- node[left] {$b$} (0ra);
\draw[<->] (1r) -- node[right] {$b$} (1ra);
\draw[->] (0r) edge[loop left,looseness=9] node[left] {$ab$} (0r); 
\draw[->] (1r) edge[loop right,looseness=15] node[right] {$ab$} (1r); 
\draw[->] (0ra) edge[loop left,looseness=9] node[left] {$ab$} (0ra); 
\draw[->] (1ra) edge[loop right,looseness=9] node[right] {$ab$} (1ra); 
\end{tikzpicture}
\end{center}

\noindent Similar logics (or semantics) for action composition are found in \cite{hvd.thesis:2000,kooi:2003,jfaketal.lcc:2006,aucher:2010,EijckSW11}. Action model logic is often referred to as (the) dynamic epistemic logic. As said, we use the latter more generally, namely to denote any update logic with an epistemic interpretation.

\paragraph*{Arrow update model logic} {\em Generalized arrow update logic} \cite{kooirenne} is a (indeed) generalization of arrow update logic where the dynamic modalities for information change formalize execution of (pointed) {\em arrow update models}, structures akin to the {\em action models} of action model logic. In this contribution, instead of generalized arrow update logic we call it {\em arrow update model logic} (AUML). The arrow updates of \cite{kooietal:2011} correspond to singleton arrow update models. The next Section \ref{sec.aauml} formally introduces them. The above is also an example of arrow update model execution --- Section \ref{sec.arrowvaction} explains how to get action models from arrow update models and vice versa, and to what extent they define the same update.
   
\paragraph*{Quantification over information change}   
Another extension of update logics is with quantification over updates. {\em Arbitrary public announcement logic} (APAL) adds quantification over public announcements to PAL \cite{balbianietal:2008}. {\em Arbitrary arrow update logic} (AAUL) \cite{hvdetal.aij:2017} extends arrow update logic with quantifiers over information change induced by arrow updates: it contains dynamic modalities formalizing that {\em there is an arrow update after which} $\phi$. {\em Arbitary action model logic} (AAML) by Hales \cite{hales2013arbitrary} add quantifiers over action models to AML. In {\em arbitrary arrow update model logic} (AAUML), the topic of this paper, we add quantifiers over arrow update models to the logical language. It is like Hales' arbitrary action model logic. {\em Refinement modal logic} (RML) \cite{bozzellietal.inf:2014} has a modality representing quantification over updates, but does not have (deterministic/concrete) update modalities in the object language to quantify over. We show that the AAML and AAUML quantifier behave much (but not quite) like the refinement quantifier in RML. Section \ref{sec.rml} is devoted to it.

Figure~\ref{fig:logic_relations} gives an overview of the different logics discussed in the paper, in their relation to AAUML. The four logics in the left square are based on state manipulation, the four logics in the right square are based on arrow manipulation. Entirely on the left we find the base modal logic ML and the logic RML, that is also arrow manipulating.

All these logics are equally expressive as ML and are decidable, which can be shown by truth-preserving rewriting procedures to eliminate updates (for AAUML this is one of the results of the paper), except for APAL and AAUL, which are more expressive and undecidable \cite{balbianietal:2008,frenchetal:2008,hvdetal.aij:2017,hvdetal.undecidable:2017}.   However, the logics greatly differ in update expressivity, as the typical examples above already demonstrated. See also Sections \ref{sec.updateex} --- \ref{sec.rml}. Finally, it should be mentioned that all the logics are invariant under bisimulation. This is because the parameters of the model transforming dynamic modalities and quantifiers are (model restrictions induced by) formulas.

There are many other updates and update logics that we do not consider in this paper. In particular we do not consider updates $X$ that can only be defined as {\em pointed} model transformers (that is, they cannot be {\em globally} defined on the entire model; they are defined {\em locally}: how they transform the model depends on the actual state). If such were the definition of an update, even the interpretation of a static modality can be seen as an update, namely transforming the model with point $s$ into the model with point $s'$, where the point has shifted given a pair $(s,s')$ in the relation for an agent. Such local update logics are often more expressive than modal logic, are often undecidable, are typically not invariant under (standard) bisimulation, and may lack axiomatizations. Examples are \cite{jfak.sabotage:2005,aucheretal:2009,arecesetal:2012}. In \cite{arecesetal:2012} not only relational {\em restriction} is considered but also relational {\em expansion} (`bridge') and relational {\em change} that is neither restriction nor expansion, such as reversing the direction of arrows (`swap'). It should finally be noted that the distinction between static modalities, interpreted in a model, and dynamic modalities, interpreted as updates, is not rigid: unifying perspectives include \cite{jfaketal.lcc:2006}.

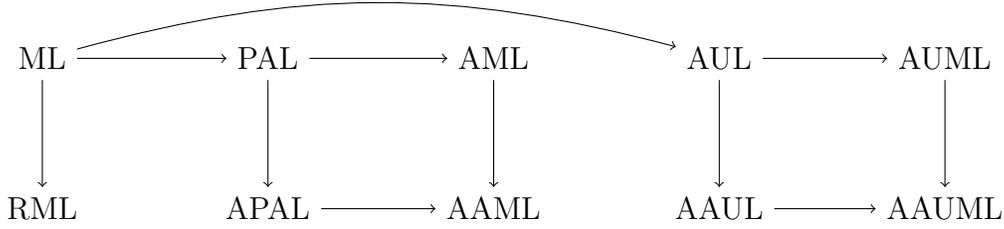
\begin{figure}
\begin{center}
\begin{tikzpicture}
\node (m33) at (-3,2) {ML};
\node (m30) at (-3,0) {RML};
\draw[->](m33) -- (m30);
\node (03) at (0,2) {PAL};
\node (00) at (0,0) {APAL};
\node (33) at (3,2) {AML};
\node (30) at (3,0) {AAML};
\draw[->] (03) --  (33);
 \draw[->] (00) -- (30);
 \draw[->] (03) -- (00);
 \draw[->] (33) -- (30);
 
\node (63) at (6,2) {AUL};
\node (60) at (6,0) {AAUL};
\node (93) at (9,2) {AUML};
\node (90) at (9,0) {AAUML};
 
 \draw[->] (63) -- (93);
 \draw[->] (60) -- (90);
\draw[->] (63) -- (60);
\draw[->] (93) -- (90);

\draw[->](m33) --  (03);
\draw[->,bend left=15] (m33) to (63);
\end{tikzpicture}
\end{center}
\caption{An overview of update logics discussed in the paper. Horizontal arrows informally represent {\em more complex} updates. Vertical arrows informally represent {\em quantification over} updates. The arrows can be interpreted as syntactic extensions (modulo the names of quantifiers) or as semantic generalizations. Assume transitive closure.}
\label{fig:logic_relations}
\end{figure}

\paragraph*{Synthesis} 
For these update logics we can ask whether there is an update that achieves a certain goal. For the logics without quantification this question cannot be asked in the object language but only meta-logically. That is, we can ask whether there is an update $X$ such that $\langle X\rangle \phi$ is true. For the update logics with quantification this question can be asked in the object language. Let $\dia{?}$ be (the existential version of) that quantifier. Then $\langle ?\rangle \phi$ asks whether there is an update $X$ that makes $\phi$ true. 

Only knowing \emph{whether} there is an update that achieves a goal is not very satisfying; we would also like to know \emph{which} update, if any, achieves the goal. So we would like to know not only whether the goal is achievable but also how it can be achieved. The process of constructing this update that achieves the goal is known as \emph{synthesis}.

Formally, the synthesis problem for a given type of update takes as input a formula $\phi$, and gives as output an update $X$ of that type such that, whenever $\phi$ can be achieved, then $X$ achieves $\phi$. In symbols, this is the validity of $\langle ?\rangle \phi \rightarrow \langle X \rangle \phi$.

This is a rather strong goal. We do not consider it sufficient to find, for every pointed model $(M,s)$, an update $X_{(M,s)}$ such that $(M,s)$ satisfies $\langle?\rangle\phi \rightarrow \langle X_{(M,s)}\rangle \phi$. We want one single update $X_\phi$ that achieves $\phi$ in every model where $\phi$ is achievable.
Because this goal is so strong, there is, in general, no guarantee that synthesis is possible. 

For PAL this strong kind of synthesis is impossible. If $(M_1,s_1)$ satisfies $\langle\psi_1\rangle\phi$ and $(M_2,s_2)$ satisfies $\langle\psi_2\rangle\phi$, so if $\phi$ can be achieved in two different situations using two different public announcements, then there is typically no unifying public announcement $\psi$ such that $(M_1,s_2)$ satisfies $\langle\psi\rangle \phi$ and $(M_2,s_2)$ satisfies $\langle\psi\rangle\phi$.\footnote{For example, consider the four-state model below; $\overline{p}$ means that $p$ is false in that state, etc. Both states where $p,q,r$ are all true satisfy that $\dia{?}(\Box_ap \et \neg\Box_bp)$. In the top-left $pqr$-state this is true because $\dia{q}(\Box_ap \et \neg\Box_bp)$ is true, whereas in the bottom-right $pqr$-state this is true because $\dia{r}(\Box_ap \et \neg\Box_bp)$ is true. However, there is no announcement $\phi$ such that $\dia{\phi}(\Box_ap \et \neg\Box_bp)$ is truth in both $pqr$-states. Assuming that there were such an announcement easily leads to a contradiction.
\center
\begin{tikzpicture}
\node (0r) at (6,0) {$\overline{p}q\overline{r}$};
\node (1r) at (8,0) {$pqr$};
\node (0ra) at (6,1.2) {$pqr$};
\node (1ra) at (8,1.2) {$\overline{pq}r$};
\draw[<->] (0r) -- node[above] {$a$} (1r);
\draw[<->] (0ra) -- node[above] {$a$} (1ra);
\draw[<->] (0r) -- node[left] {$b$} (0ra);
\draw[<->] (1r) -- node[right] {$b$} (1ra);
\draw[->] (0r) edge[loop left,looseness=9] node[left] {$ab$} (0r); 
\draw[->] (1r) edge[loop right,looseness=9] node[right] {$ab$} (1r); 
\draw[->] (0ra) edge[loop left,looseness=9] node[left] {$ab$} (0ra); 
\draw[->] (1ra) edge[loop right,looseness=9] node[right] {$ab$} (1ra); 
\end{tikzpicture}
}

For AUL this strong kind of synthesis is also not possible. But, somewhat surprisingly, in \cite{hales2013arbitrary}, Hales showed that this synthesis is possible for AML. This result was surprising for the following reason. Hales obtained his synthesis result with refinement modalities as quantifiers. It was already known that finite action model execution results in a refinement of the current relational model, but also that the other direction does not hold: there are refinements (i.e., updates) that can only be achieved by executing an infinite action model \cite{hvdetal.world:2008}. However, as the synthesis is with respect to making a given formula $\phi$ true, a finite syntactic object, synthesis for AML was after all possible.

In this contribution we show that synthesis is also possible for AUML. That is, for a given goal formula $\phi$, we can construct an arrow update model $X$ such that\begin{quote} For all $(M,s)$: there is an arrow update model $Y$ such that $(M,s)$ satisfies $\langle Y\rangle \phi$, if and only if $(M,s)$ satisfies $\langle X \rangle \phi$. \end{quote}
In AAUML we also have a quantifier over arrow update models. Therefore, in that logic the synthesis translates to the above-mentioned validity $ \dia{?}\phi \imp \dia{X}\phi$.  In AUML / AAUML we synthesize a (single-)pointed arrow update model, whereas for AAUML Hales synthesizes a {\em multi}-pointed action model, and it can be easily shown that this cannot be single-pointed.

\paragraph*{Results in the paper}
In this contribution we present {\em arbitrary arrow update model logic} (AAUML), that further extends arrow update model logic AUML, namely with dynamic modalities formalizing that {\em there is an arrow update {\bf model} after which} $\phi$. For this logic AAUML we obtain various results. We provide an {\em axiomatization} of AAUML. The axiomatization is a rewrite system allowing to eliminate dynamic modalities  from any given formula, while preserving truth. Thus, unlike AAUL, AAUML is decidable, and equally expressive as multi-agent modal logic. We establish {\em arrow update model synthesis}: if there is an arrow update model after which $\phi$, we can {\em construct} (synthesize) that model from $\phi$. We define a notion of {\em update expressivity} and we determine the relative update expressivity of AAUML and other arrow update logics and action model logics, and RML.

\paragraph*{Overview of content} Section \ref{sec.aauml} presents the syntax and semantics of arbitrary arrow update model logic, AAUML, and elementary structural notions. In Section~\ref{sec.synthesis} we describe the procedure for synthesizing arrow update models. In that section we also introduce a number of validities that are useful when introducing an axiomatization for AAUML, which we do in the subsequent  Section~\ref{sec.axiomatization}. Section \ref{sec.updateex} introduces the notion of {\em update expressivity}. Section \ref{sec.arrowvaction} compares AAUML and AAML, and in particular their update expressivity. This comparison also includes examples of arrow update models that have exponentially larger corresponding action models. Section \ref{sec.arrowvrefine} compares AAUML to RML.

\section{Arbitrary arrow update model logic} \label{sec.aauml}

Throughout this contribution, let $\Agents$ be a finite set of {\em agents} and let $\Atoms$ be a disjoint countably infinite set of {\em propositional variables} (or {\em atoms}).

\subsection{Structures} \label{sec.structures}

A {\em relational model} is a triple $M = (\States, R, V)$ with $S$ a non-empty {\em domain} (set) of {\em states} (also denoted $\Domain(M)$), $R$ a function assigning to each agent $a \in \Agents$ an {\em accessibility relation} $R_\agent \subseteq \States \times \States$, and $V: \Atoms \imp \States$ a {\em valuation function} assigning to each propositional variable
$p \in \Atoms$ the subset $V(p) \subseteq S$ where the variable is true. For $s \in \States$, the pair $(M,s)$ is called a {\em pointed relational model}, and for $T \subseteq \States$, the pair $(M,T)$ is called a {\em multi-pointed relational model}.

For any relation $R$ on a domain $X$, instead of $(x,y) \in R$ (where $x,y \in X$) we may write $R(x,y)$ or $xRy$, and $R(x)$ or $Rx$ for the set $\{ y \in X \mid R(x,y) \}$. If $R(x,y)$ we also say that $R$ \emph{links} $x$ to $y$, or that there \emph{is an arrow} from $x$ to $y$. Relation $R$ is: \emph{reflexive} iff for all $x \in X$, $R(x,x)$; \emph{serial} iff for all $x \in X$ there is $y \in X$ such that $R(x,y)$; \emph{transitive} iff for all $x,y,z \in X$, if $R(x,y)$ and $R(y,z)$ then $R(x,z)$; \emph{Euclidean} iff for all $x,y,z \in X$, if $R(x,y)$ and $R(x,z)$ then $R(y,z)$; an \emph{equivalence relation} iff it is reflexive, transitive, and Euclidean. Finally, for any $Y,Z \subseteq X$ we let $R(Y,Z)$ mean that for all $y \in Y$ there is a $z \in Z$ such that $R(y,z)$ and for all $z \in Z$ there is a $y \in Y$ such that $R(y,z)$; this is known as \emph{relational lifting}.

The class of relational models is known as $\mathcal{K}$. The class of relational models where all accessibility relations are equivalence relations is known as $\mathcal{S}5$, and the class of relational models where all accessibility relations are serial, transitive, and Euclidean is known as $\mathcal{KD}45$.

Let two relational models $M= (\States, R, V)$ and $M'= (\States', R', V')$ be given. A non-empty relation $\bisrel \subseteq \States \times \States'$ is a {\em bisimulation} if for all $(\state,\state') \in
\bisrel$ and $\agent\in\Agents$:
\begin{description}
\item[atoms] $\state \in V(p)$ iff $\state' \in V'(p)$ for all $p \in \Atoms$;
\item[forth] for all $\stateb \in \States$, if
$R_\agent(\state,\stateb)$, then there is a $\stateb'\in
\States'$ such that $R'_\agent(\state',\stateb')$ and
$(\stateb,\stateb') \in \bisrel$;
\item[back] for all $\stateb' \in \States'$,
if $R'_\agent(\state',\stateb')$, then there is a
$\stateb \in \States$ such that $R_\agent(\state,\stateb)$
and $(\stateb,\stateb') \in \bisrel$.
\end{description}
We write $M \bisim M'$ ({\em $M$ and $M'$ are bisimilar}) iff there is a bisimulation between $M$ and $M'$, and we write $(M,\state) \bisim (M',{\state'})$ ($(M,\state)$ and $(M',{\state'})$ are bisimilar) iff there is a bisimulation between $M$ and $M'$ linking $\state$ and $\state'$. Similarly, we write $(M,T) \bisim (M',T')$ iff there is a bisimulation between $M$ and $M'$ linking every state in $T$ to a state in $T'$ and linking every state in $T'$ to a state in $T$. 

Using the above-defined notion of relational lifting, if $\mathcal{M}_1$ and $\mathcal{M}_2$ are sets of pointed models we say that $\mathcal{M}_1$ and $\mathcal{M}_2$ are bisimilar, denoted $\mathcal{M}_1\bisim \mathcal{M}_2$, if for every $(M_1,s_1)\in \mathcal{M}_1$ there is an $(M_2,s_2)\in \mathcal{M}_2$ such that $(M_1,s_1)\bisim (M_2,s_2)$ and for every $(M_2,s_2)\in \mathcal{M}_2$ there is an $(M_1,s_1)\in \mathcal{M}_1$ such that $(M_1,s_2)\bisim (M_2,s_2)$.\footnote{For the purpose of bisimilarity, we could have treated a multi-pointed model $(M,T)$ as a set of pointed models $\{(M,t)\mid t\in T\}$, so that $(M_1,T_1)\bisim (M_2,T_2)$ if and only if for every $t_1\in T_1$ there is a $t_2\in T_2$ such that $(M_1,t_1)\bisim (M_2,t_2)$, and vice versa. As a union of bisimulations is again a bisimulation, that would have defined the same notion as above.}



We will now define arrow update models. We can think of them as follows. If you remove the valuation from a relational model you get a {\em relational frame}. We now decorate each arrow (pair in the accessibility relation for an agent) with two formulas in some logical language $\lang$: one for a condition that should hold in the source (state) of the arrow and the other that should hold in the target (state) of the arrow. The result is called an {\em arrow update model}.

\begin{definition}[Arrow update model]
Given a logical language $\lang$, an {\em arrow update model} $\aumodel$ is a pair $(\austates,\aufunction)$ where $\austates$ is a non-empty domain (set) of {\em outcomes} (also denoted $\Domain(\aumodel)$) and where $\aufunction$ is an {\em arrow relation} $\aufunction: \Agents \imp \powerset((\austates\times\lang) \times (\austates \times\lang))$. \end{definition} For each agent $\agenta$, the arrow relation links (outcome, formula) pairs to each other.  We write $\aufunction_\agent$ for $\aufunction(\agent)$, and we write $(o,\phi)\imp_\agent(o',\phi')$ for $((\austate,\phi),(\austate',\phi'))\in \aufunction_\agent$, or even, if the outcomes are unambiguous, $\phi\imp_\agent\phi'$. Formula $\phi$ is the {\em source condition} and formula $\phi'$ is the {\em target condition} of the $\agenta$-labelled {\em arrow} from {\em source} $o$ to {\em target} $o'$. A pointed arrow update model, or {\em arrow update}, is a pair $(U,o)$ where $o \in O$. Similarly, we define the {\em multi-pointed arrow update model $(U,Q)$}, where $Q \subseteq O$, known as well as arrow update. There is no confusion with the arrow updates of AUL \cite{kooietal:2011}, as those correspond to singleton pointed arrow update models.

Arrow update models are rather similar to the {\em action models} by Baltag {\em et al.}~\cite{baltagetal:1998}. They are compared in Section \ref{sec.arrowvaction}. 

\subsection{Syntax} \label{sec.syntax}

We proceed with the language and semantics of {\em arbitrary arrow update model logic} (AAUML). \begin{definition}[Syntax] \label{def.language} The language $\lang$ of AAUML consisting of \emph{formulas} $\phi$ is inductively defined as
\[
\lang \ \ni \ \  \phi ::= p \mid \neg \phi \mid (\phi \et \phi) \mid \Box_a \phi \mid [U,o]\phi \mid [\AAUL]\phi 
\]
where $p \in P$, $a \in A$, and where $U = (O, \aufunction)$ with $o \in \austates$ is an arrow update model with $O$ finite and with $\aufunction_a$ finite for all $a \in A$, and with source and target conditions that are formulas $\phi$.  \end{definition} The inductive nature of the definition may be unclear from the construct $[U,o]\phi$. We should think of $[U,o]\phi$ as an $n$-ary operator where not only the formula bound by $[U,o]$ is a formula but also all the source and target conditions in $U$.\footnote{The BNF informatics-style presentation obscures the inductive nature of the language definition, because the source and target conditions of $(U,o)$ are implicit. The mathematics-style presentation of that clause may be clearer: 

Let $\phi \in \lang$, let $U = (O,\aufunction)$ be an arrow update with source and target conditions $\phi_1,\dots,\phi_n \in\lang$ and such that $O$ and $\aufunction_a$ for all $a \in A$ are finite, and let $o \in O$. Then $[U,o]\phi\in\lang$.}
We read $[U,o]\phi$ as `after executing arrow update $(U,o)$, $\phi$ (holds), and $[\AAUL]\phi$ as `after an arbitrary arrow update, $\phi$ (holds)'. Other propositional connectives and dual diamond versions of modalities can be defined as usual by abbreviation: $\Dia_a \phi := \neg\Box_a \neg\phi$, $\dia{U,o}\phi := \neg [U,o] \neg \phi$, and $\dia{\AAUL}\phi := \neg [\AAUL]\neg\phi$. Expression $\phi[\psi/p]$ stands for uniform substitution of all occurrences of $p$ in $\phi$ for $\psi$. 

A formula is a \emph{modal formula} if it has shape $\Box_a\phi$, $\Dia_a\phi$, $[\AAUL]\phi$, $\dia{\AAUL}\phi$, $[U,o]\phi$, or  $\dia{U,o}\phi$. The \emph{modal depth} of a formula $\phi \in \lang$ is defined as: $d(p)=0$, $d(\neg\phi) = d(\phi)$, $d(\phi\et\psi) = \max(d(\phi),d(\psi))$, $d(\Box_a\phi) = d(\dia{\AAUL}\phi) = d(\phi)+1$, and $d([U,o]\phi = d(U)+d(\phi) +1$, where $d(U)$ is the maximum modal depth of the source and target conditions occurring in $U$. 

The propositional sublanguage is called $\langpl$ (the \emph{propositional formulas}). Adding the \emph{basic modal} construct $\Box_a\phi$ to $\langpl$ yields $\langml$ (the language of basic modal logic, the {\em basic modal formulas}). Additionally adding the construct $[U,o]\phi$ yields $\langauml$ (the language of arrow update model logic). 
In the language $\lang$ of AAUML, (modalities for) multi-pointed arrow update models are defined by abbreviation as $[U,Q]\phi := \Et_{o \in Q} [U,o]\phi$. From here on we also consider such modalities as logical connectives, such that $[U,Q]\phi$ is a formula in the logical language.


When doing synthesis, we will put formulas in {\em disjunctive negation normal form}. This fragment DNNF of $\lang$, that is inspired by the disjunctive normal form of propositional logic and the negation normal form of modal logic, is defined as
\[ \begin{array}{llll}
\mathrm{DNNF} & \ni &  \chi ::= \psi \mid (\chi\vel\chi) \\
&& \psi ::= \phi \mid (\psi \et \psi) \\
&& \phi ::= p \mid \neg p \mid \Box_a \chi \mid \Dia_a \chi \mid [U,o]\chi \mid \dia{U,o}\chi \mid [\AAUL]\chi \mid \dia{\AAUL}\chi
\end{array} \]
where the source and target conditions in $(U,o)$ are also formulas $\chi$.

This means that a $\phi \in \lang$ is in disjunctive negation normal form if \emph{every subformula} of $\phi$ is a disjunction of conjunctions of formulas that are an atom, or the negation of an atom, or that have one of $\square_a, \lozenge_a, [U,o], \langle U,o\rangle, [\AAUL]$ or $\langle\AAUL\rangle$ as main connective. In particular, this means that formulas have to be in DNNF at every modal depth. So, for example, $p\vee \square (q\vee (\lozenge p \wedge \neg q))$ is in DNNF, while $p\vee \square (q\vee \neg (\neg\lozenge p \vee q))$ is not. 

\subsection{Semantics} \label{sec.semantics}

We continue with the semantics.  The semantics are defined by induction on $\phi \in \lang$, and simultaneously with the execution of arrow update models.
\begin{definition}[Semantics] \label{def.semantics}
Let a relational model $M = (\States,R,V)$, a state $s \in \States$, an arrow update model $U = (O,\aufunction)$, and a formula $\phi\in\lang$ be given. The truth (or satisfaction) of $\phi$ in $(M,s)$ is defined by induction on $\phi$.
\[ \begin{array}{lcl} 
M,s \models \atom & \text{ iff } & s \in V(\atom) \\ 
M,s \models \neg \phi & \text{ iff } & M,s \not\models \phi  \\ 
M,s \models \phi\et\psi & \text{ iff } & M,s \models \phi \text{ and } M,s \models \psi  \\ 
M,s \models \Box_a \phi & \text{ iff } & M,t \models \phi \text{ for all } (s,t) \in R_\agent \\
M,\state \models [\aumodel,\austate] \phi & \text{ iff } & M * \aumodel, (\state,\austate) \models \phi \text{ where $M * \aumodel$ is defined in $(\sharp)$}  \\
M,s \models [\AAUL] \phi & \text{ iff } & M,s \models [U,o] \phi \text{ for all $(U,o)$ satisfying $(\sharp\sharp)$} \\
\end{array} \]
$(\sharp)$: $M * \aumodel = (\States',R',V')$ is defined as
\[ \begin{array}{lll}
\States' & = & \States \times \austates \\ 
&& \text{For all } a \in A, \phi,\phi'\in\lang, s,s'\in S, o,o'\in O: \\
((\state,\austate), (\state',\austate')) \in R'_\agent & \text{iff} & (\state,\state') \in R_\agent, (\austate,\phi)\imp_\agent(\austate',\phi'), M,\state \models \phi, \text{ and } M,\state' \models \phi' \\
&& \text{For all }p \in P: \\
V'(\atom) & = & V(\atom) \times \austates
\end{array} \]
$(\sharp\sharp)$: $(U,o)$ is an arrow update with all source and target conditions in $\langml$. 

Formula $\phi$ is {\em valid in $M$}, notation $M \models \phi$, iff $M,s \models \phi$ for all $s \in S$; and $\phi$ is {\em valid} iff for all relational models $M$ we have that $M \models \phi$. Formulas $\phi,\psi\in\lang$ are \emph{equivalent} iff for all $M = (S,R,V)$ and for all $s \in S$, $M,s \models \phi$ iff $M,s \models\psi$. The set of validities, also more properly known as the \emph{logic}, is called AAUML.
\end{definition}

Formulas $\phi$ and $\psi$ from different languages will also be called equivalent if they satisfy the above condition. The term AAUML will also continue to be informally used for arbitrary arrow update logic.
The restriction of arrow formulas to $\langml$ in the semantics of $[\AAUL] \phi$ is to avoid circularity of the semantics, as $[\AAUL] \phi$ could otherwise itself be one of those arrow formulas. However, because we will prove that AAUML is as expressive as basic modal logic, we also have
\begin{equation*}M,s\models [\AAUL]\phi \text{ \quad iff \quad } M,s\models [U,o]\phi \text{ for all } (U,o) \hspace{3cm} \ \end{equation*}
without any restriction on the source and target conditions of $U$. We will prove this property in Proposition \ref{prop:fully_arbitrary}, later. 

We conclude this subsection by noting two relatively simple properties of AAUML that will be useful in later sections. Firstly, AAUML is invariant under bisimulation, i.e., if $(M,s)\bisim(M',s')$ then for all $\phi\in \lang$ we have $M,s\models \phi$ iff $M',s'\models \phi$. In \cite[Lemma~3]{hvdetal.aij:2017} it was shown that AUML is invariant under bisimulation. The proof given in \cite{hvdetal.aij:2017} can easily be extended to a proof that AAUML is also invariant under bisimulation.

Secondly, every formula $\phi \in\lang$ is equivalent to a formula $\phi'$ that is in DNNF. Proving the existence of such $\phi'$ is conceptually simple but rather notationally complex. We therefore provide only an example, and trust that the reader can see that the demonstrated process can be generalized to any $\phi \in \lang$. Suppose that $\phi = p \wedge \neg (\square_a\psi_1 \wedge \neg [U,o]\psi_2)$. Our first step is to treat the non-propositional subformulas of $\phi$ as atoms, i.e., we treat the formula as $p\wedge \neg (q_1\wedge \neg q_2)$. This is a formula of propositional logic, so it is equivalent to a formula in disjunctive normal form: $(p\wedge \neg q_1)\vee (p\wedge q_2)$. Then we recall what $q_1$ and $q_2$ represent, and obtain $(p\wedge \neg \square_a\psi_1)\vee (p\wedge [U,o]\psi_2)$. Using the fact that $\neg \square_a\psi_1$ is equivalent to $\lozenge_a\neg \psi_1$, we find that $\phi$ is equivalent to $(p\wedge \lozenge_a\neg \psi_1)\vee (p\wedge [U,o]\psi_2)$. We then repeat this process for $\neg \psi_1$, $\psi_2$ and every formula $\chi$ that occurs as a source or target condition in $U$. The depth of $\neg \psi_1, \psi_2$ and every such $\chi$ is strictly lower than that of $\phi$, so this process eventually terminates, resulting in a formulas $\psi_1', \psi_2'$ and $\chi'$ that are in DNNF and equivalent to $\neg \psi_1, \psi_2$ and $\chi$, respectively. Let $U'$ be the result of replacing every $\chi$ in $U$ by the equivalent $\chi'$. Then the formula $(p\wedge \lozenge_a\psi_1')\vee (p\wedge [U',o]\psi_2')$ is in DNNF and equivalent to $\phi$.


\subsection{Example} \label{sec.aaumlex}

First consider the action of the introductory section of Anne reading a letter containing the truth about $p$ while Bill remains uncertain whether she performs that action. The arrow update model producing the resulting information state is depicted in the upper part of Figure~\ref{fig.below}. In the figure, an arrow $\imp$ labelled with $\phi \ _i \ \phi'$ and linking outcomes $o,o'$ stands for the arrow $\phi \imp_i \phi'$ between these outcomes, i.e., $((o,\phi), (o',\phi')) \in \aufunction_i$; $\phi \ _{ij} \ \phi'$ stands for $\phi \imp_i \phi'$ and $\phi \imp_j \phi'$.

In the resulting model Bill considers it possible that Anne knows $p$, that she knows $\neg p$, and that she still is uncertain about $p$: $\Dia_b \Box_a p \et \Dia_b\Box_a \neg p \et \Dia_b\neg(\Box_a p \vel \Box_a \neg p)$.

Next, consider the action of Anne privately learning that $p$ while Bill remains unaware of her doing so. The arrow update model achieving that and the resulting relational model are depicted in the lower part of Figure~\ref{fig.below}. In the resulting model it is true that, for example, Anne believes $p$ but Bill incorrectly believes that Anne is uncertain about $p$: $\Box_a p \et \Box_b \neg(\Box_a p \vel \Box_a \neg p)$. 

\begin{figure}[h]
\center
\begin{tikzpicture}
\node (0) at (0,0) {$\neg p$};
\node (1) at (2,0) {$\fbox{$p$}$};
\draw[<->] (0) -- node[above] {$ab$} (1);
\draw[->] (0) edge[loop above,looseness=15] node[above] {$ab$} (0); 
\draw[->] (1) edge[loop above,looseness=9] node[above] {$ab$} (1); 
\node (t) at (3,0) {\large $*$};
\node (0r) at (5,0) {$\bullet$};
\node (1r) at (5,2) {$\fbox{$\circ$}$};
%
\draw[<->] (0r) -- node[right] {$\T \ _b \ \T$} (1r);
\draw[->] (0r) edge[loop right,looseness=12] node[right] {$\T \ _{ab} \ \T$} (0r); 
\draw[->] (1r) edge[loop right,looseness=9] node[right] {$p \ _a \ p$} (1r); 
\draw[->] (1r) edge[loop above,looseness=9] node[above] {$\neg p \ _a \ \neg p$} (1r); 
\draw[->] (1r) edge[loop left,looseness=9] node[left] {$\T \ _b \ \T$} (1r); 
%
%
\node (t) at (8,0) {\large $=$};
\node (0r) at (11,0) {$(\neg p, \bullet)$};
\node (1r) at (13.5,0) {$(p, \bullet)$};
\node (0ra) at (11,2) {$(\neg p, \circ)$};
\node (1ra) at (13.5,2) {$\fbox{$(p, \circ)$}$};
\draw[<->] (0r) -- node[above] {$ab$} (1r);
\draw[<->] (0ra) -- node[above] {$b$} (1ra);
\draw[<->] (0r) -- node[left] {$b$} (0ra);
\draw[<->] (1r) -- node[right] {$b$} (1ra);
\draw[->] (0r) edge[loop left,looseness=4] node[left] {$ab$} (0r); 
\draw[->] (1r) edge[loop right,looseness=4] node[right] {$ab$} (1r); 
\draw[->] (0ra) edge[loop left,looseness=4] node[left] {$ab$} (0ra); 
\draw[->] (1ra) edge[loop right,looseness=4] node[right] {$ab$} (1ra); 
\end{tikzpicture}

\bigskip

\noindent
\begin{tikzpicture}
\node (0) at (0,0) {$\neg p$};
\node (1) at (2,0) {$\fbox{$p$}$};
\draw[<->] (0) -- node[above] {$ab$} (1);
\draw[->] (0) edge[loop above,looseness=15] node[above] {$ab$} (0); 
\draw[->] (1) edge[loop above,looseness=9] node[above] {$ab$} (1); 
\node (t) at (3,0) {\large $*$};
\node (0r) at (5,0) {$\bullet$};
\node (1r) at (5,2) {$\fbox{$\circ$}$};
%
\draw[<-] (0r) -- node[right] {$\T \ _b \ \T$} (1r);
\draw[->] (0r) edge[loop right,looseness=12] node[right] {$\T \ _{ab} \ \T$} (0r); 
\draw[->] (1r) edge[loop right,looseness=9] node[right] {$\T \ _a \ p$} (1r); 
%
%
\node (t) at (8,0) {\large $=$};
\node (0r) at (11,0) {$(\neg p, \bullet)$};
\node (1r) at (13.5,0) {$(p, \bullet)$};
\node (0ra) at (11,2) {$(\neg p, \circ)$};
\node (1ra) at (13.5,2) {$\fbox{$(p, \circ)$}$};
\draw[<->] (0r) -- node[above] {$ab$} (1r);
\draw[->] (0ra) -- node[above] {$a$} (1ra);
\draw[<-] (0r) -- node[left] {$b$} (0ra);
\draw[<-] (1r) -- node[right] {$b$} (1ra);
\draw[->] (0r) edge[loop left,looseness=4] node[left] {$ab$} (0r); 
\draw[->] (1r) edge[loop right,looseness=4] node[right] {$ab$} (1r); 
\draw[->] (1ra) edge[loop right,looseness=4] node[right] {$a$} (1ra); 
\end{tikzpicture}
\caption{Different ways of Anne learning that $p$}
\label{fig.below}
\end{figure}
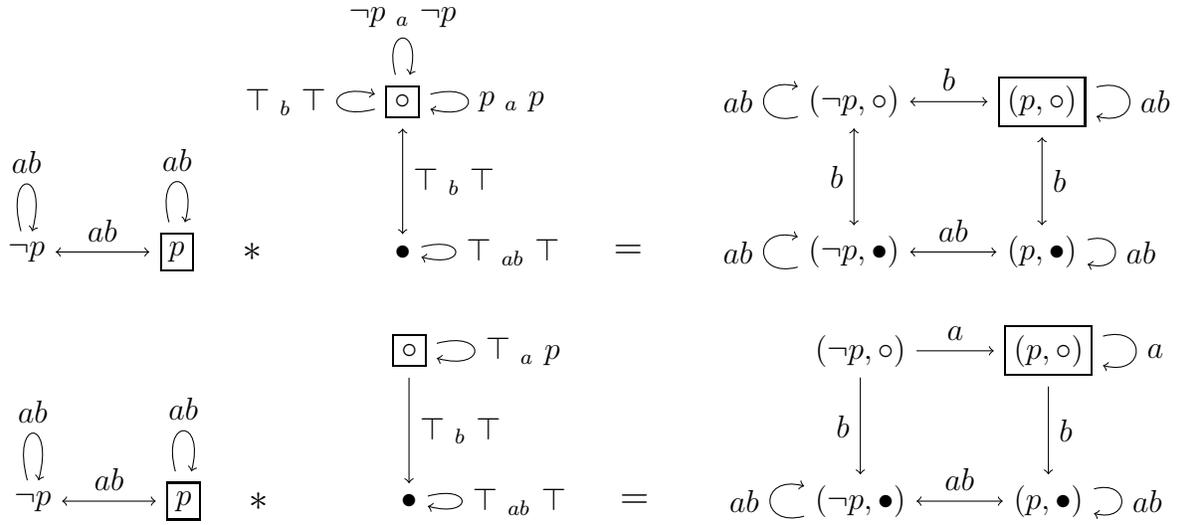

The relation $\aufunction_a$ allows for multiple pairs between the same outcomes. This is necessary. For an example, the singleton arrow update with two reflexive arrows $p \imp_a q, r \imp_a s$ (i.e., $(o,p)\aufunction_a(o,q)$ and $(o,r)\aufunction_a(o,s)$) does not correspond to an arrow update where for any given agent $a$ at most a single arrow links any two outcomes, see \cite{kooietal:2011,kooirenne,hvdetal.aij:2017}.

Arrow updates apply to any kind of relational model, and also in particular to relational models wherein all accessibility relations are equivalence relations, the class $\mathcal{S}5$. These relations model {\em knowledge} of an agent. Relational models wherein all accessibility relations are serial, transitive, and Euclidean, are of the class $\mathcal{KD}45$. These relations model {\em consistent belief} of an agent. As dynamic epistemic logics typically formalize change of knowledge or change of belief, i.e., {\em epistemic change}, of particular interest are therefore arrow updates that are $\mathcal{S}5$-preserving or $\mathcal{KD}45$-preserving, by which we mean that, given a relational model in class $\mathcal{S}5$, the update will produce a relational model in class $\mathcal{S}5$, and similarly for $\mathcal{KD}45$.

The examples in this section are indeed typical in that sense. The first example is an $\mathcal{S}5$-preserving update and the second example is a  $\mathcal{KD}45$-preserving update.

There is more to be learnt from these examples: the first arrow update `seems $\mathcal{S}5$' and the second update `seems $\mathcal{KD}45$'. It is easy to make `seem' precise: consider the following accessibility relation between outcomes induced by an arrow relation: \[ o \imp'_a o' \quad \text{ iff } \quad (o,\phi) \imp_a (o',\phi') \text{ for some } \phi,\phi'. \] Let an arrow update be {\em in class $\mathcal{S}5$} if for all agents $a$, these induced $\imp'_a$ are equivalence relations; and similarly for $\mathcal{KD}45$. The first arrow update is therefore $\mathcal{S}5$ and the second arrow update is $\mathcal{KD}45$. However, an $\mathcal{S}5$ arrow update of an $\mathcal{S}5$ relational model may not result in an $\mathcal{S}5$ relational model (whereas an $\mathcal{S}5$ action model executed in an $\mathcal{S}5$ relational model will always result in an $\mathcal{S}5$ relational model). This is obvious, as the presence of arrows in the resulting model is also determined by source and target conditions. For example, if in the arrow update of the first example we change the arrow $\top \imp_b \top$ linking $\circ$ to $\bullet$ into $\bot \imp_b \top$, then the resulting model will no longer be reflexive. It is no longer $\mathcal{S}5$. It is not known how to address such issues systematically (see Section \ref{sec.conclusion}).

As said, arrow updates are an alternative modelling mechanism to the better known action models. In Section \ref{sec.arrowvaction}, and in particular Subsection \ref{sec.applications}, we compare the two mechanisms in more detail, we will give action models that define the same update as the arrow updates in this section, and we will also present typical applications on which they perform differently.

\section{Arrow update synthesis} \label{sec.synthesis}

The goal of synthesis for AUML is to find, given a goal formula $\phi$, an arrow update (i.e., a pointed arrow update model) $(U,o)$ that makes $\phi$ true. There are at least three ways in which we could interpret this goal, however.
\begin{definition}[Synthesis] \ 
\begin{itemize}
	\item The \emph{local synthesis problem} takes as input a pointed model $(M,s)$ and a goal formula $\phi$. The output is an arrow update $(U,o)$ such that $M,s\models \langle U,o\rangle \phi$, or ``NO'' if no such arrow update exists.
	\item The \emph{valid synthesis problem} takes as input a goal formula $\phi$. The output is an arrow update $(U,o)$ such that $\models \langle U,o\rangle \phi$, or ``NO'' if no such arrow update exists.
	\item The \emph{global synthesis problem} takes as input a goal formula $\phi$. The output is an arrow update $(U,o)$ such that for every pointed model $(M,s)$, if there is some $(U',o')$ such that $M,s\models \langle U',o'\rangle \phi$, then $M,s\models \langle U,o\rangle \phi$.
\end{itemize}
\end{definition}
We recall from the introduction that we take the third approach: when we say synthesis we mean global synthesis. An alternative, equivalent characterization of the global synthesis problem is that, for given $\phi$, we want to find $(U,o)$ such that, for all $(M,s)$, $M,s\models {\langle \AAUL\rangle \phi} \leftrightarrow \langle U,o\rangle \phi$ (see Proposition \ref{prop:fully_arbitrary}). 

Note that for the global synthesis problem, unlike the local synthesis problem and the valid synthesis problem, we do not allow ``NO'' as an output. As a result, it is not obvious that global synthesis for AUML is possible at all. We also recall from the introduction that synthesis is impossible for PAL and for AUL, but possible for AML \cite{hales2013arbitrary}. We now show that synthesis for AUML is indeed also possible. Because our version of synthesis is global, it cannot depend on any specific model. So our synthesis process is purely syntactic.

In our synthesis, we make use of so-called \emph{reduction axioms}. These reduction axioms are a set of validities that, when taken together, show that AAUML has the same expressive power as modal logic.

\subsection{Reduction axioms for arrow update models}
\label{sec:red_U}
We start by considering the reduction axioms for the $[U,o]$ operator.
\begin{lemma}[{\cite{kooirenne}}]
Let $(U,o)$ be an arrow update, $p\in P$, $a\in A$ and $\phi,\psi \in \lang$. Then the following validities hold.
\begin{align*}
\models {} & [U,o]p \leftrightarrow p\\
\models {} & [U,o]\neg \phi \leftrightarrow \neg [U,o]\phi\\
\models {} & [U,o](\phi\wedge \psi)\leftrightarrow ([U,o]\phi \wedge[U,o]\psi)\\
\models {} & [U,o]\square_a\phi \leftrightarrow \bigwedge_{(o,\psi)\rightarrow_a (o',\psi')}(\psi\rightarrow \square_a(\psi'\rightarrow [U,o']\phi))
\end{align*}
\end{lemma}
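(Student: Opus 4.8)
The plan is to prove the four validities by unfolding the semantics of $[U,o]$ given in Definition~\ref{def.semantics}, in particular the construction $(\sharp)$ of the product model $M * \aumodel$. Throughout, fix a relational model $M = (S,R,V)$ and a state $s \in S$; recall that $M,s \models [U,o]\phi$ iff $M*\aumodel, (s,o) \models \phi$, where the domain of $M*\aumodel$ is $S \times O$, the valuation is $V'(p) = V(p) \times O$, and the accessibility relation is $((s,o),(s',o')) \in R'_a$ iff $(s,s') \in R_a$, $(o,\psi)\imp_a(o',\psi')$ for some such labels, $M,s \models \psi$, and $M,s' \models \psi'$.

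For the atomic case, $M,s \models [U,o]p$ iff $M*\aumodel,(s,o) \models p$ iff $(s,o) \in V'(p) = V(p) \times O$ iff $s \in V(p)$ iff $M,s \models p$; the point $o$ plays no role since the valuation ignores the outcome component. The negation and conjunction clauses are immediate from the fact that $M,s \models [U,o]\chi$ is by definition the single statement $M*\aumodel,(s,o)\models\chi$, so $[U,o]$ commutes with the Boolean connectives exactly because $\models$ does at the world $(s,o)$; in particular $[U,o]$ is really a \emph{function} (a single-outcome update is deterministic), which is what makes $[U,o]\neg\phi \eq \neg[U,o]\phi$ hold rather than just one direction. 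The only case with real content is the modal one. Here $M,s \models [U,o]\Box_a\phi$ iff $M*\aumodel,(s,o) \models \Box_a\phi$ iff for all $(s',o')$ with $((s,o),(s',o')) \in R'_a$ we have $M*\aumodel,(s',o')\models\phi$. Now unfold the definition of $R'_a$: such pairs $(s',o')$ are exactly those for which there is some arrow $(o,\psi)\imp_a(o',\psi')$ with $(s,s')\in R_a$, $M,s\models\psi$, and $M,s'\models\psi'$. So the condition becomes: for every arrow $(o,\psi)\imp_a(o',\psi')$ in $\aufunction_a$, if $M,s \models \psi$, then for every $s'$ with $(s,s')\in R_a$ and $M,s'\models\psi'$ we have $M*\aumodel,(s',o')\models\phi$, i.e.\ $M,s'\models[U,o']\phi$. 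Pulling the $M,s\models\psi$ test out front (it does not depend on $s'$) and recognizing the inner universally quantified implication over $R_a$-successors as a $\Box_a$, this is exactly $M,s \models \bigwedge_{(o,\psi)\imp_a(o',\psi')}(\psi \imp \Box_a(\psi' \imp [U,o']\phi))$, which is the claimed right-hand side.

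I expect no genuine obstacle here — this is the standard reduction-axiom computation and is cited from~\cite{kooirenne} — but two bookkeeping points deserve care. First, one must be careful about the range of the big conjunction: it is over all labelled arrows $(o,\psi)\imp_a(o',\psi')$ with source $o$ (the point of the update), and since $\aufunction_a$ is finite this is a finite conjunction, so the right-hand side is a bona fide formula of $\lang$; moreover its modal depth is controlled (the source/target conditions sit under one $\Box_a$ and the residual $[U,o']\phi$ has strictly fewer update modalities on the outermost $[U,\cdot]$), which is what licenses using these equivalences as a terminating rewrite system later in Section~\ref{sec.axiomatization}. Second, one should double-check the edge case where there are \emph{no} arrows with source $o$ labelled $a$ whose source condition holds at $s$: then $(s,o)$ has no $R'_a$-successors, $M*\aumodel,(s,o)\models\Box_a\phi$ holds vacuously, and the right-hand side is an empty (or trivially-true) conjunction, so the two sides agree. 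With these observations the proof is complete.
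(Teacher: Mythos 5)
Your proof is correct and follows essentially the same route as the paper: the first three clauses are read off directly from the semantics (using that a single-pointed arrow update is a total function on pointed models), and the box clause is obtained by unfolding the definition of $R'_a$ in $M*U$ and regrouping the quantifiers, exactly as in the paper's chain of equivalences. Your additional remarks on the finiteness of the conjunction and the vacuous case are sound bookkeeping but add nothing that the paper's argument is missing.
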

\begin{proof}

The first three validities follow immediately from the semantics of $[U,o]$. The fourth validity also follows from the semantics, in the following way.
\[ \begin{array}{lll}
M,w\models [U,o]\square_a\phi & \text{iff} \ \ & M*U,(w,o)\models \square_a\phi\\
&\text{iff} & \text{for all}\ (w',o') \ \text{such that}\ (w,o)R_a(w',o'): M*U, (w',o')\models \phi\\
&\text{iff} & \text{for all}\ (o',\psi') \ \text{and}\ w' \ \text{such that}\ (o,\psi)\rightarrow_a (o',\psi') \ \text{and}\ wR_aw': \\ && \text{if } M,w\models \psi \ \text{and}\ M,w'\models \psi' \text{ then } M*U,(w',o')\models \phi\\
&\text{iff} & \text{for all}\ (o',\psi') \ \text{such that}\ (o,\psi)\rightarrow_a (o',\psi'): \\ && \text{if } M,w\models \psi  \text{ then } M,w\models \square_a (\psi'\rightarrow[U,o']\phi)\\
&\text{iff} & M,w\models \bigwedge_{(o,\psi)\rightarrow_a (o',\psi')}(\psi\rightarrow \square_a(\psi'\rightarrow [U,o']\phi))  \vspace{-.7cm}
\end{array}\] 

\end{proof}
Note that, in particular, $\models [U,o]\neg \phi \leftrightarrow \neg [U,o]\phi$ implies that $[U,o]$ is self-dual: we have $\models [U,o]\phi \leftrightarrow \langle U,o\rangle \phi$. This, of course, does not extend to the arbitrary arrow update operator: there are $\phi$ for which $\not \models [\AAUL]\phi \leftrightarrow \langle \AAUL\rangle \phi$.

The above lemma shows that $[U,o]$ commutes with $\neg$, distributes over $\wedge$ and, in a somewhat complicated way, commutes with $\square_a$. As discussed in \cite{kooirenne}, this suffices to show that $[U,o]$ can be eliminated from the restriction of the language $\lang$ to $\langauml$.
\begin{corollary}
\label{cor:auml_reduction}
For every $\phi \in \langauml$ there is a formula $\phi'\in \langml$ such that $\models \phi \leftrightarrow \phi'$.
\end{corollary}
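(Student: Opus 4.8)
The plan is to prove Corollary~\ref{cor:auml_reduction} by induction on the structure of a formula $\phi \in \langauml$, using the four reduction-axiom validities of the preceding Lemma as rewrite rules that push the $[U,o]$-modalities inward through the boolean and modal connectives and ultimately strip them off at the atomic level. The subtlety — and the reason this is a corollary of the Lemma rather than an immediate consequence of it — is that the reduction axioms do not terminate under a naive ``innermost-first'' strategy, because the last axiom replaces $[U,o]\Box_a\phi$ with a formula containing $[U,o']\phi$ where the bound subformula $\phi$ is unchanged but wrapped in fresh modalities; worse, the source and target conditions $\psi,\psi'$ of $U$ are themselves arbitrary $\langauml$-formulas that may contain $[U,o]$-operators. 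So the main obstacle is setting up a well-founded complexity measure on $\langauml$-formulas that strictly decreases under each rewrite.

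First I would fix a suitable complexity measure. A workable choice is a measure $c(\phi)$ defined so that $c([U,o]\psi)$ strictly dominates any amount of ``reshuffling'' that the reduction axioms can do; concretely, one can take a lexicographic pair or a weighted sum in which (i) the outermost $[U,o]$ modality counts according to the modal depth of its argument $\psi$ (since the problematic fourth axiom decreases that modal depth by one each time it fires, even though it duplicates material and introduces the conditions $\psi,\psi'$ at strictly smaller $[U,o]$-nesting), and (ii) among formulas with the same such count, the number of symbols or the number of update-modality occurrences breaks ties for the first three axioms, which plainly reduce size or depth. One then checks, rule by rule, that each of the four equivalences from the Lemma, read left-to-right, strictly decreases $c$: the atom rule removes a modality; the negation and conjunction rules push $[U,o]$ past one connective without increasing modal depth of the argument; and the box rule trades a $[U,o]\Box_a\psi$ (whose argument has modal depth $d$) for a conjunction of formulas $\psi \imp \Box_a(\psi' \imp [U,o']\psi)$ in which every $[U,o']$ has an argument of modal depth $d-1$, while the source/target conditions $\psi,\psi'$, although they may themselves contain $[U,o]$-style modalities, are conditions of the finitely many arrows of the fixed update $U$ and hence can be handled by treating $c$ as also accounting for the conditions' complexity — since the rewriting eventually reaches them at strictly lower top-level nesting.

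With such a measure in hand, the argument is then a standard strong-induction / well-founded-recursion: given $\phi \in \langauml$ that is not already in $\langml$, it contains an occurrence of $[U,o]$; choose an innermost such occurrence $[U,o]\psi$ whose argument $\psi$ and whose arrow conditions are already $[U,\cdot]$-free, apply whichever of the four axioms matches the top connective of $\psi$ (or of $\psi$ being an atom), obtaining a provably equivalent formula of strictly smaller complexity, and invoke the induction hypothesis. Soundness of the rewriting — i.e. that the result is genuinely equivalent to $\phi$ — follows because each axiom is a validity and validity is preserved under substitution of equivalents inside the modal and boolean contexts of $\langauml$ (all of which are monotone/congruential). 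Iterating until no $[U,o]$ remains yields the desired $\phi' \in \langml$ with $\models \phi \leftrightarrow \phi'$.

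I would also remark that none of this yet touches the $[\AAUL]$ quantifier — that is exactly the gap the rest of the paper must close — so the corollary as stated is genuinely only about the $\langauml$ fragment, and the induction never encounters an $[\AAUL]$-modality. The one point I would be careful to state explicitly in the write-up is the well-foundedness of $c$ and the case analysis verifying strict decrease for the fourth axiom, since that is where a careless measure fails; everything else is bookkeeping already implicit in \cite{kooirenne}.
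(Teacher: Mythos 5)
Your argument is correct and follows essentially the same route as the paper, which states the corollary without proof and defers the details to \cite{kooirenne}: inside-out elimination of $[U,o]$ via the four reduction validities, with termination secured because the box axiom strictly decreases the complexity of the argument under the update modality while the new occurrences of source and target conditions need no further processing. Your innermost-first strategy in the third paragraph in fact renders the elaborate lexicographic measure of your second paragraph unnecessary --- once the chosen occurrence has its argument and all arrow conditions already in $\langml$, a plain structural induction on the argument handles that occurrence, and each such elimination strictly decreases the total number of update modalities in the formula.
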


\subsection{Reduction axioms for the arrow update model quantifier}
\label{sec:red_AAUL}
We can also write similar reduction axioms for $[\AAUL]$. In practice, however, it turns out to be slightly more convenient to write them for the dual operator $\langle \AAUL\rangle$. Note that in the lemmas in this subsection we restrict ourselves to the language $\langauml$, as some of those lemmas use that $\langle \AAUL\rangle$ quantifies over arrow updates with source and target conditions in $\langml$, and because we can meet this constraint by applying Corollary~\ref{cor:auml_reduction}. Later, in Theorem~\ref{thm:expressivity} in the next subsection, we will show that this restriction is unnecessary, and that the lemmas apply to $\lang$ as well.

\begin{lemma} \label{lemma:reduction_simple_2a}
For every $\phi \in \langauml$ and every $a\in A$, we have
\[\models \langle \AAUL\rangle \lozenge_a\phi \leftrightarrow \lozenge_a\langle \AAUL\rangle\phi\]
\end{lemma}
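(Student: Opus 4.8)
\textbf{Proof proposal for Lemma~\ref{lemma:reduction_simple_2a}.}

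The plan is to prove the two directions of the biconditional separately, in both cases by unfolding the semantics of $\langle \AAUL\rangle$ as an existential quantifier over arrow updates with source and target conditions in $\langml$ (clause $(\sharp\sharp)$), and by using the fact that $[U,o]$ is self-dual (the note after the reduction lemma), so that $\langle U,o\rangle\lozenge_a\phi$ can be rewritten via the $\square_a$-reduction axiom into $\bigvee_{(o,\psi)\to_a(o',\psi')}(\psi \wedge \lozenge_a(\psi' \wedge \langle U,o'\rangle\phi))$.

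For the direction $\models \lozenge_a\langle\AAUL\rangle\phi \to \langle\AAUL\rangle\lozenge_a\phi$: suppose $M,s\models \lozenge_a\langle\AAUL\rangle\phi$, so there is $t$ with $(s,t)\in R_a$ and $M,t\models\langle U,o\rangle\phi$ for some arrow update $(U,o)$ with conditions in $\langml$. I would build a new arrow update $(U',o^*)$ by adding a fresh outcome $o^*$ to $U$ together with a single $a$-arrow $(o^*,\top)\to_a(o,\top)$ (and no other arrows out of $o^*$), and point it at $o^*$. Then in $M*U'$ the pair $(s,o^*)$ has exactly the $a$-successors $(t',o)$ for $t'$ an $a$-successor of $s$ in $M$; since $(t,o)$ is such a successor and $M*U',(t,o)\models\phi$ (here I need that adding an isolated outcome and a new arrow into $o$ does not affect the submodel generated by $o$, so truth of $\phi$ at $(t,o)$ is unchanged — this is a bisimulation-invariance argument), we get $M*U',(s,o^*)\models\lozenge_a\phi$, i.e. $M,s\models\langle U',o^*\rangle\lozenge_a\phi$, hence $M,s\models\langle\AAUL\rangle\lozenge_a\phi$.

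For the converse $\models \langle\AAUL\rangle\lozenge_a\phi \to \lozenge_a\langle\AAUL\rangle\phi$: suppose $M,s\models\langle U,o\rangle\lozenge_a\phi$. Using self-duality and the $\square_a$-reduction axiom as above, there is an arrow $(o,\psi)\to_a(o',\psi')$ with $M,s\models\psi$ and a $t$ with $(s,t)\in R_a$, $M,t\models\psi'$, and $M,t\models\langle U,o'\rangle\phi$. Then $M,t\models\langle\AAUL\rangle\phi$ immediately (witnessed by $(U,o')$), so $M,s\models\lozenge_a\langle\AAUL\rangle\phi$. This direction is the easy one.

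The main obstacle is the first direction, specifically making rigorous that the witness arrow update $(U,o)$ at the successor $t$ can be ``re-pointed'' to a fresh initial outcome $o^*$ sitting above $o$ without disturbing the truth of $\phi$ deeper in the model: one must argue that $(M*U', (t,o))$ and $(M*U,(t,o))$ are bisimilar (the fresh outcome $o^*$ and the new arrow into $o$ are not reachable from $(t,o)$, or at least generate a bisimilar submodel), and then invoke bisimulation invariance of the logic. A minor subtlety to handle is the degenerate case where $s$ has no $a$-successors in $M$, but then $M,s\models\lozenge_a\psi$ is false for every $\psi$, so neither side of the biconditional can have its left-hand diamond satisfied and there is nothing to prove; more carefully, $\langle\AAUL\rangle\lozenge_a\phi$ requires an $a$-arrow in $M*U$ out of $(s,o)$, which requires an $a$-arrow out of $s$ in $M$, so both sides are simultaneously unsatisfiable at such $s$.
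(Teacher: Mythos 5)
Your proof is correct and takes essentially the same route as the paper: the easy direction extracts the witness $(U,o')$ at the successor (the paper does this directly from the semantics rather than via the $\square_a$-reduction axiom, but that is cosmetic), and the hard direction adds a fresh outcome with a $(\top,\top)$ $a$-arrow into the old point and appeals to bisimulation invariance, exactly as in the paper. No gaps.
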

\begin{proof}
Let $(\M,w)$ be any pointed model, and suppose that $\M,w\models \langle\AAUL\rangle \lozenge_a\phi$. Then there is some $(U,o)$ such that $\M*U,(w,o)\models \lozenge_a\phi$. So $(w,o)$ has an $a$-successor $(w',o')$ such that $\M*U,(w',o')\models \phi$.

This implies that $\M,w'\models \langle U,o'\rangle \phi$ and therefore $\M,w'\models \langle \AAUL\rangle\phi$. Since $w'$ is an $a$-successor of $w$, we obtain $\M,w\models \lozenge_a\langle\AAUL\rangle \phi$.

Now, suppose that $\M,w\models \lozenge_a\langle \AAUL\rangle\phi$. Then there is an $a$-successor $w'$ of $w$ such that $\M,w'\models \langle \AAUL\rangle \phi$. As witness for this $\langle \AAUL\rangle$ statement there must be some $U',o'$ such that $\M,w'\models \langle U',o'\rangle \phi$.

Let $(U,o)$ be the arrow update obtained by adding one extra world $o$ to $U'$, and a transitions $(o,\top)\rightarrow_a (o',\top)$. Note that $(\M*U',(w',o'))$ is bisimilar to $(\M*U,(w',o'))$, and therefore $\M*U,(w',o')\models \phi$. Finally, note that $(w',o')$ is an $a$-successor of $(w,o)$, so we have $\M*U,(w,o)\models \lozenge_a\phi$ and therefore $\M,w\models \langle \AAUL \rangle\lozenge_a\phi$.
\end{proof}
Note that the proof is constructive. That is, if we find $(U',o')$ such that $\M,w\models \lozenge_a\langle U',o'\rangle \phi$ then not only do we know that $\M,w\models \langle \AAUL\rangle \lozenge_a\phi$, we can also find a specific $(U,o)$ such that $\M,w\models \langle U,o\rangle \lozenge_a\phi$.

Next, we consider a slightly stronger lemma.
\begin{lemma}
\label{lemma:reduction_simple_2}
For every $\phi_1,\cdots,\phi_n\in \langauml$ and every $a\in A$ we have
\begin{equation*}
\models \langle \AAUL\rangle \bigwedge_{1\leq i \leq n}\lozenge_a\phi_i \leftrightarrow \bigwedge_{1\leq i \leq n}\lozenge_a\langle \AAUL\rangle\phi_i
\end{equation*}
\end{lemma}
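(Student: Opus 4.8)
The plan is to prove the two directions of the equivalence separately, reusing the construction from Lemma~\ref{lemma:reduction_simple_2a} but now combining several witnessing arrow update models into one.

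First, the left-to-right direction. Suppose $\M,w\models \langle\AAUL\rangle\bigwedge_{1\leq i\leq n}\lozenge_a\phi_i$, so there is a single $(U,o)$ with $\M*U,(w,o)\models\bigwedge_{1\leq i\leq n}\lozenge_a\phi_i$. For each $i$ pick an $a$-successor $(w_i,o_i')$ of $(w,o)$ in $\M*U$ with $\M*U,(w_i,o_i')\models\phi_i$. Then $\M,w_i\models\langle U,o_i'\rangle\phi_i$, hence $\M,w_i\models\langle\AAUL\rangle\phi_i$, and since $w_i$ is an $a$-successor of $w$ in $\M$ we get $\M,w\models\lozenge_a\langle\AAUL\rangle\phi_i$. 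As this holds for every $i$, $\M,w\models\bigwedge_{1\leq i\leq n}\lozenge_a\langle\AAUL\rangle\phi_i$. This direction needs no new construction at all.

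Second, the right-to-left direction, which is where the real work is. Suppose $\M,w\models\bigwedge_{1\leq i\leq n}\lozenge_a\langle\AAUL\rangle\phi_i$. Then for each $i$ there is an $a$-successor $w_i$ of $w$ and an arrow update $(U_i,o_i)$ with $\M,w_i\models\langle U_i,o_i\rangle\phi_i$. Take the disjoint union of the $U_i$ (so that executing it at any world reproduces, up to bisimulation, the effect of executing the relevant $U_i$), add one fresh outcome $o$, and for each $i$ add an arrow $(o,\top)\rightarrow_a(o_i,\top)$. Call the result $(U,o)$. Then $(w,o)$ has, for each $i$, an $a$-successor $(w_i,o_i)$ in $\M*U$, and because $(\M*U,(w_i,o_i))$ is bisimilar to $(\M*U_i,(w_i,o_i))$ (the fresh material is unreachable from $(w_i,o_i)$, and the components other than $U_i$ are likewise unreachable), bisimulation invariance gives $\M*U,(w_i,o_i)\models\phi_i$. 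Hence $\M*U,(w,o)\models\lozenge_a\phi_i$ for every $i$, so $\M*U,(w,o)\models\bigwedge_{1\leq i\leq n}\lozenge_a\phi_i$, and therefore $\M,w\models\langle\AAUL\rangle\bigwedge_{1\leq i\leq n}\lozenge_a\phi_i$. One should check the side condition $(\sharp\sharp)$: all source and target conditions of $U$ lie in $\langml$, since the $U_i$ come from witnesses for $\langle\AAUL\rangle$ statements (hence have $\langml$ conditions) and the added arrows use $\top$; and $U$ is still finite with finite $\aufunction_a$.

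The main obstacle is purely bookkeeping: verifying that the disjoint-union-plus-fresh-root construction really reproduces each $(U_i,o_i)$ up to bisimulation when evaluated inside $\M*U$, i.e., that the extra outcomes and the arrows into the other components do not create unwanted $a$-successors of $(w_i,o_i)$. This follows because the new arrow $(o,\top)\rightarrow_a(o_i,\top)$ goes \emph{into} $o_i$, not out of it, and the $U_j$ for $j\neq i$ share no outcomes with $U_i$, so the generated submodel of $\M*U$ rooted at $(w_i,o_i)$ is exactly (isomorphic to) the generated submodel of $\M*U_i$ rooted at $(w_i,o_i)$. As with Lemma~\ref{lemma:reduction_simple_2a}, the proof is constructive: given witnesses $(U_i,o_i)$ we have explicitly built the single witness $(U,o)$. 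I would also remark that the statement strictly generalizes Lemma~\ref{lemma:reduction_simple_2a} (take $n=1$), and that it will feed into the reduction axiom for $\langle\AAUL\rangle\square_a\phi$, since an arbitrary conjunction of $\lozenge_a$- and $\square_a$-formulas under $\langle\AAUL\rangle$ is what one ultimately needs to handle.
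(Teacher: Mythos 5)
Your proof is correct and follows essentially the same route as the paper's: the paper also dismisses the left-to-right direction as obvious and, for the converse, builds $(U,o)$ by taking the union of the witnesses $U_i$ with a fresh root $o$ and arrows $(o,\top)\rightarrow_a(o_i,\top)$, then appeals to the bisimilarity of $(\M*U,(w_i,o_i))$ and $(\M*U_i,(w_i,o_i))$. Your extra care about the side condition $(\sharp\sharp)$ and the generated-submodel justification of that bisimilarity only makes explicit what the paper leaves implicit.
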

\begin{proof}
The left-to-right direction is obvious, so we show only the right-to-left direction. So suppose that $\M,w\models \bigwedge \lozenge_a\langle \AAUL\rangle \phi_i$. Then there are $a$-successors $w_1,\cdots,w_n$ of $w$ and pointed arrow update models $(U_1,o_1),\cdots, (U_n,o_n)$ such that $\M,w_i\models \langle U_i,o_i\rangle\phi_i$ for all $i$.

Now, let $(U,o)$ be the arrow update obtained by taking the disjoint union of all $U_i$ and adding one extra outcome $o$, and adding arrows $(o,\top)\rightarrow_a (o_i,\top)$ for every  $o_i$.

For every $i$, $(\M*U_i,(w_i,o_i))$ is bisimilar to $(\M*U,(w_i,o_i))$, so we have $\M*U,(w_i,o_i)\models \phi_i$. Finally, $(w_i,o_i)$ is an $a$-successor of $(w,o)$ for every $i$. As such, we have $\M,w\models \langle U,o\rangle \bigwedge\lozenge_a\phi_i$ and therefore, as all the source and target conditions of $U$ are in $\langml$, $\M,w\models \langle \AAUL\rangle \bigwedge\lozenge_a\phi_i$.
\end{proof}
Again, the proof is constructive, so given $(U_i,o_i)$ for all $i$, we can find the model $(U,o)$. Note also that the $\phi_i$ need not be consistent with each other.

Some reflection may be in order as to why Lemma~\ref{lemma:reduction_simple_2} holds. Suppose that $\M,w\models \bigwedge \lozenge_a\langle \AAUL\rangle\phi_i$. So for every $i$, there is some world $w_i$ that $a$ considers possible as well as some event $U_i$ and outcome $o_i$ such that, if $(U_i,o_i)$ were to happen in $w_i$, then $\phi_i$ would become true.

Now let us look at the arrow update $(U,o)$ that we constructed. Effectively, this arrow update represents us telling $a$ that ``we are performing one of the actions $U_i,o_i$, but we are not telling you which one.'' Now, for every $i$ agent $a$ considers it possible that $w_i$ is the actual world, and that $(U_i,o_i)$ is the event that happened. As such, after we execute our event we are in a situation where every $\phi_i$ is held possible by $a$.

So far, we have only considered diamonds. Now, let us add a box modality.
\begin{lemma}
\label{lemma:AAUL_reduction_1}
For every $\phi_1,\cdots,\phi_n,\psi\in \langauml$ and every $a\in A$, we have
\begin{equation*}\models\langle \AAUL\rangle (\bigwedge_{1\leq i \leq n}\lozenge_a\phi_i\wedge \square_a\psi)\leftrightarrow \bigwedge_{1\leq i \leq n}\lozenge_a \langle \AAUL\rangle(\phi_i\wedge \psi)\end{equation*}
\end{lemma}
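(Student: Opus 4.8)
The plan is to prove Lemma~\ref{lemma:AAUL_reduction_1} in the same constructive style as Lemmas~\ref{lemma:reduction_simple_2a} and~\ref{lemma:reduction_simple_2}, by combining the "union plus fresh root" construction with one additional twist that handles the box $\square_a\psi$. The left-to-right direction should again be the easy one: if $\M,w\models\langle\AAUL\rangle(\bigwedge_i\lozenge_a\phi_i\wedge\square_a\psi)$ via some $(U,o)$, then $(w,o)$ has, for each $i$, an $a$-successor $(w_i,o_i)$ with $\M*U,(w_i,o_i)\models\phi_i$, and since every $a$-successor of $(w,o)$ also satisfies $\psi$, in particular $\M*U,(w_i,o_i)\models\phi_i\wedge\psi$; hence $\M,w_i\models\langle U,o_i\rangle(\phi_i\wedge\psi)$, so $\M,w_i\models\langle\AAUL\rangle(\phi_i\wedge\psi)$, and as $w_i$ is an $a$-successor of $w$ we get the right-hand side.

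For the harder right-to-left direction, suppose $\M,w\models\bigwedge_i\lozenge_a\langle\AAUL\rangle(\phi_i\wedge\psi)$. Then there are $a$-successors $w_1,\dots,w_n$ of $w$ and pointed arrow update models $(U_i,o_i)$ with $\M,w_i\models\langle U_i,o_i\rangle(\phi_i\wedge\psi)$. Now I build $(U,o)$ as the disjoint union of all the $U_i$ together with a fresh root outcome $o$; but instead of attaching a single unconditional $a$-arrow $(o,\top)\to_a(o_i,\top)$ to each $o_i$, I attach $(o,\top)\to_a(o_i,\psi)$. The point of putting $\psi$ as the \emph{target} condition is that, when we compute $\M*U$, the $a$-successors of $(w,o)$ are exactly those pairs $(w',o_i)$ where $wR_aw'$ \emph{and} $\M,w'\models\psi$ — so by construction every $a$-successor of $(w,o)$ satisfies $\psi$. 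Since each $w_i$ satisfies $\psi$ (because $\M,w_i\models\langle U_i,o_i\rangle(\phi_i\wedge\psi)$ and, by the reduction axioms for $[U,o]$, $\langle U_i,o_i\rangle(\phi_i\wedge\psi)$ implies $\langle U_i,o_i\rangle\psi$, and $\psi\in\langml$ is not affected by the update in the relevant sense — more carefully, $\M,w_i\models\langle U_i,o_i\rangle\phi_i$ and $\M*U_i,(w_i,o_i)\models\psi$), the arrow $(w,o)\to_a(w_i,o_i)$ does survive in $\M*U$. Bisimilarity of $(\M*U_i,(w_i,o_i))$ with the corresponding submodel of $(\M*U,(w_i,o_i))$ then gives $\M*U,(w_i,o_i)\models\phi_i$, and we need it to also give $\M*U,(w_i,o_i)\models\psi$; but $\psi$ is bisimulation-invariant, so that is fine too. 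Hence $\M*U,(w,o)\models\bigwedge_i\lozenge_a\phi_i\wedge\square_a\psi$, and since all source and target conditions of $U$ lie in $\langml$ (this is where we use $\psi\in\langml$? — see the obstacle below), we conclude $\M,w\models\langle\AAUL\rangle(\bigwedge_i\lozenge_a\phi_i\wedge\square_a\psi)$.

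The step I expect to be the main obstacle is the apparent tension between the statement's quantification — "for every $\phi_1,\dots,\phi_n,\psi\in\mathcal{L}$" — and the requirement, built into the semantics of $[\AAUL]$ via $(\sharp\sharp)$, that the witnessing arrow update $(U,o)$ has all source and target conditions in $\langml$. Since the fresh-root arrows carry $\psi$ as a target condition, I need $\psi$ to be expressible in $\langml$. The clean way to handle this is to invoke Theorem~\ref{thm:expressivity} (announced in the text as forthcoming): since AAUML is as expressive as $\langml$, every $\psi\in\lang$ is equivalent to some $\psi'\in\langml$, and by the remark in the excerpt the $\langml$-restriction on $(\sharp\sharp)$ is then harmless. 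Alternatively — and this is probably what the authors intend at this point in the development — one proves the lemma first for $\psi\in\langauml$ (and $\phi_i\in\langauml$), exactly as in Lemma~\ref{lemma:reduction_simple_2}, using Corollary~\ref{cor:auml_reduction} to replace $\psi$ by an equivalent modal formula before using it as a target condition, and defers the upgrade to full $\lang$ until after expressivity is established. I would present the $\langauml$ version and note that the generalization to $\lang$ follows once Theorem~\ref{thm:expressivity} is in hand. A secondary, minor point to get right is the bookkeeping that the fresh root $o$ is genuinely fresh and that the union of the $U_i$ keeps their internal arrow structure intact, so that the submodel bisimulations go through unchanged — this is routine, as in the previous two lemmas.
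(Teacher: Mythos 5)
There is a genuine gap in your right-to-left construction: you attach the fresh-root arrows with target condition $\psi$, but the paper's construction uses target condition $\langle U_i,o_i\rangle\psi$, and this difference is essential. Target conditions are evaluated in the \emph{original} model $\M$, whereas the box $\square_a\psi$ you are trying to establish must hold at the successors of $(w,o)$ in the \emph{updated} model $\M*U$. These come apart in both directions. First, your witness arrows may not survive: from $\M,w_i\models\langle U_i,o_i\rangle(\phi_i\wedge\psi)$ you get $\M*U_i,(w_i,o_i)\models\psi$, which does \emph{not} imply $\M,w_i\models\psi$ (take $\psi=\square_b\bot$, false at $w_i$ in $\M$ but true after $U_i$ deletes all $b$-arrows); then the arrow $(o,\top)\to_a(o_i,\psi)$ is not applicable at $(w,w_i)$ and $\lozenge_a\phi_i$ can fail at $(w,o)$. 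Second, unwanted arrows may survive: any $a$-successor $w'$ of $w$ with $\M,w'\models\psi$ yields successors $(w',o_j)$ of $(w,o)$ for \emph{every} $j$, and nothing guarantees $\M*U,(w',o_j)\models\psi$ (take $\psi=\lozenge_b q$, true at $w'$ in $\M$ but false at $(w',o_j)$ once $U_j$ prunes the $b$-arrows from $o_j$); then $\square_a\psi$ fails at $(w,o)$. The cure is exactly the target condition $\langle U_i,o_i\rangle\psi$: since $[U_i,o_i]$ is functional and self-dual, $\M,w'\models\langle U_i,o_i\rangle\psi$ holds iff $\M*U_i,(w',o_i)\models\psi$, iff (by the submodel bisimulation) $\M*U,(w',o_i)\models\psi$, so an arrow from $(w,o)$ survives precisely when its endpoint will satisfy $\psi$ after the update.

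The remaining parts of your proposal are sound and match the paper: the left-to-right direction is the routine semantic argument, and your handling of the $(\sharp\sharp)$ restriction is exactly the paper's move --- the lemma is stated for $\langauml$ at this stage, and Corollary~\ref{cor:auml_reduction} is invoked to replace the target conditions $\langle U_i,o_i\rangle\psi$ by equivalent formulas of $\langml$ before concluding $\M,w\models\langle\AAUL\rangle(\cdots)$, with the upgrade to all of $\lang$ deferred to Theorem~\ref{thm:expressivity}. But as written, the central construction does not establish the lemma.
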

\begin{proof}
The left-to-right direction is fairly obvious. Suppose that $M,w \models\langle \AAUL\rangle (\bigwedge_{1\leq i \leq n}\lozenge_a\phi_i\wedge \square_a\psi)$. Then there is a $(U,o)$ such that $M,w \models \dia{U,o} (\bigwedge_{1\leq i \leq n}\lozenge_a\phi_i\wedge \square_a\psi)$. Therefore,  $M*U,(w,o) \models \Box_a \psi$ and for each $1\leq i \leq n$, $M*U,(w,o) \models \lozenge_a\phi_i$. Let $(w_i,o')$ be such that $(w,o)R_a(w_i,o')$ and  $M*U,(w_i,o') \models\phi_i$. From $M*U,(w,o) \models \Box_a \psi$ and $(w,o)R_a(w_i,o')$ also follows that $M*U,(w_i,o') \models\psi$. Combining both we have $M*U,(w_i,o') \models\phi_i \et \psi$. Therefore, $M,w_i \models\dia{U,o'} (\phi_i \et \psi)$, from which it follows that $M,w_i \models\langle \AAUL\rangle(\phi_i \et \psi)$. From $(w,o)R_a(w_i,o')$ it follows by definition that $wR_aw_i$. From $M,w_i \models\langle \AAUL\rangle(\phi_i \et \psi)$ and $wR_aw_i$ we get the required $M,w \models\Dia_a\langle \AAUL\rangle(\phi_i \et \psi)$. As $i$ was arbitrary, $M,w \models\Et_{1 \leq i \leq n} \Dia_a\langle \AAUL\rangle(\phi_i \et \psi)$.

We now show the right-to-left direction. So suppose that $\M,w\models \bigwedge_{1\leq i \leq n} \lozenge_a\langle \AAUL\rangle (\phi_i\wedge\psi)$. Then for every $1\leq i \leq n$, there are an $a$-successor $w_i$ of $w$ and $(U_i,o_i)$ such that $\M,w_i\models \langle U_i,o_i\rangle (\phi_i\wedge \psi)$.

Let $(U,o)$ be the model obtained by taking the disjoint union of all $U_i$, and adding a single outcome $o$ with arrows $(o,\top)\rightarrow_a (o_i,\langle U_i,o_i\rangle\psi)$ for every $i$.

Consider $(\M*U,(w,o))$. By assumption, $\M,w_i\models \langle U_i,o_i\rangle (\phi_i\wedge \psi)$, so $\M,w_i\models \langle U_i,o_i\rangle \psi$. From that and the fact that model $U$ contains arrow $(o,\top)\rightarrow_a (o_i,\langle U_i,o_i\rangle\psi)$ it follows that $(w_i,o_i)$ is an $a$-successor of $(w,o)$ in $(\M*U)$. Furthermore, also from $\M,w_i\models \langle U_i,o_i\rangle (\phi_i\wedge \psi)$ it follows that $\M,w_i\models \langle U_i,o_i\rangle\phi_i$, i.e.,  $\M*U_i,(w_i,o_i)\models \phi_i$, and as $(\M*U,(w_i,o_i))$ is bisimilar to $(\M*U_i,(w_i,o_i))$ it follows that $\M*U,(w_i,o_i)\models \phi_i$ (the bisimulation is the identity relation on the restriction of the domain of $\M*U$ to the domain of $\M*U_i$). With $(w,o)R_a(w_i,o_i)$ we thus get $\M*U,(w,o)\models \Dia_a\phi_i$, and as $i$ was arbitrary, $\M*U,(w,o)\models \Et_{1 \leq i \leq n}\Dia_a\phi_i$.

Additionally, note that every outgoing $a$-arrow in $(U,o)$ has target condition $\langle U_i,o_i\rangle\psi$ for some $i$. Again observing that $(\M*U,(w_i,o_i))$ is bisimilar to $(\M*U_i,(w_i,o_i))$, it follows that for every $(w_i,o_i)$ that is an $a$-successor of $(w,o)$, we have $\M*U,(w_i,o_i)\models \psi$. It follows that $\M*U,(w,o)\models \square_a\psi$.

Taken together, the above shows that $\M*U,(w,o) \models \bigwedge_{1\leq i \leq n} \lozenge_a\phi_i\wedge\square_a\psi$ and thus that $\M,w\models \langle U,o\rangle (\bigwedge_{1\leq i \leq n} \lozenge_a\phi_i\wedge\square_a\psi)$. Furthermore, every formula $\chi$ occurring in $U$ is either a  basic modal formula, or of the form $\langle U_i,o_i\rangle \psi$. Because $\psi\in \langauml$, we have $\langle U_i,o_i\rangle \psi\in\langauml$ and therefore, by Corollary~\ref{cor:auml_reduction}, there is a formula $\chi_i\in \langml$ that is equivalent to $\langle U_i,o_i\rangle \psi$. Let $U'$ be the arrow update model obtained from $U$ by replacing $\langle U_i,o_i\rangle \psi$ by $\chi_i$, for every $i$.
Then we also have $\M,w\models \langle U',o\rangle (\bigwedge_{1\leq i \leq n} \lozenge_a\phi_i\wedge\square_a\psi)$. Because $U'$ only contains formulas of modal logic, it follows that $\M,w\models \langle \AAUL\rangle (\bigwedge_{1\leq i \leq n} \lozenge_a\phi_i\wedge\square_a\psi)$, as was to be shown.
\end{proof}
Once again, the proof is constructive. Note that on the right-hand side we have eliminated the $\square_a$ connective. This is a consequence of the fact that as the designer of the arrow update model $U$, we have the freedom to inform $a$ that certain worlds, which she might previously have considered possible, are not in fact the actual world. This results in the removal of the $a$-arrows to these worlds. So if we want to make $\square_a\psi$ true after the application of $U$, then we can simply have $a$ eliminate all successors where $\psi$ would otherwise become false. In the construction used in the lemma, we do this using the target condition $\langle U_i,o_i\rangle \psi$.

In the preceding three lemmas, we only considered $\lozenge_a$ and $\square_a$ operators for one single agent $a$. However, when constructing $(U,o)$ we can place arrows for different agents independently, so the same construction works for multiple agents at the same time. This yields the following lemma.

\begin{lemma}
\label{lemma:AAUL_main_reduction}
For every $a\in A$, let $\Phi_a\subseteq \langauml$ be a finite set of  formulas, and let $\psi_a\in \langauml$ be a formula. Then
\begin{align*}\models \langle \AAUL\rangle\bigwedge_{a\in A}(\bigwedge_{\phi_a \in \Phi_a}\lozenge_a\phi_a\wedge \square_a\psi_a)\leftrightarrow & \bigwedge_{a\in A}\bigwedge_{\phi_a\in \Phi_a} \lozenge_a\langle\AAUL\rangle(\phi_a\wedge\psi_a)
\end{align*}
\end{lemma}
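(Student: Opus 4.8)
The plan is to generalize the proof of Lemma~\ref{lemma:AAUL_reduction_1} from a single agent to all agents simultaneously, since the construction of the witnessing arrow update model places arrows for different agents independently and the bisimilarity arguments are insensitive to which agent's arrows we are manipulating. The left-to-right direction is again routine: if $\M,w\models \langle\AAUL\rangle\bigwedge_{a\in A}(\bigwedge_{\phi_a\in\Phi_a}\lozenge_a\phi_a\wedge\square_a\psi_a)$ via some witness $(U,o)$, then for each $a$ and each $\phi_a\in\Phi_a$ there is an $a$-successor $(w_{\phi_a},o_{\phi_a})$ of $(w,o)$ in $\M*U$ satisfying $\phi_a$, and $\square_a\psi_a$ at $(w,o)$ forces $\psi_a$ there too; pushing the update modality inward gives $\M,w_{\phi_a}\models\langle U,o_{\phi_a}\rangle(\phi_a\wedge\psi_a)$, hence $\M,w_{\phi_a}\models\langle\AAUL\rangle(\phi_a\wedge\psi_a)$, and since $w R_a w_{\phi_a}$ we get $\M,w\models\lozenge_a\langle\AAUL\rangle(\phi_a\wedge\psi_a)$ for every $a$ and every $\phi_a$.

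For the right-to-left direction I would mimic the construction in Lemma~\ref{lemma:AAUL_reduction_1} but now indexed by agent-formula pairs. Assume $\M,w\models\bigwedge_{a\in A}\bigwedge_{\phi_a\in\Phi_a}\lozenge_a\langle\AAUL\rangle(\phi_a\wedge\psi_a)$. Then for each $a\in A$ and each $\phi_a\in\Phi_a$ there are an $a$-successor $w_{\phi_a}$ of $w$ and an arrow update $(U_{\phi_a},o_{\phi_a})$ with $\M,w_{\phi_a}\models\langle U_{\phi_a},o_{\phi_a}\rangle(\phi_a\wedge\psi_a)$. Form $(U,o)$ by taking the disjoint union of all the $U_{\phi_a}$ (over all agents $a$ and all $\phi_a\in\Phi_a$; since $A$ is finite and each $\Phi_a$ is finite this is a finite union), adjoining a fresh outcome $o$, and for each such pair adding the arrow $(o,\top)\rightarrow_a(o_{\phi_a},\langle U_{\phi_a},o_{\phi_a}\rangle\psi_a)$ to $\aufunction_a$. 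Exactly as before, $(\M*U,(x,o_{\phi_a}))$ is bisimilar to $(\M*U_{\phi_a},(x,o_{\phi_a}))$ via the identity on the relevant sub-domain, so $\M*U,(w_{\phi_a},o_{\phi_a})\models\phi_a\wedge\psi_a$; the target condition $\langle U_{\phi_a},o_{\phi_a}\rangle\psi_a$ is true at $w_{\phi_a}$ so the arrow $(w,o)R_a(w_{\phi_a},o_{\phi_a})$ is present, giving $\M*U,(w,o)\models\lozenge_a\phi_a$; and every $a$-successor of $(w,o)$ arises through some arrow with target condition $\langle U_{\phi_a},o_{\phi_a}\rangle\psi_a$ for some $\phi_a\in\Phi_a$, hence satisfies $\psi_a$, yielding $\M*U,(w,o)\models\square_a\psi_a$. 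Since this holds for every $a$, we obtain $\M*U,(w,o)\models\bigwedge_{a\in A}(\bigwedge_{\phi_a\in\Phi_a}\lozenge_a\phi_a\wedge\square_a\psi_a)$, i.e.\ $\M,w\models\langle U,o\rangle\bigwedge_{a\in A}(\dots)$. Finally, as in Lemma~\ref{lemma:AAUL_reduction_1}, the non-modal target conditions $\langle U_{\phi_a},o_{\phi_a}\rangle\psi_a$ all lie in $\langauml$ (using $\psi_a\in\langauml$), so by Corollary~\ref{cor:auml_reduction} we may replace each by an equivalent $\langml$-formula, obtaining $U'$ with only modal source/target conditions; then $\M,w\models\langle\AAUL\rangle\bigwedge_{a\in A}(\dots)$ by the semantics of $[\AAUL]$.

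The one point requiring genuine care --- and the main (if modest) obstacle --- is checking that the arrows added for distinct agents do not interfere. This is where the independence of the construction across agents matters: the new outcome $o$ receives, for each agent $a$ separately, only the arrows $(o,\top)\rightarrow_a(o_{\phi_a},\langle U_{\phi_a},o_{\phi_a}\rangle\psi_a)$, and within a single agent $a$ the collection $\{\langle U_{\phi_a},o_{\phi_a}\rangle\psi_a : \phi_a\in\Phi_a\}$ of target conditions controls exactly which $a$-successors of $(w,o)$ survive. One must verify that no $a$-arrow out of $o$ into some $o_{\phi_b}$ with $b\neq a$ is inadvertently created (it is not, since we only add $a$-arrows to the $o_{\phi_a}$), and that the bisimilarity $(\M*U,(x,o_{\phi_a}))\bisim(\M*U_{\phi_a},(x,o_{\phi_a}))$ genuinely restricts to the component $U_{\phi_a}$ because the extra outcome $o$ has no incoming arrows and the components $U_{\phi_b}$ are disjoint and unreachable from $o_{\phi_a}$. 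Both facts are immediate from the construction, so the proof is essentially a bookkeeping extension of the previous lemma; I would state this explicitly rather than re-running the full argument.
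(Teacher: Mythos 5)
Your proposal is correct and follows essentially the same route as the paper: the paper likewise handles Lemma~\ref{lemma:AAUL_main_reduction} by re-running the construction of Lemma~\ref{lemma:AAUL_reduction_1} simultaneously for all agents, taking the union of the witnesses $U_a^i$, adjoining a single fresh outcome $o$, and adding arrows $(o,\top)\rightarrow_a(o_a^i,\langle U_a^i,o_a^i\rangle\psi_a)$ for every agent and every $\phi_a^i\in\Phi_a$. Your explicit checks that arrows for distinct agents do not interfere and that the bisimulation restricts to each component are exactly the details the paper leaves to ``we then proceed as before.''
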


\begin{proof}
We proceed as in the previous Lemma \ref{lemma:AAUL_reduction_1}, but simultaneously for all $a \in A$. 

Again, the left-to-right direction is fairly obvious, and completely analogous to the treatment in Lemma \ref{lemma:AAUL_reduction_1}. Suppose that $M,w \models\langle \AAUL\rangle\bigwedge_{a\in A}(\bigwedge_{\phi_a \in \Phi_a}\lozenge_a\phi_a\wedge \square_a\psi_a)$. Then there is a $(U,o)$ such that $M,w \models \dia{U,o} (\bigwedge_{a\in A}(\bigwedge_{\phi_a \in \Phi_a}\lozenge_a\phi_a\wedge \square_a\psi_a)$. Therefore, for all $a \in A$ and for all $\phi_a \in \Phi_a$,  $M*U,(w,o) \models \Box_a \psi$ and $M*U,(w,o) \models \lozenge_a\phi_a$. As before we now obtain $M,w \models \Dia_a \dia{\AAUL} (\phi_a \et \psi)$,  for all $a \in A$ and for all $\phi_a \in \Phi_a$, and thus $M,w \models \Et_{a \in A} \Et_{\phi_a \in \Phi_a} \Dia_a \dia{\AAUL} (\phi_a \et \psi)$.

From right to left, suppose that $\M,w\models \bigwedge_{a\in A}\bigwedge_{\phi_a\in \Phi_a} \lozenge_a\langle \AAUL\rangle (\phi_a\wedge\psi_a)$. Let $\Phi_a = \{ \phi_a^1,\dots,\phi_a^{|\Phi_a|} \}$ ($\Phi_a$ may be empty). Then for every $a \in A$ and for every $1\leq i \leq |\Phi_a|$, there are $(U_a^i,o_a^i)$ and an $a$-successor $w_a^i$ of $w$ such that $\M,w_a^i\models \langle U_a^i,o_a^i\rangle (\phi_a^i\wedge \psi_a)$. Similarly to the previous lemma, we let $(U,o)$ be the model obtained by taking the disjoint union of all $U_a^i$ (i.e., for all $a \in A$ and for all $1\leq i \leq |\Phi_a|$), but still only adding a \emph{single} outcome $o$ with arrows $(o,\top)\rightarrow_a (o_i,\langle U_i,o_i\rangle\psi)$ for every $a \in A$ and $1\leq i \leq |\Phi_a|$. We then proceed as before.
\end{proof}
Lemma~\ref{lemma:AAUL_main_reduction} is the most important reduction axiom for AAUML. However, not every formula is of a form such that the lemma can be applied. We therefore need two validities that allow us to put formulas in the correct form.

\begin{lemma}
\label{lemma:AAUL_disjunction}
For every $\phi_1,\phi_2\in \langauml$ and every $\phi_0\in \langpl$, we have
\begin{equation*}\models\langle \AAUL\rangle (\phi_1\vee\phi_2)\leftrightarrow (\langle \AAUL\rangle\phi_1\vee\langle \AAUL\rangle\phi_2)\end{equation*}
and
\begin{equation*}\models\langle\AAUL\rangle (\phi_0\wedge\phi_1)\leftrightarrow (\phi_0\wedge\langle \AAUL\rangle \phi_1).\end{equation*}
\end{lemma}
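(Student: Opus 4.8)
The plan is to prove the two biconditionals in Lemma~\ref{lemma:AAUL_disjunction} semantically, working from the definition of $[\AAUL]$ (equivalently $\dia{\AAUL}$) and using the already-established machinery. Both directions of the first equivalence, and the right-to-left direction of the second, will be essentially immediate; the left-to-right direction of the second equivalence is the only one requiring a genuine observation, so I will spend most of the effort there.

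For the first equivalence, $\models\dia{\AAUL}(\phi_1\vee\phi_2)\eq(\dia{\AAUL}\phi_1\vee\dia{\AAUL}\phi_2)$, fix a pointed model $(M,w)$. For left to right, if $M,w\models\dia{\AAUL}(\phi_1\vee\phi_2)$ then there is an arrow update $(U,o)$ with all source/target conditions in $\langml$ such that $M*U,(w,o)\models\phi_1\vee\phi_2$; hence $M*U,(w,o)\models\phi_1$ or $M*U,(w,o)\models\phi_2$, so $M,w\models\dia{U,o}\phi_1$ or $M,w\models\dia{U,o}\phi_2$ (using self-duality of $\dia{U,o}$ from the Lemma of Section~\ref{sec:red_U}), and in either case $M,w\models\dia{\AAUL}\phi_1\vee\dia{\AAUL}\phi_2$. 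For right to left, if say $M,w\models\dia{\AAUL}\phi_1$, witnessed by $(U,o)$, then $M*U,(w,o)\models\phi_1$, so $M*U,(w,o)\models\phi_1\vee\phi_2$, giving $M,w\models\dia{\AAUL}(\phi_1\vee\phi_2)$; symmetrically for $\phi_2$. (One should note that $\phi_1,\phi_2\in\langauml$, so by Corollary~\ref{cor:auml_reduction} the witnessing arrow update can be taken with modal-logic conditions, which is what the semantics of $\dia{\AAUL}$ requires; in fact this subtlety is handled uniformly once Proposition~\ref{prop:fully_arbitrary} is available, but at this stage invoking Corollary~\ref{cor:auml_reduction} suffices.)

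For the second equivalence, $\models\dia{\AAUL}(\phi_0\wedge\phi_1)\eq(\phi_0\wedge\dia{\AAUL}\phi_1)$ with $\phi_0\in\langpl$, the key point is that arrow updates do not change the valuation in a way that affects the point: by clause $(\sharp)$, $V'(p)=V(p)\times O$, so $M,w\models p$ iff $M*U,(w,o)\models p$ for every atom $p$ and every arrow update $(U,o)$, and hence, since $\phi_0$ is purely propositional, $M,w\models\phi_0$ iff $M*U,(w,o)\models\phi_0$ for every $(U,o)$. Given this, left to right: if $M,w\models\dia{\AAUL}(\phi_0\wedge\phi_1)$ via $(U,o)$, then $M*U,(w,o)\models\phi_0$ and $M*U,(w,o)\models\phi_1$; the former yields $M,w\models\phi_0$ by the invariance just noted, and the latter yields $M,w\models\dia{\AAUL}\phi_1$. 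Right to left: if $M,w\models\phi_0$ and $M,w\models\dia{\AAUL}\phi_1$ via $(U,o)$, then $M*U,(w,o)\models\phi_1$ and, by the same invariance, $M*U,(w,o)\models\phi_0$; hence $M*U,(w,o)\models\phi_0\wedge\phi_1$, so $M,w\models\dia{\AAUL}(\phi_0\wedge\phi_1)$.

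The only mild obstacle is bookkeeping about which conditions the witnessing arrow updates are allowed to contain: the semantics of $\dia{\AAUL}$ quantifies only over updates with conditions in $\langml$, so in each right-to-left step one must check that the given witness $(U,o)$ already satisfies this (it does, since the hypotheses are $\dia{\AAUL}$-formulas whose own witnesses are so constrained), and in each left-to-right step one must check that extracting $\dia{U,o}\phi_i$ (or $\dia{U,o}\phi_1$) from a $\langml$-conditioned witness stays within the fragment where Corollary~\ref{cor:auml_reduction} applies. None of this is difficult; it is exactly the kind of restriction the paragraph before Lemma~\ref{lemma:reduction_simple_2a} warns about and that Proposition~\ref{prop:fully_arbitrary} will later render moot. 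I would state the propositional-invariance fact $M,w\models\phi_0$ iff $M*U,(w,o)\models\phi_0$ as an explicit sub-observation (proved by a trivial induction on $\phi_0\in\langpl$ using $V'(p)=V(p)\times O$), since it is the load-bearing part of the second equivalence.
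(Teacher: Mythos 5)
Your proof is correct and follows essentially the same route as the paper: the disjunction case is handled by directly unpacking the witnessing arrow update and weakening, and the conjunction case rests on the fact that arrow updates leave the valuation (hence any $\phi_0\in\langpl$) unchanged at the actual state, which is exactly the observation the paper phrases as ``$\dia{\AAUL}\phi_0\eq\phi_0$ is a validity'' and ``any arrow update $\ldots$ will preserve the truth of $\phi_0$''. Your explicit statement of the propositional-invariance sub-observation and the bookkeeping about $\langml$-conditions are slightly more detailed than the paper's terse proof, but add nothing that changes the argument.
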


\begin{proof}
The proof of this lemma is straightforward. 

By the semantic definition, from $M,w \models \dia{\AAUL} (\phi_1\vee\phi_2)$ it follows that $M,w \models \dia{\AAUL} \phi_1$ or that $M,w \models \dia{\AAUL} \phi_2$, and therefore that $M,w \models \langle \AAUL\rangle\phi_1\vee\langle \AAUL\rangle\phi_2$. In the other direction, from $M,w \models \langle \AAUL\rangle\phi_1\vee\langle \AAUL\rangle\phi_2$ it follows that $M,w \models \langle \AAUL\rangle\phi_1$ or that $M,w \models \langle \AAUL\rangle\phi_2$, and therefore, by weakening either formula to the disjunction $\phi_1 \vel \phi_2$, that $M,w \models \langle \AAUL\rangle(\phi_1\vee\phi_2)$.

For the second validity, we first observe that $\dia{\AAUL}\phi_0 \eq \phi_0$ $(*)$ is a validity of AAUML (all our logics are logics of informational change, not of factual change). From $M,w \models \dia{\AAUL} (\phi_0\wedge\phi_1)$ now follows $M,w \models \dia{\AAUL} \phi_0$ and $M,w \models  \dia{\AAUL} \phi_1$, and thus, using $(*)$, that $M,w \models \phi_0$ and $M,w \models  \dia{\AAUL} \phi_1$, and so $M,w \models \phi_0 \et \dia{\AAUL} \phi_1$. For the other direction we observe that, on the assumption of $M,w \models \phi_0\et \dia{\AAUL} \phi_1$, any arrow update executed in $(M,w)$ to make $\phi_1$ true will preserve, by $(*)$, the truth of $\phi_0$, so that $M,w \models \dia{\AAUL} (\phi_0\wedge\phi_1)$.
\end{proof}
It is important and non-trivial to note that the disjunction case can be made constructive. Suppose that we have already synthesized $(U_1,o_1)$ and $(U_2,o_2)$ such that $\models \langle \AAUL\rangle \phi_1\leftrightarrow \langle U_1,o_1\rangle \phi_1$ and $\models \langle \AAUL\rangle \phi_2\leftrightarrow \langle U_2,o_2\rangle \phi_2$. So we have two pointed arrow update models that make $\phi_1$ and $\phi_2$ true whenever possible. This does not, however, immediately give us a \emph{single}-pointed arrow update model $(U,o)$ that guarantees $\phi = \phi_1\vee \phi_2$ whenever possible.\footnote{An alternative technique to synthesize an arrow update for the disjunction is to take the double-pointed direct sum of $(U_1,o_1)$ and $(U_2,o_2)$; its points are $\{o_1,o_2\}$. This method is followed in \cite{hales2013arbitrary}.} In order to find this $(U,o)$, we have to combine $(U_1,o_1)$ and $(U_2,o_2)$. We do this in the following way.

First, we take the set of outcomes of $U$ to be the disjoint union of the sets of outcomes of $U_1$ and $U_2$, plus one extra outcome $o$. Then, we add arrows as follows to $U$.
\begin{quote}For every $(o_1,\psi)\rightarrow_a(o',\psi')$ of $U_1$, add an arrow $(o,\psi\wedge \langle U_1,o_1\rangle \phi_1)\rightarrow_a(o',\psi')$. For every $(o_2,\psi)\rightarrow_a(o',\psi')$ of $U_2$, add an arrow $(o,\psi\wedge \neg \langle U_1,o_1\rangle \phi_1)\rightarrow_a(o',\psi')$. \end{quote}
When executed in a $\langle \AAUL\rangle \phi_1$ world, this arrow update  $(U,o)$ will act as $(U_1,o_1)$, since every such world satisfies $\langle U_1,o_1\rangle \phi_1$ and we added all arrows from $o_1$ with an extra $\langle U_1,o_1 \rangle\phi_1$ precondition. When executed in any $\neg \langle \AAUL\rangle \phi_1$ world, $(U,o)$ acts as $(U_2,o_2)$. As long as either $\langle \AAUL\rangle \phi_1$ or $\langle \AAUL\rangle \phi_2$ holds, we therefore have $\langle U,o\rangle (\phi_1\vee\phi_2)$.

More formally, we have the following lemma.
\begin{lemma}
\label{lemma:AAUL_disjunction_combination}
Let $\phi_1,\phi_2\in \langauml$, and for $i= 1,2$ let $(U_i,o_i)$ be such that $\models \langle \AAUL\rangle \phi_i\leftrightarrow \langle U_i,o_i\rangle\phi_i$, where  $U_i=(O_i,\aufunction_i)$. Furthermore, let $U=(O,\aufunction)$ be given as follows:
\begin{itemize}
	\item $O=\{o\} \uplus O_1 \uplus O_2$,
	\item $\aufunction$ contains exactly the following arrows:
	\begin{enumerate}
		\item $(o',\psi')\rightarrow_a (o'',\psi'')\in \aufunction_i$, for $i=1,2$,
		\item $(o,\psi\wedge \langle U_1,o_1\rangle \phi_1)\rightarrow_a (o',\psi')$ where $(o_1,\psi)\rightarrow_a(o',\psi')\in \aufunction_1$,
		\item $(o,\psi\wedge\neg \langle U_1,o_1\rangle \phi_1)\rightarrow_a (o',\psi')$ where $(o_2,\psi)\rightarrow_a(o',\psi')\in \aufunction_2$.
	\end{enumerate}
\end{itemize}
Then $\models \langle \AAUL\rangle(\phi_1\vee \phi_2) \leftrightarrow \langle U,o\rangle (\phi_1\vee \phi_2)$.
\end{lemma}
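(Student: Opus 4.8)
The plan is to prove the two implications separately. Throughout I would assume without loss of generality that $O_1$, $O_2$ and $\{o\}$ are pairwise disjoint (rename outcomes if needed); with that convention, inside $U$ each outcome of $O_i$ carries exactly its old $\aufunction_i$-arrows (clause~1), and the only arrows leaving the fresh outcome $o$ are those of clauses~2 and~3. I would also record at the outset that every source and target condition of $U$ lies in $\langauml$: the clause-1 conditions are those of $U_1,U_2$ (which, in the use inside the synthesis, are modal), and the clause-2,3 conditions are conjunctions of such a condition with $\langle U_1,o_1\rangle\phi_1$ or $\neg\langle U_1,o_1\rangle\phi_1$, and $\phi_1\in\langauml$.

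For the easy direction, right to left, suppose $M,w\models\langle U,o\rangle(\phi_1\vee\phi_2)$. Using Corollary~\ref{cor:auml_reduction} I would replace each condition $\chi$ of $U$ by an equivalent modal formula, obtaining an arrow update model $U'$ with $M * U = M * U'$ (the update product depends on the conditions only through the states at which they hold). Hence $M,w\models\langle U',o\rangle(\phi_1\vee\phi_2)$, and since all conditions of $U'$ are now in $\langml$, $U'$ is among the arrow update models quantified over by $\langle\AAUL\rangle$, giving $M,w\models\langle\AAUL\rangle(\phi_1\vee\phi_2)$.

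For the left-to-right direction, suppose $M,w\models\langle\AAUL\rangle(\phi_1\vee\phi_2)$. By Lemma~\ref{lemma:AAUL_disjunction}, $M,w\models\langle\AAUL\rangle\phi_1$ or $M,w\models\langle\AAUL\rangle\phi_2$, so (using the hypotheses on $(U_i,o_i)$) either $M,w\models\langle U_1,o_1\rangle\phi_1$, or else $M,w\models\langle U_2,o_2\rangle\phi_2$ while $M,w\not\models\langle U_1,o_1\rangle\phi_1$. In the first case I would show that $Z=\{((w,o),(w,o_1))\}\cup\{((v,o'),(v,o')) : v\in\Domain(M),\ o'\in O_1\}$ is a bisimulation between $M * U$ and $M * U_1$ linking $(w,o)$ and $(w,o_1)$: the atom clauses are immediate; at an identity pair $((v,o'),(v,o'))$ the two models have literally the same $a$-arrows out of $o'$ (clauses~2,3 issue only from $o$, and $O_1\cap O_2=\emptyset$); and at $((w,o),(w,o_1))$ the point is that, because $M,w\models\langle U_1,o_1\rangle\phi_1$, the clause-3 arrows out of $o$ are inactive at $w$, while a clause-2 arrow built from $(o_1,\psi)\rightarrow_a(o',\psi')$ is active at $w$ exactly when $M,w\models\psi$ — precisely when the underlying $\aufunction_1$-arrow is active at $(w,o_1)$ in $M * U_1$ — so $(w,o)$ and $(w,o_1)$ have the same successor set. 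Bisimulation invariance of AAUML together with $M * U_1,(w,o_1)\models\phi_1$ then give $M * U,(w,o)\models\phi_1$, hence $M,w\models\langle U,o\rangle(\phi_1\vee\phi_2)$. The second case is symmetric, using the relation pairing $(w,o)$ with $(w,o_2)$ and the identity on $\Domain(M)\times O_2$: now $M,w\not\models\langle U_1,o_1\rangle\phi_1$ kills the clause-2 arrows out of $o$ at $w$ and makes the clause-3 arrows mimic $\aufunction_2$'s arrows out of $o_2$, yielding $M * U,(w,o)\models\phi_2$.

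The hard part will be the bisimulation verification in the left-to-right direction — specifically, computing exactly which arrows out of the fresh outcome $o$ are activated at $w$ in each of the two cases, and checking that these match the arrows out of $o_1$ (resp.\ $o_2$). This is exactly what the designed source conditions $\langle U_1,o_1\rangle\phi_1$ and $\neg\langle U_1,o_1\rangle\phi_1$ on the clause-2 and clause-3 arrows are there to guarantee; once the successor set of $(w,o)$ is pinned down, the remainder of $Z$ (the part on $\Domain(M)\times O_i$) is just the identity, and the check is routine bookkeeping using disjointness of $O_1$ and $O_2$.
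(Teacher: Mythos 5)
Your proposal is correct and follows essentially the same route as the paper's proof: the right-to-left direction via Corollary~\ref{cor:auml_reduction} to replace the source conditions by modal equivalents, and the left-to-right direction via Lemma~\ref{lemma:AAUL_disjunction} and a case split on whether $\langle U_1,o_1\rangle\phi_1$ holds at $w$, showing $(M*U,(w,o))$ bisimilar to $(M*U_1,(w,o_1))$ or to $(M*U_2,(w,o_2))$ accordingly. The only difference is that you spell out the witnessing bisimulation explicitly (the pair $((w,o),(w,o_i))$ together with the identity on $\Domain(M)\times O_i$), which the paper leaves implicit.
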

\begin{proof}
The right-to-left direction is obvious, because, just as in the proof of Lemma \ref{lemma:AAUL_reduction_1}, source conditions $\psi\wedge \langle U_1,o_1\rangle \phi_1$ and $\psi\wedge\neg \langle U_1,o_1\rangle \phi_1$ can with Corollary~\ref{cor:auml_reduction} be assumed to be equivalent to modal logical formulas.

We now show the left-to-right direction. Suppose therefore that $\M,w\models \langle \AAUL\rangle(\phi_1\vee \phi_2)$. Then, by Lemma~\ref{lemma:AAUL_disjunction}, we have $\M,w\models \langle\AAUL\rangle \phi_1\vee\langle \AAUL\rangle \phi_2$. We continue by a case distinction.

First, suppose that $\M,w\models \langle \AAUL\rangle \phi_1$. Then none of the $(o,\psi\wedge\neg \langle U_1,o_1\rangle \phi_1)\rightarrow_a (o',\psi')$ arrows are applicable in $w$. The $(o,\psi\wedge \langle U_1,o_1\rangle \phi_1)\rightarrow_a (o',\psi')$ arrows, on the other hand, are applicable if and only if $(o_1,\psi)\rightarrow_a(o',\psi')$ applies. Therefore, for every agent $a$, $(w,o)R_a(w',o')$ iff ($wR_aw'$ and $oR_ao'$) iff ($wR_aw'$ and $o_1R_ao')$ iff $(w,o_1)R_a(w',o')$. So there is an $a$-arrow from $(w,o)$ to $(w',o')$ in $M*U$ if and only if there is an $a$-arrow from $(w,o_1)$ to $(w',o')$ in $M*U_1$. Furthermore, because $U$ contains a copy of $U_1$, we have that for every $o',o''\in O_1$ and every $w',w''\in W$ there is an $a$-arrow from $(w',o')$ to $(w'',o'')$ in $M*U$ if and only if there is such an arrow in $M*U_1$. It follows that the relation $\mathfrak{R}_1=\{((w',o'),(w',o'))\mid w'\in W, o'\in O_1\}\cup \{((w,o),(w,o_1))\}$ is a bisimulation between $\M*U,(w,o)$ and $\M*U_1,(w,o_1)$. (It also obviously satisfies {\bf atoms}, as this depends on correspondence of the first argument in each (state, outcome) pair.) By assumption we have $\M,w\models \langle\AAUL\rangle \phi_1$ and therefore $\M,w\models \langle U_1,o_1\rangle \phi_1$, which implies that $\M*U_1,(w,o_1)\models \phi_1$. Because AAUML is invariant under bisimulation it then follows that $\M*U,(w,o)\models \phi_1$ and therefore that $\M,w\models\langle U,o\rangle \phi_1$, so that we also have $\M,w\models\langle U,o\rangle (\phi_1\vee\phi_2)$.

Suppose then that $\M,w\not \models \langle \AAUL\rangle \phi_1$. Then we must have $\M,w\models \langle \AAUL\rangle \phi_2$. In this case, the $(o,\psi\wedge \langle U_1,o_1\rangle \phi_1)\rightarrow_a (o',\psi')$ arrows are inapplicable while the $(o,\psi\wedge\neg \langle U_1,o_1\rangle \phi_1)\rightarrow_a (o',\psi')$ arrows apply if and only if $(o_2,\psi)\rightarrow_a(o',\psi')$ does. Similarly to the previous case, the relation $\mathfrak{R}_2=\{((w',o'),(w',o'))\mid w'\in W, o'\in O_2\}\cup \{((w,o),(w,o_2))\}$ is therefore a bisimulation between $\M*U,(w,o)$ and $\M*U_2,(w,o_2)$.
From the assumption that $\M,w\models \langle \AAUL\rangle \phi_2$ it follows that $\M,w\models \langle U_2,o_2\rangle \phi_2$ and therefore that $M*U_2,(w,o_2)\models \phi_2$. Because AAUML is invariant under bisimulation we then obtain $M*U,(w,o)\models \phi_2$ and therefore $M,w\models \langle U,o\rangle \phi_2$ and also $M,w\models \langle U,o\rangle (\phi_1\vee \phi_2)$.

In either case, we have $M,w\models \langle U,o\rangle (\phi_1\vee \phi_2)$, as was to be shown.
\end{proof}

\subsection{Reduction}

Using the fact that $[U,o]$ commutes with $\neg$, distributes over $\wedge$ and, in a convoluted but not very complicated way, commutes with $\square_a$, Corollary~\ref{cor:auml_reduction} showed that every $\langauml$ formula is equivalent to a $\langml$ formula. In order to be able to finally perform synthesis, it remains to show that every $\lang$ formula is also equivalent to a $\langml$ formula. This is what we will show in this section.

In Section~\ref{sec:red_AAUL} we showed that $\langle \AAUL\rangle$ commutes, in a very complicated way, with boolean combinations of basic modal formulas. Also, like $[U,o]$, a $\langle \AAUL\rangle$ operator disappears once it encounters a propositional atom. From this we can show that, if $\phi$ is a formula of  basic modal logic, then we can transform $\langle \AAUL\rangle \phi$ into an equivalent formula $\phi'$ of  basic modal logic.

The proof that $\langle \AAUL\rangle \phi$ can be transformed into $\phi'$ is by induction on the order $\succ$ given by: $\phi_1\succ \phi_2$ if and only if (i) $\phi_2$ is a strict subformula of $\phi_1$ or (ii) $\phi_2$ has a (strictly) lower modal depth than $\phi_1$. Let us first show that this order is well-founded.

\begin{lemma}
\label{lemma:prec_wf}
The relation $\succ$ is well-founded.
\end{lemma}
\begin{proof}
This follows from the fact that both the $>$ relation on the natural numbers and the ``strict subformula of'' relation are well-founded.

Suppose towards a contradiction that there is an infinitely descending chain $\phi_0 \succ \phi_1\succ \cdots$ of formulas. 
So for every $i\in \mathbb{N}$ we have either (i) $\phi_{i+1}$ is a strict subformula of $\phi_i$ or (ii) $d(\phi_i)>d(\phi_{i+1})$. The modal depth of a formula is at least as large as that of its subformulas, so in either case we have $d(\phi_i)\geq d(\phi_{i+1})$. Because the depth of a formula is a natural number and $>$ is well-founded on $\mathbb{N}$, there can be only finitely many $i$ such that $d(\phi_{i})>d(\phi_{i+1})$. So there is some $k\in \mathbb{N}$ such that for all $m\geq k$ we have $d(\phi_m)=d(\phi_k)$.

It follows that for every $m\geq k$, the formula $\phi_{m+1}$ must be a strict subformula of $\phi_{m}$. This is impossible, however, since the ``strict subformula of'' relation is well-founded. We have arrived at a contradiction, so $\succ$ does not have an infinitely descending sequence and is therefore well-founded.
\end{proof}
Now, we can prove the reduction from $\langle \AAUL\rangle\phi$ to $\phi'$.

\begin{lemma}
\label{lemma:aauml_reduction}
For every $\phi \in \langml$, there is a formula $\phi'\in \langml$ such that $\models \langle \AAUL\rangle \phi\leftrightarrow \phi'$.
\end{lemma}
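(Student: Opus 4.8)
The plan is to prove Lemma~\ref{lemma:aauml_reduction} by induction on the structure of $\phi \in \langml$, pushing the $\langle\AAUL\rangle$ operator inward until it meets atoms, where it vanishes. The atomic cases $\phi = p$ and $\phi = \neg p$ are handled by the validity $\dia{\AAUL}\phi_0 \eq \phi_0$ for propositional $\phi_0$ (noted in the proof of Lemma~\ref{lemma:AAUL_disjunction}), so $\langle\AAUL\rangle p$ and $\langle\AAUL\rangle\neg p$ are already in $\langml$. For the inductive steps I would first put $\phi$ into disjunctive negation normal form (DNNF), which is harmless since every formula is equivalent to one in DNNF and, by induction hypothesis applied to the immediate subformulas, the source/target conditions and the components stay in $\langml$ after the rewrites below.

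The key observation is that a DNNF formula of modal logic at the top level is a disjunction $\bigvee_k \psi_k$ of conjunctions $\psi_k$, and each conjunction $\psi_k$ can be rearranged (grouping the purely-propositional conjuncts, the $\Box_a$-conjuncts, and the $\Dia_a$-conjuncts per agent) into the shape
\[
\phi_0 \wedge \bigwedge_{a\in A}\Bigl(\bigwedge_{\phi_a\in\Phi_a}\lozenge_a\phi_a \wedge \square_a\psi_a\Bigr),
\]
where $\phi_0\in\langpl$, each $\Phi_a$ is finite, and $\psi_a$ is obtained as the conjunction of the box-conjuncts for $a$ (if there are no box-conjuncts for $a$, take $\psi_a = \top$; if there are no diamond-conjuncts, the $\Phi_a$ block is empty and we must be slightly careful — but then the $\square_a\psi_a$ conjunct alone is still covered, since $\square_a\psi_a$ is equivalent to $\square_a(\top\wedge\psi_a)$ and we may freely add a dummy $\lozenge_a\top$ or observe that $\langle\AAUL\rangle\square_a\psi_a \leftrightarrow \square_a\langle\AAUL\rangle\psi_a$ holds for the same reason as Lemma~\ref{lemma:reduction_simple_2a}). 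Then: Lemma~\ref{lemma:AAUL_disjunction} distributes $\langle\AAUL\rangle$ over the outer disjunction and pulls the propositional conjunct $\phi_0$ out; Lemma~\ref{lemma:AAUL_main_reduction} converts $\langle\AAUL\rangle\bigwedge_{a}(\bigwedge_{\phi_a\in\Phi_a}\lozenge_a\phi_a\wedge\square_a\psi_a)$ into $\bigwedge_a\bigwedge_{\phi_a\in\Phi_a}\lozenge_a\langle\AAUL\rangle(\phi_a\wedge\psi_a)$. Each resulting $\langle\AAUL\rangle(\phi_a\wedge\psi_a)$ has strictly smaller modal depth than $\phi$ (the outermost $\lozenge_a$ and $\square_a$ have been stripped), so the induction hypothesis applies and yields an equivalent $\langml$ formula; substituting these back and collecting gives the desired $\phi'\in\langml$. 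I would note that the $\langle\AAUL\rangle\phi_a$ inside is being applied to a $\langml$ formula, so the restriction to $\langauml$ in Lemmas~\ref{lemma:AAUL_main_reduction} and~\ref{lemma:AAUL_disjunction} is respected.

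The main obstacle is bookkeeping around the induction measure and the normal-form reshuffling rather than any deep difficulty: one has to verify that (i) the DNNF rewrite does not increase the relevant complexity measure, (ii) grouping conjuncts by agent is sound (trivial, but the empty-$\Phi_a$ corner case and the $\psi_a$ default need explicit comment), and (iii) after applying Lemma~\ref{lemma:AAUL_main_reduction} the arguments $\phi_a\wedge\psi_a$ genuinely sit below the stripped modality so the induction terminates. A clean way to make (iii) precise is to induct on modal depth with an inner induction on formula size at fixed depth, since the DNNF transformation and the distribution/pull-out steps of Lemma~\ref{lemma:AAUL_disjunction} do not raise modal depth, while Lemma~\ref{lemma:AAUL_main_reduction} strictly lowers it for every remaining $\langle\AAUL\rangle$ occurrence. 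Finally, I would remark (as the paper does after similar lemmas) that every step used — Lemmas~\ref{lemma:AAUL_disjunction}, \ref{lemma:AAUL_main_reduction}, \ref{lemma:reduction_simple_2a}, and Corollary~\ref{cor:auml_reduction} for eliminating any $[U,o]$ that surfaces inside source/target conditions — is constructive, so the proof in fact yields an algorithm computing $\phi'$ from $\phi$, which is exactly what the later synthesis and expressivity results (Theorem~\ref{thm:expressivity}) will need.
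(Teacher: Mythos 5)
Your overall strategy---put $\phi$ in DNNF, distribute $\langle\AAUL\rangle$ over disjunctions via Lemma~\ref{lemma:AAUL_disjunction}, pull out the propositional conjunct, apply Lemma~\ref{lemma:AAUL_main_reduction} to the remaining conjunction of $\lozenge_a$- and $\square_a$-conjuncts, and recurse on the $\phi_a\wedge\psi_a$ using a well-founded order combining the subformula relation with modal depth---is exactly the paper's proof. However, your treatment of the corner case where some agent $a$ has a $\square_a\psi_a$ conjunct but no $\lozenge_a$ conjuncts is wrong, and both of your proposed fixes fail. Adding a dummy conjunct $\lozenge_a\top$ is not sound: $\square_a\psi_a$ and $\square_a\psi_a\wedge\lozenge_a\top$ already differ at states without $a$-successors, and, worse, $\langle\AAUL\rangle(\lozenge_a\top\wedge\square_a\psi_a)$ reduces to $\lozenge_a\langle\AAUL\rangle\psi_a$, which is strictly stronger than $\langle\AAUL\rangle\square_a\psi_a$; the latter is in fact valid (equivalent to $\top$), because the update may simply delete every $a$-arrow, making $\square_a\psi_a$ vacuously true. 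For the same reason your alternative equivalence $\langle\AAUL\rangle\square_a\psi_a\leftrightarrow\square_a\langle\AAUL\rangle\psi_a$ is not valid: take $\psi_a=\bot$ at a state with an $a$-successor; the left side is true (cut all arrows) while the right side is false. The analogy with Lemma~\ref{lemma:reduction_simple_2a} breaks precisely because the quantifier is allowed to remove arrows, which helps boxes and hurts diamonds asymmetrically, so $\langle\AAUL\rangle$ commutes with $\lozenge_a$ but not with $\square_a$.

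The correct handling is already built into Lemma~\ref{lemma:AAUL_main_reduction}: the finite set $\Phi_a$ there is allowed to be empty (the paper's proof of that lemma notes this explicitly), and with $\Phi_a=\emptyset$ the contribution of agent $a$ to the right-hand side is the empty conjunction $\top$, which is the right answer. So no dummy diamond and no box-commutation principle is needed; you should simply normalize each conjunction so that every agent has exactly one $\square_a$-conjunct (defaulting to $\square_a\top$) and apply Lemma~\ref{lemma:AAUL_main_reduction} as stated. With that repair your argument coincides with the paper's; the rest of your bookkeeping (the induction on modal depth with an inner induction on size, the constructiveness remark, and the observation that the restriction to $\langauml$ is respected) is fine.
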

\begin{proof}
Every formula is equivalent to a formula in DNNF (See Sections~\ref{sec.syntax} and \ref{sec.semantics}), so we can assume without loss of generality that $\phi$ is in DNNF. We can also assume without loss of generality that every conjunction in $\phi$ contains exactly one conjunct of the form $\square_a\chi$ for every agent $a$. We now proceed by induction on the order $\succ$, which, by the previous lemma, is well-founded. Note that the minimal elements with respect to $\succ$ are the formulas that are of depth 0 and have no strict subformulas, i.e., the atoms.


Suppose therefore, as a base case in our induction, that $\phi$ is an atom. Then $\models \langle \AAUL\rangle \phi \leftrightarrow \phi$, so the lemma holds with $\phi'=\phi$. 
Assume now as induction hypothesis that the lemma holds for every $\phi'$ such that $\phi \succ \phi'$. We then continue by case distinction. 

First, suppose that $\phi = \phi_1\vee \phi_2$. By Lemma~\ref{lemma:AAUL_disjunction} we have $\models \langle \AAUL\rangle (\phi_1\vee \phi_2)\leftrightarrow (\langle \AAUL\rangle \phi_1\vee \langle \AAUL\rangle \phi_2)$. Furthermore, $\phi\succ \phi_1$ and $\phi\succ\phi_2$, so by the induction hypothesis there are $\phi_1',\phi_2'\in \langml$ such that $\models \langle \AAUL\rangle\phi_1\leftrightarrow\phi_1'$ and $\models \langle \AAUL\rangle \phi_2\leftrightarrow\phi_2'$. It follows that $\models \langle \AAUL\rangle \phi\leftrightarrow (\phi_1'\vee\phi_2') $.

Secondly, suppose that $\phi = \phi_0\wedge \phi_1$, where $\phi_0\in \langpl$. Then, by Lemma~\ref{lemma:AAUL_disjunction}, we have $\models \langle \AAUL\rangle (\phi_0\wedge \phi') \leftrightarrow (\phi_0\wedge \langle \AAUL\rangle\phi_1)$. Furthermore, $\phi\succ \phi_1$, so by the induction hypothesis there is a $\phi_1'$ such that $\models \langle \AAUL\rangle \phi_1\leftrightarrow\phi_1'$. It follows that $\models \langle \AAUL\rangle\phi\leftrightarrow (\phi_0\wedge \phi_1')$.

Finally, suppose that $\phi$ is a conjunction without a propositional conjunct. Then for every $a\in \Agents$ there are a finite set $\Phi_a\subseteq \langml$ and a formula $\psi_a\in \langml$ such that $\phi = \bigwedge_{a\in\Agents}(\bigwedge_{\phi_a\in\Phi_a}\lozenge_a\phi_a\wedge\Box_a\psi_a)$. By Lemma~\ref{lemma:AAUL_main_reduction}, we have $\models \langle \AAUL\rangle \phi \leftrightarrow \bigwedge_{a\in\Agents}\bigwedge_{\phi_a\in \Phi_a}\lozenge_a\langle \AAUL\rangle (\phi_a\wedge \psi_a)$. For every $a\in \Agents$ and $\phi_a\in\Phi_a$, the modal depth of $\phi_a\et\psi_a$ is strictly lower than that of $\phi$, so $\phi\succ \phi_a\et\psi_a$. Therefore, by the induction hypothesis, there is a formula $\phi_a'\in \langml$ such that $\models {\langle\AAUL\rangle} (\phi_a\wedge\psi_a)\leftrightarrow \phi_a'$. Let $\Phi_a'$ be the set of such $\phi_a'$. It follows that $\models \langle \AAUL\rangle \phi\leftrightarrow\bigwedge_{a\in\Agents}\bigwedge_{\phi_a'\in \Phi_a'}\lozenge_a\phi_a'$. 

This completes the induction step and thereby the proof.
\end{proof}

We now have the following theorem.
\begin{theorem}
\label{thm:expressivity}
For every $\phi\in\lang$ there is a formula $\phi'\in\langml$ such that $\models \phi \leftrightarrow \phi'$.
\end{theorem}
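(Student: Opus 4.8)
The plan is to prove the theorem by structural induction on $\phi \in \lang$, eliminating the dynamic operators $[U,o]$ and $[\AAUL]$ (equivalently $\dia{U,o}$ and $\dia{\AAUL}$) from the innermost occurrence outward, so that at each stage we only ever need to push a single dynamic operator through a formula of $\langml$. Concretely, I would argue by induction on the number of dynamic operators (both concrete arrow-update modalities and $\AAUL$-quantifiers) occurring in $\phi$, using a measure that also accounts for the source and target conditions nested inside arrow update models, since those are themselves formulas of $\lang$ and must be reduced first.

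The key steps, in order, are as follows. First, if $\phi$ contains no dynamic operator at all, then $\phi \in \langml$ already and we take $\phi' = \phi$. Otherwise, pick a dynamic operator occurrence in $\phi$ that is \emph{innermost}, meaning the formula it binds, together with all source and target conditions appearing in it (in the $[U,o]$ case), contains no further dynamic operators; such an occurrence exists because the set of dynamic operators is finite and the subformula/nesting relation is well-founded. Call the corresponding subformula $\theta$. Then $\theta$ is either $[U,o]\alpha$ with $\alpha \in \langml$ and all source/target conditions of $U$ in $\langml$, or $[\AAUL]\alpha$ with $\alpha \in \langml$. In the first case, Corollary~\ref{cor:auml_reduction} gives an equivalent $\theta' \in \langml$. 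In the second case, rewriting $[\AAUL]\alpha$ as $\neg\dia{\AAUL}\neg\alpha$, applying Lemma~\ref{lemma:aauml_reduction} to $\dia{\AAUL}\neg\alpha$ (noting $\neg\alpha \in \langml$), and negating, we again obtain an equivalent $\theta' \in \langml$. Replacing $\theta$ by $\theta'$ in $\phi$ yields a formula $\psi$ with $\models \phi \leftrightarrow \psi$ (by replacement of equivalents, which is sound since all our logics are invariant under bisimulation and hence admit substitution of provable equivalents inside any context) and with strictly fewer dynamic operators. By the induction hypothesis, $\psi$ — and hence $\phi$ — is equivalent to some $\phi' \in \langml$.

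The main obstacle is making the induction measure precise and verifying that the innermost dynamic operator is genuinely reducible by the earlier results. The subtlety is that $[U,o]$ is an $n$-ary operator whose arguments include the source and target conditions of $U$: so "innermost" must be defined so that \emph{none} of these conditions — not just the bound formula $\alpha$ — contains a dynamic operator, which is exactly what is needed for Corollary~\ref{cor:auml_reduction} (whose hypothesis is that the whole formula, including conditions, lies in $\langauml$) and for Lemma~\ref{lemma:aauml_reduction} (whose hypothesis is $\phi \in \langml$) to apply. One must also check that Lemma~\ref{lemma:aauml_reduction}, though stated for a single $\dia{\AAUL}$ applied to a $\langml$ formula, does indeed produce a $\langml$ output with no residual dynamic operators — which it does, since all the reduction lemmas it invokes (Lemmas~\ref{lemma:AAUL_disjunction}, \ref{lemma:AAUL_main_reduction}, and the auxiliary $\langauml$-reductions via Corollary~\ref{cor:auml_reduction}) preserve membership in $\langml$. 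Once these bookkeeping points are settled, the argument is routine: each rewrite step strictly decreases the dynamic-operator count, and the process terminates with a formula in $\langml$. A further remark worth recording is that this theorem retroactively justifies the restriction to $\langauml$ in Lemmas~\ref{lemma:reduction_simple_2a}--\ref{lemma:AAUL_disjunction_combination}: since every formula of $\lang$ is now known to be equivalent to one in $\langml \subseteq \langauml$, those lemmas hold for all of $\lang$, which is the content flagged before Lemma~\ref{lemma:reduction_simple_2a} and needed for Proposition~\ref{prop:fully_arbitrary}.
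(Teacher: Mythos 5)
Your proposal is correct and follows essentially the same route as the paper: the paper's proof also successively eliminates innermost $[U,o]$ and $[\AAUL]$ occurrences, using Corollary~\ref{cor:auml_reduction} for the former and (the dual of) Lemma~\ref{lemma:aauml_reduction} for the latter. Your additional bookkeeping — defining ``innermost'' so that the source and target conditions of $U$ are themselves dynamic-operator-free, and justifying replacement of equivalents — only makes explicit what the paper leaves implicit.
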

\begin{proof}
We now use Lemma \ref{lemma:aauml_reduction} (in dual form) that \begin{quote} for every $\phi \in \langml$ there is a $\phi'\in \langml$ such that $\models [\AAUL] \phi\leftrightarrow \phi'$, \end{quote} and we also use the consequence of Lemma~\ref{lemma:aauml_reduction} that \begin{quote} for every $\phi \in \langml$ and arrow update $(U,o)$ with source and target conditions in $\langml$ there is a $\phi'\in \langml$ such that $\models [U,o] \phi\leftrightarrow \phi'$,\end{quote} in order to successively eliminate the innermost $[U,o]$ or $[\AAUL]$ operators of any given formula of AAUML. We can thus transform this formula into an equivalent formula of modal logic.
\end{proof}
It follows that in Lemmas~\ref{lemma:reduction_simple_2a}--\ref{lemma:AAUL_disjunction_combination} the restriction to the sublanguage $\langauml$ is unnecessary; the lemmas also hold for the full language $\lang$. In this form we will also later use these validities as axioms in the axiomatization. 
For the next section, it is important to keep in mind that the reduction axioms not only guarantee the existence of $\phi'$, but also enable us to find it.

Theorem \ref{thm:expressivity} also allows us to prove a claim that we made in Section~\ref{sec.semantics}. There, we defined $M,s\models[\AAUL]\phi$ by
\begin{quote}
$M,s\models [\AAUL]\phi$ iff $(M,s\models [U,o]\phi$ for every arrow update  $(U,o)$ that has source and target conditions only in $\langml$).
\end{quote}
Now that we have shown that every formula of $\lang$ is equivalent to a formula of $\langml$, it follows immediately that the requirement of the source and target conditions being in $\langml$ is unnecessary.
\begin{proposition}
\label{prop:fully_arbitrary}
For every $\phi\in\lang$ and every pointed relational model $(M,s)$, we have
\begin{center}
$M,s\models [\AAUL]\phi$ \ \ \ iff \ \ \ $M,s\models [U,o]\phi$ for every arrow update  $(U,o)$.
\end{center}
\end{proposition}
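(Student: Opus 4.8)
The plan is to reduce the ``fully arbitrary'' semantics to the restricted one by exploiting Theorem~\ref{thm:expressivity}. The only direction that needs work is left-to-right: assume $M,s\models[\AAUL]\phi$ in the sense of Definition~\ref{def.semantics} (quantifying only over arrow updates whose source and target conditions lie in $\langml$), and let $(U,o)$ be an \emph{arbitrary} arrow update, whose source and target conditions may be any formulas of $\lang$. I must show $M,s\models[U,o]\phi$. The right-to-left direction is immediate, since every $(U,o)$ with conditions in $\langml$ is in particular an arbitrary $(U,o)$.

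First I would apply Theorem~\ref{thm:expressivity} to each source condition and each target condition occurring in $U$: since $U$ has finitely many outcomes and each $\aufunction_a$ is finite, there are only finitely many such conditions $\chi_1,\dots,\chi_k\in\lang$, and for each there is an equivalent $\chi_i'\in\langml$ with $\models\chi_i\leftrightarrow\chi_i'$. Let $U'$ be the arrow update model obtained from $U$ by replacing every source/target condition $\chi_i$ by $\chi_i'$; then $U'$ has all its conditions in $\langml$, so it is one of the arrow updates quantified over in the semantics of $[\AAUL]$, and hence $M,s\models[U',o]\phi$ by our assumption.

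The remaining step is to check that replacing the conditions of $U$ by \emph{provably equivalent} ones does not change the effect of executing the update, i.e.\ $M*U=M*U'$ (indeed literally equal, not merely bisimilar, since the domains are the same $S\times O$). Inspecting clause $(\sharp)$: the domain and valuation of $M*U$ do not depend on the conditions at all, and a pair $((s,o),(s',o'))$ is in $R'_a$ iff $(s,s')\in R_a$, $(o,\phi)\imp_a(o',\phi')$ for some arrow, and $M,s\models\phi$ and $M,s'\models\phi'$. Since $\models\phi\leftrightarrow\phi'$ (the $\langml$-replacement) means $M,s\models\phi$ iff $M,s\models\phi'$ at every state, the same pairs are selected whether we use $U$ or $U'$. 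Hence $M*U=M*U'$, so $M,s\models[U,o]\phi$ iff $M,s\models[U',o]\phi$, and we are done. A small point worth stating explicitly is that this argument is not circular even though $\phi$ itself may contain $[\AAUL]$: Theorem~\ref{thm:expressivity} is already established (via the $\langml$-restricted semantics) before we invoke it here, so we are entitled to use it. The ``main obstacle,'' such as it is, is simply making sure the equivalence $\models\chi_i\leftrightarrow\chi_i'$ is applied pointwise inside the model $M$ (which is immediate since validity means truth at every state of every model, in particular every state of $M$) — there is no genuine difficulty, the proposition is essentially a corollary of Theorem~\ref{thm:expressivity}.
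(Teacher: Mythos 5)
Your proof is correct and follows exactly the route the paper intends: the paper dispatches this proposition with the remark that, by Theorem~\ref{thm:expressivity}, the restriction of source and target conditions to $\langml$ is unnecessary, and your argument simply spells out the details of that reduction (replace each condition by an equivalent $\langml$-formula, observe that $M*U=M*U'$ since clause $(\sharp)$ depends on conditions only through their extensions in $M$). The explicit notes on non-circularity and on the right-to-left direction being the trivial one are both accurate.
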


\subsection{Synthesis}
Recall that our goal, when performing synthesis, is to find, for given $\phi\in \mathcal{L}$, an arrow update  $(U_\phi,o_\phi)$ such that $\models \langle \AAUL\rangle \phi\leftrightarrow \langle U_\phi,o_\phi\rangle \phi$. Using Theorem~\ref{thm:expressivity}, we can transform $\phi$ into an equivalent formula of modal logic. Then, using the procedure outlined in Section~\ref{sec:red_AAUL}, we can find $(U_\phi,o_\phi)$. The procedure is found in detail in Table \ref{table.proc}.

\begin{table}[h]
{\setlength{\extrarowheight}{2pt}
\begin{tabular}{ll}
\hline
\multicolumn{2}{c}{Procedure $\mathit{Synth}(\phi)$}\vspace{2pt}\\
\hline
\hline
 & Input: $\phi \in \mathcal{L}$.\\
 & Output: $(U_\phi,o_\phi)$ such that $\models \langle \AAUL\rangle \phi \leftrightarrow \langle U_\phi,o_\phi\rangle \phi$.\\
 \hline
 \hline
1. & If $\phi\not \in \langml$, then use the reduction axioms to find a formula $\phi_\mathit{modal}\in \langml$ \\ &such that $\models \phi\leftrightarrow \phi_\mathit{modal}$, and return $\mathit{Synth}(\phi_\mathit{modal})$. Otherwise, continue.\\
2. & If $\phi$ is not in disjunction normal form, compute the DNNF $\phi_\mathit{DNNF}$ of $\phi$ and \\ &return $\mathit{Synth}(\phi_\mathit{DNNF})$. Otherwise, continue.\\
3. & If $\phi = \phi_1\vee \phi_2$, then compute $\mathit{Synth}(\phi_1)$ and $\mathit{Synth}(\phi_2)$, and combine \\ &the two arrow update models as in Lemma~\ref{lemma:AAUL_disjunction_combination}. Return the combined \\ & arrow update model.\\
4. & If $\phi$ is not a disjunction, then since it is in DNNF it must be a conjunction, \\&where each conjunct is (i) a literal, (ii) of the form $\lozenge_a\psi$, or (iii) of the form $\square_a\chi$. \\ & Assume w.l.o.g. that for every $a$ there is exactly one conjunct $\square_a\chi_a$.\\ & For every $\lozenge_a\psi$, compute $\mathit{Synth}(\psi\wedge\chi_a)$. Then use Lemma~\ref{lemma:AAUL_reduction_1} to combine\\ & the arrow update models, and return the result. If there are no $\lozenge_a$ operators for any\\& agent $a$, return the trivial arrow update model with one outcome and no arrows.\\
\hline
\end{tabular}}
\caption{Synthesis procedure}
\label{table.proc}
\end{table}

Our arrow update synthesis was motivated by Hales' already mentioned action model synthesis published in \cite{hales2013arbitrary}. See also the next Section \ref{sec.axiomatization}. Subsequently Hales {\em et al.} investigated these matters in \cite{frenchetal:2014} and in (his Ph.D.\ thesis) \cite{hales:2016}. Alternatively, action model synthesis by way of a dedicated action language was employed by Aucher in \cite{aucher.jancl:2011}.

\subsection{Example}
In order to gain better understanding of $\mathit{Synth}(\phi)$, let us consider an example. 
Suppose $\phi = \lozenge_a\square_bp\wedge\lozenge_b (\lozenge_aq \vel \lozenge_ar)\wedge\square_bp)$. We want to perform synthesis for this $\phi$.

\medskip

\noindent {\bf Goal: find $\mathit{Synth}(\lozenge_a\square_bp\wedge\lozenge_b(\lozenge_aq\vee\lozenge_ar)\wedge\square_bp)$.}


Because $\phi\in \langml$, $\phi$ is in DNNF and $\phi$ is not a disjunction, we continue past steps 1, 2 and 3. In step 4, it is assumed that for every agent there is exactly one $\square$ conjunct. This means we need to add a trivial conjunct $\square_a\top$. 

Of the conjuncts of $\phi$, two have $\lozenge$ as primary operator. So we need to perform synthesis for two more formulas; because of $\lozenge_a\square_bp$ and $\square_a\top$ we need to find $\mathit{Synth}(\square_bp\wedge \top)$, and because of $\lozenge_b(\lozenge_aq\vee\lozenge_ar)$ and $\square_bp$ we need to find $\mathit{Synth}((\lozenge_aq\vee\lozenge_ar)\wedge p)$.
\begin{adjustwidth}{1cm}{}
	{\bf Subgoal 1: find $\mathit{Synth}(\square_bp\wedge \top)$.}\\
	Since we are doing synthesis for a conjunction, we continue in steps 1, 2 and 3. Because there are no $\lozenge$ operators in $\square_bp\wedge\top$, we return the trivial arrow update model in step 4.\\
	{\bf Subgoal 2: find $\mathit{Synth}((\lozenge_aq\vee\lozenge_ar)\wedge p)$.}
	In step 2, we need to put the formula in DNNF. We therefore continue with $(\lozenge_aq\wedge p) \vee (\lozenge_ar\wedge p)$. In step 3 we are then instructed to perform synthesis for the two disjuncts.
	\begin{adjustwidth}{1cm}{}
		{\bf Sub-subgoal 2.1: find $\mathit{Synth}(\lozenge_aq\wedge p)$.}\\
		We continue up to step 4. There, we first add a a trivial conjunct $\square_a\top$. Then, we are instructed to find $\mathit{Synth}(q\wedge \top)$.
		\begin{adjustwidth}{1cm}{}
			{\bf Sub-sub-subgoal 2.1.1: find $\mathit{Synth}(q\wedge \top)$.}\\
			We proceed to step 4. There, since there are no $\lozenge$ operators in $q\wedge \top$, we return the trivial arrow update model $(U_0,o_0)$.
		\end{adjustwidth}
		Now, in order to find $\mathit{Synth}(\lozenge_aq\wedge p)$, we take the trivial arrow update model found in sub-sub-subgoal 2.1.1, and add one extra outcome. Then, we connect this extra outcome to the trivial model by a $\top \imp_a\langle U_0,o_0\rangle\top$ arrow. The source condition of this arrow is $\top$ because step 4 uses the construction from Lemma~\ref{lemma:AAUL_reduction_1}, and that construction always gives precondition $\top$. The arrow is for agent $a$, because we started with a $\lozenge_a$ operator. Finally, the target condition is $\langle U_0,o_0\rangle \top$ because the arrow update that the arrow points to is $(U_0,o_0)$ and the $\square_a$ conjunct was $\square_a\top$. 	
		In other words, we obtain the arrow update depicted in Figure~\ref{fig.synthesisexample}.a, where the framed state indicates the designated outcome.
		
\noindent		{\bf Sub-subgoal 2.2: find $\mathit{Synth}(\lozenge_ar\wedge p)$.}\\
		Replacing the $q$ of $\lozenge_aq\wedge p$ for an $r$ does not change the arrow update model that we end up with. So in this sub-subgoal we find the same model as in sub-subgoal 2.1.
	\end{adjustwidth}
	Now, in order to find $\mathit{Synth}((\lozenge_aq\wedge p) \vee (\lozenge_ar\wedge p))$, we need to combine the models found in sub-subgoals 2.1 and 2.2. Since we are working with a disjunction, we combine them as described in Lemma~\ref{lemma:AAUL_disjunction_combination}. That, is, we take copies of the two (identical) models and add one extra outcome. Then, we add two more arrows: every world that is reachable from the origin world of the model from sub-subgoal 2.1 by $\psi_1\imp_a\psi_2$, becomes reachable from the extra world by a $\psi_1\wedge\langle \AAUL\rangle (\lozenge_aq\wedge p) \imp_a\psi_2$ arrow. Likewise, every world reachable by $\psi_1 \imp_a \psi_2$ from the origin of the model from sub-subgoal 2.2 becomes reachable from the extra world by $\psi_1\wedge\neg\langle \AAUL\rangle (\lozenge_aq\wedge p) \imp_a\psi_2$. We now get the model depicted in Figure~\ref{fig.synthesisexample}.b.
\end{adjustwidth}

\begin{figure}
\center
		\fbox{
		\begin{tikzpicture}[font=\footnotesize]
			\node[draw, rectangle,minimum size=4pt,thick] (o) at (0,0) {};
			\fill (0,0) circle (0.04) node {};
			\fill (6,0) circle (0.04) node (o1) {};
			
			\draw[->] (o) -- node[above,sloped]{$\top \ \ _a \ \ \langle U_0,o_0\rangle \top$} (o1);
		\end{tikzpicture}
} (a)

		\fbox{
		\begin{tikzpicture}[font=\footnotesize]
		\fill (0,5) circle (0.04) node (o0) {};
		\fill (6,5) circle (0.04) node (o1) {};
		
		\fill (0,0) circle (0.04) node (o2) {};
		\fill (6,0) circle (0.04) node (o3) {};
		
		\fill (0,2.5) circle (0.04) node (o) {};
		\node[draw, rectangle,minimum size=4pt,thick] at (0,2.5) {};
		\draw[->] (o0)-- node[above,sloped]{$\top \ \ _a \ \ \langle U_0,o_0\rangle \top$} (o1);
		\draw[->] (o2) -- node[below,sloped]{$\top \ \ _a \ \ \langle U_0,o_0\rangle \top$} (o3);
		\draw[->] (o) -- node[above,sloped,pos=0.45]{$\top\wedge \langle \AAUL\rangle(\lozenge_aq\wedge p) \ \ _a \ \ \langle U_0,o_0\rangle \top$} (o1);
		\draw[->] (o) -- node[below,sloped,pos=0.45]{$\top\wedge \neg\langle \AAUL\rangle(\lozenge_aq\wedge p) \ \ _a \ \ \langle U_0,o_0\rangle \top$} (o3);
	\end{tikzpicture}
} (b)
		\fbox{
\begin{tikzpicture}[font=\footnotesize]
	\fill (-5,3.75) circle (0.04) node (o-) {};
	\node[draw, rectangle,minimum size=4pt,thick] at (-5,3.75) {};

	\fill (0,7.5) circle (0.04) node (o4) {};

	\fill (0,5) circle (0.04) node (o0) {};
	\fill (6,5) circle (0.04) node (o1) {};
		
	\fill (0,0) circle (0.04) node (o2) {};
	\fill (6,0) circle (0.04) node (o3) {};
	
	\fill (0,2.5) circle (0.04) node (o) {};
		
	\draw[->] (o0)-- node[above,sloped]{$\top \ \ _a \ \ \langle U_0,o_0\rangle \top$} (o1);
	\draw[->] (o2) -- node[below,sloped]{$\top \ \ _a \ \ \langle U_0,o_0\rangle \top$} (o3);
	\draw[->] (o) -- node[above,sloped,pos=0.45]{$\top\wedge \langle \AAUL\rangle(\lozenge_aq\wedge p) \ \ _a \ \ \langle U_0,o_0\rangle \top$} (o1);
	\draw[->] (o) -- node[below,sloped,pos=0.45]{$\top\wedge \neg\langle \AAUL\rangle(\lozenge_aq\wedge p) \ \ _a \ \ \langle U_0,o_0\rangle \top$} (o3);
	
	\draw[->] (o-) -- node[above,sloped]{$\top \ \ _a \ \ \langle U_0,o_0\rangle\top$} (o4);
	\draw[->] (o-) -- node[below,sloped]{$\top \ \ _b \ \ \langle U_1,o_1\rangle p$} (o);
\end{tikzpicture}
} (c) 

		\fbox{
	\begin{tikzpicture}
		\fill (0,0) circle (0.04) node (o) {};
		\fill (3,-3) circle (0.04) node (o1) {};
		\fill (7,-3) circle (0.04) node (o2) {};
		
		\node[draw, rectangle,minimum size=4pt,thick] at (0,0) {};
		
		\draw[->] (o) -- node[below,sloped]{$\top \ \ _b \ \ p$} (o1);
		\draw[->] (o) -- node[above,sloped]{$\top \ \ _a \ \ \top$} (o2);
		\draw[->] (o1) -- node[below,sloped]{$\top \ \ _a \ \ \top$} (o2);
	\end{tikzpicture}
} (d)
\caption{Different stages in the synthesis of $ \lozenge_a\square_bp\wedge\lozenge_b (\lozenge_aq\vel\lozenge_ar)\wedge\square_bp)$}
\label{fig.synthesisexample}
\end{figure}

\noindent Now, all that is left to do is to combine the arrow update models found in subgoals 1 and 2. The model we obtain is depicted in Figure~\ref{fig.synthesisexample}.c, where $(U_1,o_1)$ is the model that we found in subgoal 2. The root of the model is the leftmost outcome. Note that the depth (i.e., the maximum path length) of this arrow update model is 2, just like the depth (i.e., the maximum number of nested $\square$ or $\lozenge$ operators) of $\phi$. In general, the depth of the synthesized arrow update model is bounded by that of the formula for which synthesis is performed.

Also, note that the arrow update model that we obtained can quite easily be modified to obtain a smaller model that is still sufficient. In particular:
\begin{itemize}
	\item the two outcomes that are not reachable from the root can be eliminated,
	\item the formulas $\langle U_0,o_0\rangle\top$, $\langle U_1,o_1\rangle p$, $\top \wedge \langle \AAUL\rangle (\lozenge_aq\wedge p)$ and $\top\wedge\neg \langle\AAUL\rangle(\lozenge_aq\wedge p)$ can be replaced by the equivalent formulas $\top$, $p$, $\lozenge_aq\wedge p$ and $\neg (\lozenge_aq\wedge p)$, respectively,
	\item the three leaf outcomes can be merged into one, 
	\item and $\lozenge_aq\wedge p \imp_a \top$ and $\neg(\lozenge_a q\wedge p) \imp_a \top$ can be merged into one $\top\imp_a\top$ arrow.
\end{itemize}
With these optimizations, we get the more aesthetically pleasing arrow update depicted in Figure~\ref{fig.synthesisexample}.d.

\section{Axiomatization} \label{sec.axiomatization}

Using the reduction axioms introduced before, we can find an axiomatization for AAUML. Let {\bf AAUML} be the axiomatization shown in Table~\ref{table.axiomatization}. In this section we show that the axiomatization {\bf AAUML} is sound and complete, and we give some derivable (well-known) axiom schemata. 

\begin{table}[h]
\[ \begin{array}{ll}
{\bf Prop} & \text{all tautologies of propositional logic}\\
{\bf K} & \Box_\agent(\phi\imp\psi)\imp (\Box_\agent\phi\imp \Box_\agent\psi) \\
{\bf U1} & [U,o]p \eq p \\
{\bf U2} & [U,o]\neg\phi \eq \neg [U,o]\phi \\
{\bf U3} & [U,o](\phi\et\psi) \eq ([U,o]\phi \et [U,o]\psi) \\
{\bf U4} & [U,o]\Box_\agent \phi \eq \Et_{(o,\psi) \imp_a (o',\psi')} (\psi \imp \Box_a (\psi' \imp [U,o']\phi)) \\
{\bf A1} & \langle \AAUL\rangle \phi_0 \leftrightarrow \phi_0$ \hspace{6cm} \hfill where $\phi_0\in \langpl\\
{\bf A2} & \langle \AAUL \rangle (\phi \vee \psi)\leftrightarrow (\langle \AAUL\rangle \phi \vee \langle \AAUL\rangle \psi)\\
{\bf A3} & \langle \AAUL \rangle (\phi_0 \wedge \phi)\leftrightarrow (\phi_0\wedge \langle \AAUL\rangle\phi)$ \hfill where $\phi_0\in \langpl\\
{\bf A4} & \langle \AAUL\rangle\bigwedge_{a\in A}(\bigwedge_{\phi_a\in \Phi_a}\lozenge_a\phi_a\wedge \square_a\psi_a)\leftrightarrow  \bigwedge_{a\in A}\bigwedge_{\phi_a\in \Phi_a}\lozenge_a\langle\AAUL\rangle(\phi_a\wedge\psi_a) \hspace{1cm} \ \\
{\bf MP} & \text{from } \phi\imp\psi \text{ and } \phi \text{ infer } \psi\\
{\bf NecK} & \text{from } \phi \text{ infer } \Box_\agent\phi\\
{\bf RE} & \text{from } \chi\eq\psi \text{ infer } \phi[\chi/\atom] \eq \phi[\psi/\atom]
\end{array} \]
\caption{The axiomatization {\bf AAUML} of the logic AAUML}
\label{table.axiomatization}
\end{table}

\begin{lemma}
Axiomatization {\bf AAUML} is sound for the logic AAUML.
\end{lemma}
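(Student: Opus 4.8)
The plan is to verify soundness of each axiom schema and each rule of {\bf AAUML} in Table~\ref{table.axiomatization} separately, since soundness of an axiomatization amounts to showing that every axiom is valid and that every rule preserves validity. Most of the work has in fact already been done in the preceding sections, so the proof will mostly consist of pointers to earlier results together with a few routine checks.

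\textbf{Axioms.} For {\bf Prop} and {\bf K} validity is the standard fact about the base modal logic, interpreted on arbitrary relational models, which is the semantic setting of Definition~\ref{def.semantics}. The schemata {\bf U1}--{\bf U4} are exactly the four validities of the Lemma attributed to~\cite{kooirenne} at the start of Section~\ref{sec:red_U}, so their validity is immediate (and note that, by that Lemma, $[U,o]$ is self-dual, so it does not matter that the axioms are stated with $[U,o]$ while some lemmas use $\langle U,o\rangle$). For {\bf A1} we use the observation, already made in the proof of Lemma~\ref{lemma:AAUL_disjunction}, that $\langle\AAUL\rangle\phi_0\leftrightarrow\phi_0$ is valid for propositional $\phi_0$ because all our updates are informational, not factual: $M*U$ has valuation $V'(p)=V(p)\times O$, so the truth value of a propositional formula at $(s,o)$ equals its truth value at $s$, and conversely the trivial one-outcome arrow update with the single arrow $\top\imp_a\top$ for each $a$ witnesses the right-to-left direction. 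Axiom {\bf A2} is Lemma~\ref{lemma:AAUL_disjunction} (first part), {\bf A3} is Lemma~\ref{lemma:AAUL_disjunction} (second part), and {\bf A4} is Lemma~\ref{lemma:AAUL_main_reduction}; by Theorem~\ref{thm:expressivity} the restriction to $\langauml$ in the statements of those lemmas is harmless, so they apply to all formulas of $\lang$ as required by the axiom schemata.

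\textbf{Rules.} {\bf MP} preserves validity trivially. {\bf NecK} preserves validity by the usual argument: if $\phi$ is valid then $M,t\models\phi$ for every $t$, hence $M,s\models\Box_a\phi$ for every $s$. The only rule needing a word of care is {\bf RE}: from $\chi\leftrightarrow\psi$ valid we must infer $\phi[\chi/p]\leftrightarrow\phi[\psi/p]$ valid. This is a replacement-of-equivalents argument, proved by induction on the structure of $\phi$; the one non-standard point is that $\chi$ (or $\psi$) may occur inside the source or target conditions of an arrow update model appearing in $\phi$, or inside an $[\AAUL]$. For the $[U,o]$ case one checks from $(\sharp)$ that replacing a source/target condition by a semantically equivalent one yields the same relation $R'_a$ on $M*U$, hence a bisimilar (indeed identical) updated model; for the $[\AAUL]$ case one uses that $[\AAUL]$ quantifies over arrow updates with modal source and target conditions only, and Theorem~\ref{thm:expressivity} / Proposition~\ref{prop:fully_arbitrary} guarantee this quantification is insensitive to replacing such conditions by equivalents. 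This replacement induction is the only real obstacle, and it is mild; the reader may reasonably be referred to the analogous and by-now-standard arguments for AUML~\cite{kooirenne} and AAUL~\cite{hvdetal.aij:2017}, together with the bisimulation-invariance of AAUML noted after Definition~\ref{def.semantics}.
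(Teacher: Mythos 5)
Your proof is correct and takes essentially the same approach as the paper, which likewise establishes soundness by pointing to the standard modal facts for \textbf{Prop}, \textbf{K}, \textbf{MP}, \textbf{NecK}, \textbf{RE}, to the lemma of Section~\ref{sec:red_U} for \textbf{U1}--\textbf{U4}, and to the validities of Section~\ref{sec:red_AAUL} for \textbf{A1}--\textbf{A4}. Your additional care with \textbf{RE} (substitution inside source and target conditions) is a point the paper's one-line proof glosses over, and your handling of it is sound.
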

\begin{proof}
{\bf Prop}, {\bf K}, {\bf MP}, {\bf NecK}, {\bf RE} are known from modal logic, {\bf U1}---{\bf U4} were demonstrated in Section~\ref{sec:red_U} and originate in \cite{kooirenne}, {\bf A1}---{\bf A4} were shown to be valid in Section~\ref{sec:red_AAUL}.
\end{proof}
It is important to note that {\bf U1}---{\bf U4} and {\bf A1}---{\bf A4} are so-called \emph{reduction axioms} for the operators $[U,o]$ and $\langle \AAUL\rangle$, respectively, as mentioned in the previous section. This means that they are equivalences, where the formula inside the scope of the $[U,o]$ or $\langle \AAUL\rangle$ operator on the left-hand side is more complex than the formulas inside the scope of that operator on the right-hand side.

The derivation rule {\bf RE} is important as our reductions are inside-out, not outside-in. Without it, for example, the validity $[U,o][U,o](p\vel\neg p)$ would not be derivable.

The axioms {\bf A1}---{\bf A4} could just as well have been formulated with the $[\AAUL]$ dual of the modality $\dia{\AAUL}$, e.g., $\mathbf{A2'} \ [\AAUL](\phi\et\psi) \eq ([\AAUL]\phi \et [\AAUL]\psi)$. We prefer the $\dia{\AAUL}$ versions as they match our usage of these axioms in synthesis. 
Further note that there is no reduction of shape $\dia{\AAUL}\neg\phi\eq\dots$. We assume that subformulas bound by $\dia{\AAUL}$ are first transformed into disjunctive negation normal form before a further reduction can take place (and again, for this, the derivation rule {\bf RE} is essential).

\begin{lemma}
Axiomatization {\bf AAUML} is complete for the logic AAUML.
\end{lemma}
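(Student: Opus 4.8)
The plan is to show completeness by the standard \emph{reduction} argument for dynamic epistemic logics: since the axiomatization contains a complete calculus for the static base logic (multi-agent $K$, i.e., \textbf{Prop}, \textbf{K}, \textbf{MP}, \textbf{NecK}) together with the reduction axioms \textbf{U1}--\textbf{U4} and \textbf{A1}--\textbf{A4} and the replacement-of-equivalents rule \textbf{RE}, it suffices to show that every $\phi\in\lang$ is provably equivalent in \textbf{AAUML} to a formula $t(\phi)\in\langml$ of the base modal logic. Once this is established, completeness follows: if $\models\phi$, then $\models t(\phi)$ by soundness of \textbf{AAUML} (the equivalence $\phi\eq t(\phi)$ is valid), so $\proves_K t(\phi)$ by completeness of multi-agent $K$, hence $\proves_{\mathbf{AAUML}} t(\phi)$, and then $\proves_{\mathbf{AAUML}}\phi$ using the provable equivalence $\phi\eq t(\phi)$ together with \textbf{MP}.

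First I would make the syntactic mirror of Theorem~\ref{thm:expressivity} explicit: the semantic reductions used in that theorem each correspond to one of the axioms in Table~\ref{table.axiomatization}, so the \emph{same} rewriting procedure yields a \emph{provable} (not merely valid) equivalence. Concretely, I would define a translation $t$ that, given a formula, locates an innermost occurrence of $[U,o]$ or $\dia{\AAUL}$ whose argument lies in $\langml$, and rewrites it: for $[U,o]$ apply \textbf{U1}--\textbf{U4} (pushing the operator inward until it hits atoms and disappears), exactly as in Corollary~\ref{cor:auml_reduction}; for $\dia{\AAUL}$ first put the argument into DNNF (provably, using \textbf{Prop} and \textbf{RE}), then apply \textbf{A1}--\textbf{A4} as in the proof of Lemma~\ref{lemma:aauml_reduction}, noting that \textbf{A4} reduces a conjunction of $\lozenge_a\phi_a\wedge\Box_a\psi_a$ to $\bigwedge_{a}\bigwedge_{\phi_a}\lozenge_a\dia{\AAUL}(\phi_a\et\psi_a)$ where each $\phi_a\et\psi_a$ has strictly smaller modal depth. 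Each single rewrite step is licensed by an axiom instance and, crucially, by \textbf{RE}, which lets us replace a subformula by a provable equivalent inside an arbitrary context (including inside the source/target conditions of another $U$, and inside $\dia{\AAUL}$ after DNNF normalization).

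The key step is a termination argument guaranteeing that this rewriting terminates, so that $t(\phi)\in\langml$ is well-defined and $\proves_{\mathbf{AAUML}}\phi\eq t(\phi)$ follows by a finite chain of \textbf{RE}/\textbf{MP} steps. For $[U,o]$-elimination this is the measure already used for Corollary~\ref{cor:auml_reduction} in \cite{kooirenne}; for $\dia{\AAUL}$-elimination it is the well-founded order $\prec$ from Lemma~\ref{lemma:aauml_reduction} (strict subformula, with modal depth as a tiebreaker). I would combine these into one lexicographic measure that handles nested dynamic operators by always reducing the innermost one first --- so that when we reduce a $\dia{\AAUL}$ or $[U,o]$, its argument is already a pure $\langml$ formula, which is exactly the setting in which Lemma~\ref{lemma:aauml_reduction} and Corollary~\ref{cor:auml_reduction} apply. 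I expect the main obstacle to be bookkeeping rather than conceptual: one must verify carefully that the DNNF normalization step genuinely is a provable equivalence in \textbf{AAUML} (it only uses propositional reasoning plus \textbf{RE}), and that applying \textbf{A4} is legitimate even when the $\psi_a$ or $\phi_a$ themselves still contain dynamic operators --- but since we reduce innermost-first, at the moment \textbf{A4} is applied the whole argument is in $\langml$, so this is fine. Finally, I would remark that this is precisely the reduction-axiom completeness methodology of \cite{baltagetal:1998}, here complicated only by the need for \textbf{RE} (as the reductions are inside-out) and by the DNNF preprocessing for $\dia{\AAUL}$.
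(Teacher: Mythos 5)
Your proposal is correct and follows essentially the same route as the paper: the paper's own proof is a terse version of exactly this reduction argument (provably eliminate all $[U,o]$ and $\dia{\AAUL}$ operators via the reduction axioms and \textbf{RE}, then appeal to completeness of the base modal logic and Theorem~\ref{thm:expressivity}). Your write-up simply makes explicit the termination measure, the role of \textbf{RE} for inside-out rewriting, and the DNNF preprocessing, all of which the paper leaves implicit.
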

\begin{proof}
Let $\phi\in\lang$ be valid. Using an induction argument, we can eliminate all $[U,o]$ and $\langle \AAUL\rangle$ operators from it: $\phi$ must be provably equivalent to a formula $\phi'\in\langml$. As $\phi'$ must also be valid (Theorem \ref{thm:expressivity}), $\phi'$ is provable in modal logic. From the provable equivalence between $\phi$ and $\phi'$ and the derivation of $\phi'$ we construct a derivation of $\phi$ in {\bf AAUML}.
\end{proof}

We have now shown that:
\begin{theorem}
Axiomatization {\bf AAUML} is sound and complete for the logic AAUML.
\end{theorem}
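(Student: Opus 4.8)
The plan is to read the theorem as the conjunction of the two immediately preceding lemmas --- soundness of \textbf{AAUML} and completeness of \textbf{AAUML} --- so that essentially no new work is required; the proof is a one-line appeal to those two results. Nonetheless, let me sketch how each half goes, since that is where the substance lies.

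For soundness I would check each line of Table~\ref{table.axiomatization} separately. \textbf{Prop} and \textbf{K} are validities of classical and of normal modal logic; \textbf{U1}--\textbf{U4} are the reduction axioms of \cite{kooirenne}, whose validity was re-derived in Section~\ref{sec:red_U} directly from the definition of $M * U$; and \textbf{A1}--\textbf{A4} are exactly Lemmas~\ref{lemma:AAUL_disjunction} and \ref{lemma:AAUL_main_reduction} (together with the trivial $\dia{\AAUL}\phi_0 \eq \phi_0$), established in Section~\ref{sec:red_AAUL} and, via Theorem~\ref{thm:expressivity}, seen to hold for all of $\lang$ rather than merely $\langauml$. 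The rules \textbf{MP} and \textbf{NecK} preserve validity by the usual arguments, and \textbf{RE} preserves validity because provable (equivalently, semantic) equivalence in AAUML is a congruence for all the connectives, the semantics being compositional.

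For completeness I would argue as follows. Let $\phi \in \lang$ be valid. Using \textbf{U1}--\textbf{U4} and \textbf{A1}--\textbf{A4}, the congruence rule \textbf{RE}, and passage to DNNF, I would rewrite $\phi$ step by step into a provably equivalent $\phi' \in \langml$; this is precisely the rewriting whose termination is guaranteed by Lemma~\ref{lemma:aauml_reduction} and Theorem~\ref{thm:expressivity}, the well-founded measure being the order $\prec$ on subformulas combined with the nesting depth of the dynamic operators. Since $\vdash \phi \eq \phi'$ and $\phi$ is valid, $\phi'$ is valid, hence a theorem of basic multi-agent modal logic, hence derivable from \textbf{Prop}, \textbf{K}, \textbf{MP}, \textbf{NecK}. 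Chaining that derivation with the provable equivalence $\vdash \phi \eq \phi'$ yields $\vdash \phi$ in \textbf{AAUML}.

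I do not expect a genuine obstacle here, since both halves are already in hand. The only subtleties worth flagging --- and they are already handled in the excerpt --- are that \textbf{RE} is indispensable because the reductions fire inside-out rather than outside-in (without it even $[U,o][U,o](p \vee \neg p)$ would not be derivable), and that before rewriting a $\dia{\AAUL}$-formula one must first normalize its argument to DNNF, since there is no reduction axiom of the form $\dia{\AAUL}\neg\phi \eq \cdots$ and \textbf{A2}, \textbf{A4} only fire when the argument has the right disjunctive or conjunctive shape.
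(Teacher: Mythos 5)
Your proposal is correct and matches the paper exactly: the theorem is just the conjunction of the two immediately preceding lemmas (soundness via case-by-case validity of the axioms established in Sections~\ref{sec:red_U} and \ref{sec:red_AAUL}, completeness via provable reduction to $\langml$ followed by completeness of basic modal logic), and your sketches of both halves follow the same arguments the paper gives for those lemmas. No gaps.
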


In the proof system {\bf AAUML}, we do not have necessitation for the $[U,o]$ and $[\AAUL]$ operators. Such necessitation rules are derivable, however.

\begin{proposition}
The following two rules are derivable in {\bf AAUML}. \begin{itemize} \item {\bf NecU}: from $\phi$ infer $[U,o]\phi$; \item {\bf NecA}: from $\phi$ infer $[\AAUL]\phi$. \end{itemize}
\end{proposition}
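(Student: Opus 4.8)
The plan is to derive each necessitation rule from the reduction axioms together with the rule \textbf{RE}. The key observation is that the reduction axioms \textbf{U1}--\textbf{U4} and \textbf{A1}--\textbf{A4} (in dual form) are \emph{equivalences}, so a repeated inside-out application of \textbf{RE} lets us rewrite $[U,o]\phi$ (respectively $[\AAUL]\phi$) into a provably equivalent modal formula $\phi'$; and by Theorem~\ref{thm:expressivity} together with soundness, if $\phi$ is a theorem then $\phi'$ is a valid (hence, by completeness of the modal fragment, provable) modal formula, so $[U,o]\phi$ is provable as well. Let me spell this out.

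First I would treat \textbf{NecU}. Suppose $\vdash\phi$. By the completeness proof above, $\phi$ is provably equivalent to some $\phi'\in\langml$, so $\vdash\phi'$ and, by soundness, $\models\phi'$. Now consider $[U,o]\phi$. Using \textbf{RE} with the provable equivalence $\phi\eq\phi'$ we get $\vdash [U,o]\phi\eq[U,o]\phi'$, so it suffices to prove $[U,o]\phi'$. Since $\phi'$ is a modal formula and the source/target conditions of $U$ are formulas of $\lang$, we can first (again by \textbf{RE} and the completeness argument) assume those conditions are modal as well; then the reduction axioms \textbf{U1}--\textbf{U4} apply and, by an induction on the structure of $\phi'$ exactly as in the proof of Corollary~\ref{cor:auml_reduction}, $\vdash [U,o]\phi'\eq\phi''$ for some $\phi''\in\langml$. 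This $\phi''$ is provably equivalent to $[U,o]\phi'$, hence (soundness) $\models\phi''$ because $\models\phi'$ and $\models [U,o]\phi'\eq\phi''$; being a valid modal formula it is provable, so $\vdash\phi''$ and therefore $\vdash[U,o]\phi'$ and finally $\vdash[U,o]\phi$.

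For \textbf{NecA} the argument is the same with $[\AAUL]$ in place of $[U,o]$: if $\vdash\phi$ then $\models\phi$, and by Corollary~\ref{cor:auml_reduction} in dual form (used already in the proof of Theorem~\ref{thm:expressivity}) $[\AAUL]\phi$ is provably equivalent to some modal $\phi'$, which is then valid and hence provable, giving $\vdash[\AAUL]\phi$. A slicker uniform way to phrase both: a valid $\langml$-theorem $\phi$ satisfies $\models[U,o]\phi$ and $\models[\AAUL]\phi$ trivially (a theorem holds in every pointed model, in particular in every $M*U$), so the real content is only that these \emph{validities are provable}, which is exactly the completeness theorem applied to the formulas $[U,o]\phi$ and $[\AAUL]\phi$.

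The only mildly delicate point — and the step I expect to need the most care — is the handling of the source and target conditions of $U$ when they are not already in $\langml$: one must invoke \textbf{RE} (and the fact that every formula is provably equivalent to a modal one) to normalise them before \textbf{U4} can be applied inside-out, so that the induction producing $\phi''$ actually goes through. Everything else is a routine unwinding of the reduction axioms and the completeness argument that has already been carried out above.
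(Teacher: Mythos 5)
Your proof is correct, but it takes a genuinely different and considerably heavier route than the paper's. You argue semantically: reduce $\phi$ and then $[U,o]\phi$ to provably equivalent modal formulas via the reduction axioms and \textbf{RE}, and then pass through soundness and completeness to conclude that the reduced formula, being valid, is provable. The paper instead gives a short, purely syntactic derivation that never leaves the proof system: it first derives the auxiliary axiom \textbf{U1}$'$: $[U,o]\phi_0 \leftrightarrow \phi_0$ for $\phi_0 \in \langpl$ (from \textbf{Prop}, \textbf{U1}--\textbf{U3} and \textbf{MP}); then, from the premise $\vdash\phi$ it obtains $\vdash \phi \leftrightarrow \top$ by propositional reasoning, applies \textbf{RE} once to get $\vdash [U,o]\top \leftrightarrow [U,o]\phi$, and combines this with $\vdash [U,o]\top$ (from \textbf{U1}$'$ and $\vdash\top$) to conclude $\vdash [U,o]\phi$; \textbf{NecA} is handled identically via the analogous \textbf{A1}$'$. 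The paper's trick buys an eight-line derivation that uses only one (derived) reduction axiom and entirely avoids the completeness theorem, the modal completeness of the base logic, and the normalisation of source and target conditions that you rightly identify as the delicate point of your approach. Your argument is nonetheless sound given its placement after the soundness and completeness lemmas, with one small caveat: what you establish is, strictly speaking, admissibility (a metatheoretic implication between theoremhood of $\phi$ and of $[U,o]\phi$), whereas the paper exhibits an explicit derivation schema; for the purposes of this proposition the two readings coincide, but the syntactic derivation is the more informative result.
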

\begin{proof}
First, note that the axiom 
\begin{center}
\begin{tabular}{ll}
{\bf U1$'$} & $[U,o]\phi_0\leftrightarrow \phi_0$ \hspace{2cm} where $\phi_0\in \langpl$
\end{tabular}
\end{center}
is derivable, using {\bf Prop}, {\bf U1}--{\bf U3} and {\bf MP}. It is also convenient to use a variant of {\bf MP} directly on bi-implications, instead of first converting the bi-implication to a single implication.
\begin{center}
\begin{tabular}{ll}
{\bf MP$'$} & from $\phi \leftrightarrow \psi$ and $\phi$ infer $\psi$, and from $\phi \leftrightarrow \psi$ and $\psi$ infer $\phi$.
\end{tabular}
\end{center}
This {\bf MP$'$} is, of course, also easily derived. Using {\bf U1$'$} and {\bf MP$'$}, we can derive {\bf NecU} in a reasonable number of steps:
\begin{center}
\begin{tabular}{lll}
1.& $\phi$ & premise\\
2.& $\phi \rightarrow (\phi \leftrightarrow \top)$ &  {\bf Prop}\\
3.& $\phi \leftrightarrow \top$ & {\bf MP}(1,2)\\
4.& $[U,o]\top \leftrightarrow \top$ & {\bf U1$'$}\\
5.& $\top$ & {\bf Prop}\\
6.& $[U,o]\top$ & {\bf MP$'$}(5,4)\\
7.& $[U,o]\top \leftrightarrow [U,o]\phi$ & {\bf RE}(3)\\
8.& $[U,o]\phi$&{\bf MP$'$}(6,7) 
\end{tabular}
\end{center}
A derivation of {\bf NecA} can be found in a similar way. Here, too, it is convenient to first derive an auxiliary axiom.
\begin{center}
\begin{tabular}{ll}
{\bf A1$'$} & $[\AAUL]\phi_0\leftrightarrow \phi_0$ \hspace{2cm} where $\phi_0\in \langpl$
\end{tabular}
\end{center}
This $[\AAUL]$-version of {\bf A1} is of course derivable. We can then derive {\bf NecA} analogously to how we derived {\bf NecU}, with the application of {\bf U1$'$} replaced by {\bf A1$'$}.
\end{proof}

The axiomatization {\bf AAUML} is inspired by the somewhat similar axiomatization by Hales of \emph{arbitrary action model logic} \cite{hales2013arbitrary}, although the shape of some axioms and rules is rather different. Axioms in Hales similar to our {\bf U1}---{\bf U4}, for the reduction for action models, are of course taken from action model logic instead. Axioms in Hales similar to our {\bf AI}---{\bf A4}, that are used for the reduction of the quantifier, are taken from refinement modal logic instead. There is nothing `of course' about the latter: Hales' mixture of AML and RML was, we think, a quite original move. One can also move in the other direction: our axioms {\bf U1}---{\bf U4} allow an alternative axiomatization of refinement modal logic. However, as this seems out of scope, these results are not presented here.

\section{Update expressivity} \label{sec.updateex}

\subsection{Expressivity}

Recall that we are considering the basic modal logic ML and the update logics PAL, APAL, AML, AAML, AUL, AAUL, AUML, AAUML, and RML, as shown in Figure~\ref{fig:logic_relations} on page \pageref{fig:logic_relations}. One natural thing to do with such related logics that are all interpreted on a similar class of structures is to compare their power to relate or to distinguish structures from that class. The most straightforward way to make such a comparison is to compare their \emph{expressivity}. 

Formally, given languages $\lang_1$ and $\lang_2$ interpreted on certain class of models $\mathcal{M}$, a language $\mathcal{L}_1$ is \emph{at least as expressive} as a language $\mathcal{L}_2$ if for every formula of $\mathcal{L}_2$ there is an equivalent formula of $\mathcal{L}_1$. Having {\em equal expressivity} or {\em higher expressivity} (by which we always mean strictly higher expressivity) can be defined from the ``at least as expressive'' relation in the usual way. If neither language is at least as expressive as the other, we say that they are {\em incomparable} in expressivity. Informally, if $\mathsf{L}_1$ and $\mathsf{L}_2$ are the logics for languages $\lang_1$ resp.\ $\lang_2$ interpreted on $\mathcal{M}$, we will also use the terminology for expressivity to compare these logics.

In \cite{plaza:1989} that introduced PAL it was shown that PAL is equally expressive as ML  (on the class $\mathcal{S}5$ of relational models, but this does not matter for the reduction), and in \cite{baltagetal:1998} that introduced AML it was also shown that AML is equally expressive as ML. It is trivial to show that PAL and AML are at least as expressive as ML, as they extend the logical language. That every formula of PAL or AML is equivalent to a formula in ML, was shown by reduction axioms and rules. Similarly, AUL \cite{kooietal:2011}, AUML \cite{kooirenne} and AAML \cite{hales2013arbitrary} were shown to be equally expressive as ML, and therefore also equally expressive as PAL and AML. In \cite{bozzellietal.inf:2014} it was shown that RML is equally expressive as ML. Here, in Theorem~\ref{thm:expressivity} in Section~\ref{sec.synthesis}, we showed that AAUML is also  equally expressive as ML.

The two remaining logics are APAL and AAUL. The logic APAL was shown to be more expressive than ML in \cite{balbianietal:2008} and AAUL was shown to be more expressive than ML in \cite{hvdetal.aij:2017}, wherein it was also shown that APAL and AAUL are incomparable. This means that the expressivity landscape is as shown in Figure~\ref{fig:expressivity}.

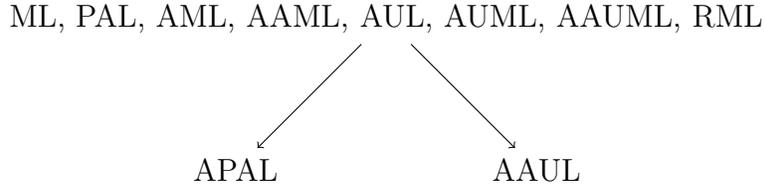
\begin{figure}[h]
\center
\begin{tikzpicture}
\node (n0) at (0,0) {ML, PAL, AML, AAML, AUL, AUML, AAUML, RML};
\node (n1) at (-2,-2) {APAL};
\node (n2) at (2,-2) {AAUL};

\draw[->] (n0) -- (n1);
\draw[->] (n0) -- (n2);
\end{tikzpicture}
\caption{The relative expressivity of the update logics discussed in the paper. Arrows indicate increasing expressivity. Absence of arrows indicates incomparable expressivity.}
\label{fig:expressivity}
\end{figure}

\subsection{Update expressivity hierarchy}

There is something a bit strange about this comparison, however. Although AML and PAL have the same expressivity, AML is clearly in some sense more powerful, since action models represent a far larger class of updates than public announcements. In order to capture the sense in which AML is more powerful than PAL, we use the term \emph{update expressivity}. This concept was introduced as \emph{action equivalence} in \cite{jveetal:2012} (and its precursors) and also subsequently used in that sense in \cite{kooirenne}. The definitions from \cite{jveetal:2012,kooirenne} do not deal very well with multi-pointed update modalities and with arbitrary update modalities, however, so we use a slightly adapted version.

We recall from the introduction that the updates $X$ we consider are relational model transformers and that such transformations are defined by pairwise relating pointed models: \begin{quote} {\em An update $X$ is a relation between pointed relational models.}\end{quote} In fact, three different things are called update: the {\em update relation} between pointed models, the {\em update modality} in a logical language, and, in some sense, the {\em update object}, often a name, that can be associated with the modality or the relation, such as an arrow update $(U,o)$. To simplify the presentation in this section we call the relations $X,Y,\dots$ and the modalities $[X],[Y],\dots$ and we do not consider the update objects separately: we identify them with the update relations.

A relation between pointed relational models can be a one-to-one relation, i.e., a function or a partial function, a one-to-finitely-many relation, and a one-to-infinitely-many relation. For example, it is a function for a pointed arrow update model, a partial function for a public announcement, a one-to-many relation for a multi-pointed arrow update model, and a one-to-infinitely-many relation for an arrow update quantifier. In the first place, one would now like to say that update relations $X$ and $Y$ are the same (are equivalent) if they define the same relation between pointed models, modulo bisimulation. In the second place, we also want to compare an update $X$ that is a partial function, i.e., with a restricted domain, to an update $Y$ that is a total function (or similarly for relations with restricted domains. In that kind of situation one would maybe like to say that updates $X$ and $Y$ are the same if $X$ and $Y$ define the same relation {\em on the domain of $X$}: we will then say that $X$ is conditionally equivalent to $Y$ (this relation is asymmetric). Such a requirement seems common practice in dynamic epistemic logic, and it is also respected in \cite{kooirenne}. We recall from Section \ref{sec.introduction} the `state eliminating' public announcement of $p$ (i) and the `arrow eliminating' public announcement of $p$ (ii), originating with \cite{gerbrandy:1999}: whenever $p$ can be truthfully announced, the pointed relational models resulting from executing (i) and (ii) are bisimilar, as in the example. But when $p$ is false, (ii) can be executed but not (i). So (i) and (ii) are equivalent on condition of the truth of the announcement. 

In view of these considerations, we propose the following definition. In the definition, for $X(M,s)$ read $\{ (M',s') \mid ((M,s),(M',s')) \in X \}$, and let $\dom(X)$ be $\{(M,s)\mid X(M,s)\not = \emptyset\}$, i.e., $\dom(X)$ is the domain of $X$ in the standard relational sense.\footnote{We recall that given a relational model $M = (S,R,V)$ or an arrow update model $U = (O,\aufunction)$, the sets $S = \Domain(M)$ resp.\ $O = \Domain(U)$ of objects on which the accessibility relations $R_a$ resp.\ arrow relations $\aufunction_a$ are defined are also called domains. This causes confusion if the arrow update model $U$ is considered as an update relation between sets of pointed relational models, with its domain $\dom(U)$ of application defined as above. As the term `domain' is extremely standard for both, we prefer to distinguish them by using different symbols $\Domain$ and $\dom$, instead of introducing a non-standard term for either one or the other.} Recall from Section~\ref{sec.structures} that two sets of pointed models are bisimilar if every pointed model in the first set is bisimilar to one in the second set and vice versa.

\begin{definition}[Update equivalence, update expressivity]
Given updates $X$ and $Y$, {\em $X$ is conditionally update equivalent to $Y$}, if for all $(M,s) \in \dom(X)$, $X(M,s) \bisim$ $Y(M,s)$. Further, {\em $X$ is update equivalent to $Y$}, if $X$ is conditionally update equivalent to $Y$, and $Y$ is conditionally update equivalent to $X$. Update modalities $[X]$ and $[Y]$ are (conditionally) update equivalent, if $X$ and $Y$ are (conditionally) update equivalent.
  
A language $\mathcal{L}$ is {\em at least as update expressive as} $\mathcal{L}'$ if for every update modality $[X]$ of $\mathcal{L}'$ there is an update modality $[Y]$ of $\mathcal{L}$ such that $X$ is conditionally update equivalent to $Y$; $\mathcal{L}$ is {\em equally update expressive as} $\mathcal{L}'$ (or `as update expressive as') if $\mathcal{L}$ is at least as update expressive as $\mathcal{L}'$ and $\mathcal{L}'$ is at least as update expressive as $\mathcal{L}$.
\end{definition}
We define `(strictly) more update expressive' and `incomparable in update expressivity' as usual. We also extend the usage of `update expressive' to the logics for the languages that we compare. Instead of `update equivalent' we may use `equivalent' if the context is clear. If updates $X$ and $Y$ are update equivalent, then $[X]\phi \eq [Y]\phi$ is valid. In the other direction, this is not always the case! In Section \ref{sec.rml} we give a counterexample.

We should stress that we do not claim that our definition is appropriate for all situations, merely that it gives an accurate view of the strengths of the different logics that we consider in this particular paper.

Let us now fill in the expressivity hierarchy for our target logics. The update expressivity of AUL is higher than that of PAL, and lower than that of AML \cite{kooietal:2011}. The comparison between AML and AUML that we will address in Section \ref{sec.arrowvaction} is less straightforward than that. In \cite{kooirenne} it was shown that AML  and AUML have the same update expressivity. That result does not distinguish between single-pointed and multi-pointed action models and arrow update models, however. Here, we therefore provide an alternative proof of their results that makes that distinction. Specifically, we show that the result from \cite{kooirenne} only applies to the multi-pointed case, but that single-pointed arrow update models are more update expressive than single-pointed action models.

To formulate such results we need to slightly expand our notation. We recall that in the language $\lang$ of AAUML (see Section~\ref{sec.syntax}) we permit multi-pointed arrow update modalities. If we only allow single-pointed arrow update modalities, let us call the set of validities AAUML$_1$. For the `logic' in the standard sense of the set of validities this makes no difference, so that out of the context of update expressivity we can continue to also let AAUML represent AAUML$_1$. Similarly, without the quantifiers, we distinguish AUML$_1$ from AUML, and we will later also introduce such a notational distinction for action model logics.

Adding quantification increases update expressivity, since the non-quantified logics cannot simulate a one-to-infinity relation. So, for example, APAL is more update expressive than PAL, and AAUML is more update expressive than AUML. (The distinction between single-pointed and multi-pointed is irrelevant here, as quantifying over all single-pointed updates is equivalent to quantifying over all multi-pointed updates.) When comparing the quantified logics among themselves, most pairs are incomparable. These incomparability results are all rather trivial, so we do not prove them here. The only comparable pair is (multi-pointed) AAUML and AAML, which have the same update expressivity because their underlying updates have the same update expressivity (Section \ref{sec.arrowvaction}). In Section \ref{sec.rml} we will show that RML is incomparable to the other quantified logics, and in particular that the AAUML and AAML quantifiers are contained in the RML quantifier.

The landscape of update expressivity is therefore as shown in Figure~\ref{fig:logics_update_expressivity}.

\begin{figure}
\center
\begin{tikzpicture}
\node (APAL) at (0,2) {APAL};
\node (AAUL) at (3,2) {AAUL};

\node (ML) at (-3,0) {ML};
\node (PAL) at (0,0) {PAL};
\node (AUL) at (3,0) {AUL};
\node (AMLs) at (0,-2) {AML$_1$};
\node (AUMLs) at (3,-2) {AUML$_1$};
\node (AMLm) at (0,-4) {AML};
\node (AUMLm) at (3,-4) {AUML};
\node (AAMLm) at (0,-6) {AAML};
\node (AAUMLm) at (3,-6) {AAUML};
\node (RML) at (-3,-2) {RML};

\draw[->] (PAL) -- (AUL);
\draw[<->] (AMLm) -- (AUMLm);

\draw[->] (PAL) -- (AMLs);
\draw[->] (AUL) -- (AUMLs);

\draw[->] (AMLs) -- (AMLm);
\draw[->] (AMLs) -- (AUMLs);
\draw[->] (AUMLs) -- (AUMLm);

\draw[->] (PAL) -- (APAL);
\draw[->] (AUL) -- (AAUL);
\draw[->] (ML) -- (RML);
\draw[->] (ML) -- (PAL);
\draw[->] (AMLm) -- (AAMLm);
\draw[->] (AUMLm) -- (AAUMLm);
\draw[<->] (AAMLm) -- (AAUMLm);
\end{tikzpicture}
\caption{The relative update expressivity of the update logics discussed in the paper. We assume transitive closure of arrows.}
\label{fig:logics_update_expressivity}
\end{figure}
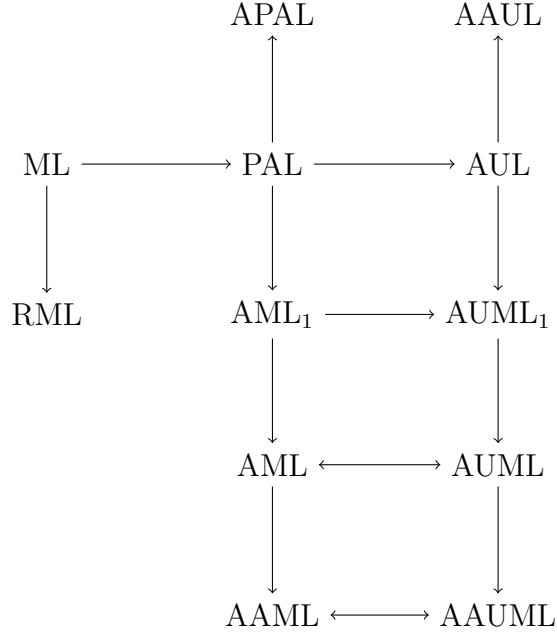

\section{Arrow updates versus action models} \label{sec.arrowvaction}

\subsection{Action model logic}
Arrow update model logic AUML is equally expressive as action model logic AML and their update expressivity relates in interesting ways. We build upon the results known from \cite{kooirenne} but our constructions and proofs are slightly different. First we need to define action models and their execution in relational models. An action model \cite{baltagetal:1998} is a structure like a relational model but with a precondition function instead of a valuation function. Executing an action model into a relational model means computing what is known as their restricted modal product. This product encodes the new state of information, after action execution. These are the technicalities:

An {\em action model} $E = (\Actions, \arel, \pre)$ consists of a {\em domain} $\Actions$ of {\em actions} $e,f,\dots$, an {\em accessibility function} $\arel: \Agents \imp {\mathcal P}(\Actions \times \Actions)$, where each $\arel_\agent$ is an accessibility relation, and a {\em precondition function} $\pre: \Actions \imp \lang$, where $\lang$ is a logical language.

Let additional to a pointed action model $(E,e)$ as above a pointed relational model $(M,\state)$ be given where $M = (\States, R, V)$. Let $M,\state \models \pre(\actiona)$. The update $(M \otimes E, (\state,e))$ is the pointed relational model where $M \otimes E = (\States', R', V')$ such that \[ \begin{array}{lcl}  \States' & = & \{ (\stateb,f) \mid M,\stateb \models \pre(f) \} \\ ((\stateb,f),(\stateb',f')) \in  R'_a & \text{iff} & (\stateb,\stateb') \in R_a \text{ and } (f,f') \in \arel_a \\ (\stateb,f) \in V'(\atom) & \text{iff} & \stateb \in V(\atom)
\end{array} \]
Action model modalities $[E,e]$ are interpreted similarly to arrow update modalities but unlike arrow update modalities are partial and not functional. Their execution depends on the truth of the precondition of the actual action (point) $e$ in the actual state $s$:
\[ M,s\models [E,e]\phi \text{ \ \ iff \ \ } M,s \models \pre(e) \text{ implies } M \otimes E, (s,e) \models \phi \]
Similarly to arrow update modalities we can conceive a modal logical language with $[E,e]\phi$ as an inductive language construct, for action models $E$ with finite domains. The logic is called {\em action model logic} AML. And also similarly we define multi-pointed action models by notational abbreviation, and informally consider such modalities as logical connectives binding formulas. 
As for AUML (see \cite{kooirenne} and Section \ref{sec.axiomatization}), there is a complete axiomatization, that is a rewrite system allowing to eliminate dynamic modalities \cite{baltagetal:1998,hvdetal.del:2007}. If we further extend the logical language with a quantifier $[\otimes]$ over action models, such that  
\[ M,s\models [\otimes]\phi \text{ \ \ iff \ \ } M,s \models [E,e]\phi \text{ for all action models } (E,e) \text{ satisfying } (*) \]
where (*) requires all preconditions of actions in $E$ to be in $\langml$, we get the language and logic of {\em arbitrary action model logic} AAML. Hales showed in \cite{hales2013arbitrary} that the (*) requirement can be relaxed, similarly to our Proposition \ref{prop:fully_arbitrary}. If such a distinction is necessary, the action model logics with only single-pointed action models are AML$_1$ and AAML$_1$. 
 
Example action models that are update equivalent to the example arrow update models of Section \ref{sec.aauml} are depicted in Figure~\ref{fig.actionmodelex}. We also depict their execution. The actions are given their preconditions as names. Note that the pointed relational model resulting from the (second) action of Anne privately learning that $p$ is bisimilar to the four-state model in Section \ref{sec.aauml} (Figure~\ref{fig.below}).

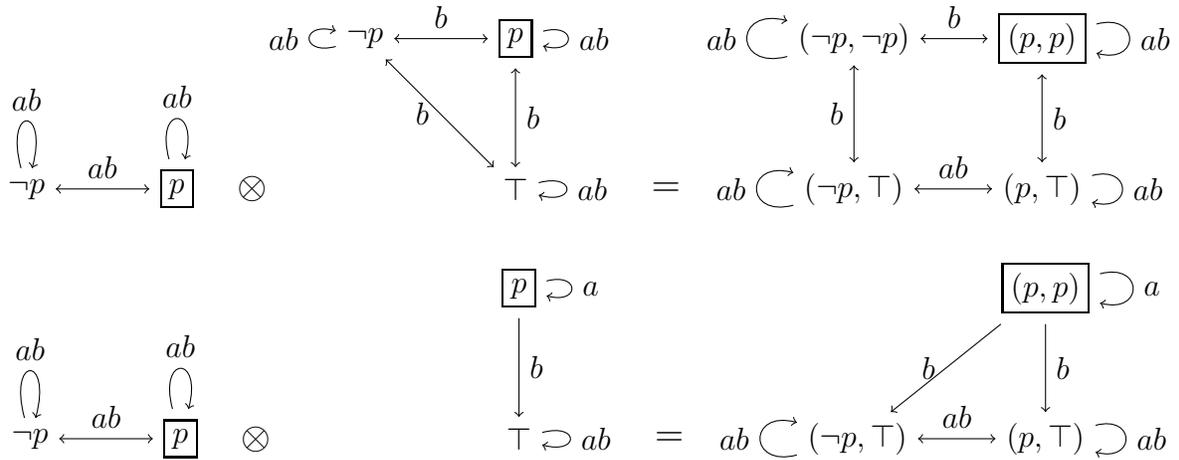
\begin{figure}[h]
\center
\begin{tikzpicture}
\node (0) at (0,0) {$\neg p$};
\node (1) at (2,0) {$\fbox{$p$}$};
\draw[<->] (0) -- node[above] {$ab$} (1);
\draw[->] (0) edge[loop above,looseness=15] node[above] {$ab$} (0); 
\draw[->] (1) edge[loop above,looseness=9] node[above] {$ab$} (1); 
\node (t) at (3,0) {\large $\otimes$};
\node (0r) at (6.5,0) {$\T$};
\node (1lr) at (4.5,2) {$\neg p$};
\node (1r) at (6.5,2) {$\fbox{$p$}$};
\draw[<->] (0r) -- node[right] {$b$} (1r);
\draw[<->] (0r) -- node[left] {$b$} (1lr);
\draw[<->] (1r) -- node[above] {$b$} (1lr);
\draw[->] (0r) edge[loop right,looseness=8] node[right] {$ab$} (0r); 
\draw[->] (1r) edge[loop right,looseness=6] node[right] {$ab$} (1r); 
\draw[->] (1lr) edge[loop left,looseness=6] node[left] {$ab$} (1lr); 
\node (t) at (8.5,0) {\large $=$};
\node (0r) at (11,0) {$(\neg p, \T)$};
\node (1r) at (13.5,0) {$(p, \T)$};
\node (0ra) at (11,2) {$(\neg p, \neg p)$};
\node (1ra) at (13.5,2) {$\fbox{$(p, p)$}$};
\draw[<->] (0r) -- node[above] {$ab$} (1r);
\draw[<->] (0ra) -- node[above] {$b$} (1ra);
\draw[<->] (0r) -- node[left] {$b$} (0ra);
\draw[<->] (1r) -- node[right] {$b$} (1ra);
\draw[->] (0r) edge[loop left,looseness=4] node[left] {$ab$} (0r); 
\draw[->] (1r) edge[loop right,looseness=4] node[right] {$ab$} (1r); 
\draw[->] (0ra) edge[loop left,looseness=4] node[left] {$ab$} (0ra); 
\draw[->] (1ra) edge[loop right,looseness=4] node[right] {$ab$} (1ra); 
\end{tikzpicture}

\bigskip

\noindent
\begin{tikzpicture}
\node (0) at (0,0) {$\neg p$};
\node (1) at (2,0) {$\fbox{$p$}$};
\draw[<->] (0) -- node[above] {$ab$} (1);
\draw[->] (0) edge[loop above,looseness=15] node[above] {$ab$} (0); 
\draw[->] (1) edge[loop above,looseness=9] node[above] {$ab$} (1); 
\node (t) at (3,0) {\large $\otimes$};
\node (0r) at (6.5,0) {$\T$};
\node (1r) at (6.5,2) {$\fbox{$p$}$};
\draw[<-] (0r) -- node[right] {$b$} (1r);
\draw[->] (0r) edge[loop right,looseness=8] node[right] {$ab$} (0r); 
\draw[->] (1r) edge[loop right,looseness=6] node[right] {$a$} (1r); 
\node (t) at (8.5,0) {\large $=$};
\node (0r) at (11,0) {$(\neg p, \T)$};
\node (1r) at (13.5,0) {$(p, \T)$};
\node (1ra) at (13.5,2) {$\fbox{$(p, p)$}$};
\draw[<->] (0r) -- node[above] {$ab$} (1r);
\draw[<-] (1r) -- node[right] {$b$} (1ra);
\draw[<-] (0r) -- node[left] {$b$} (1ra);
\draw[->] (0r) edge[loop left,looseness=4] node[left] {$ab$} (0r); 
\draw[->] (1r) edge[loop right,looseness=4] node[right] {$ab$} (1r); 
\draw[->] (1ra) edge[loop right,looseness=4] node[right] {$a$} (1ra); 
\end{tikzpicture}
\caption{Examples of action model execution}
\label{fig.actionmodelex}
\end{figure}

\subsection{From action models to arrow updates}
A given action model can be transformed into an arrow update model by decorating each arrow in the action model with a source condition that is the precondition of the source action and with a target condition that is the precondition of the target action. That is all. Technically:

Let $E = (\Actions,\arel,\pre)$ be given. Arrow update model $U(E) = (\austates,\aufunction)$ is defined as: $\austates = \Actions$, and for all agents $a$ and actions $e,e'$, $(e,\pre(e))\imp_\agent(e',\pre(e'))$ iff $(e,e') \in \arel_a$.\footnote{In \cite{kooirenne}, arrows $(e,\T)\imp_\agent(e',\phi')$ instead of $(e,\phi)\imp_\agent(e',\phi')$ are stipulated. Both constructions deliver the desired update equivalence.} We can now show that $(E,e)$ is update equivalent to $(U(E),e)$, on condition of the executability of the action $e$, i.e., restricted to the denotation of $\pre(e)$.

\begin{proposition}[\cite{kooirenne}] \label{prop.actiontoarrow} $(E,e)$ is conditionally update equivalent to $(\aumodel(E),e)$.
\end{proposition}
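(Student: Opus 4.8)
The plan is to exhibit an explicit bisimulation between the two updated pointed models and to verify the three bisimulation clauses directly from the product definition $(\sharp)$ and the action-model product. First I would unpack the statement. By definition, $(E,e)$ is conditionally update equivalent to $(\aumodel(E),e)$ iff for every $(M,s)$ in the domain of the update $(E,e)$ --- that is, every pointed model with $M,s\models\pre(e)$, so that the action $e$ is executable --- we have $(M\otimes E,(s,e))\bisim(M*\aumodel(E),(s,e))$. It is worth stressing at the outset that the restriction to $\Domain((E,e))$ is essential and that the converse conditional equivalence is \emph{not} being claimed: the arrow update $\aumodel(E)$ induces a total function on pointed models (its product is always defined), whereas $(E,e)$ induces only a partial function, so $\Domain((\aumodel(E),e))$ is in general strictly larger.

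Next I would fix such an $(M,s)$ with $M=(\States,R,V)$ and $M,s\models\pre(e)$. The domain of $M\otimes E$ is $\{(t,f)\in\States\times\Actions\mid M,t\models\pre(f)\}$, whereas the domain of $M*\aumodel(E)$ is all of $\States\times\Actions$ (recall that $\aumodel(E)$ has outcome set $\austates=\Actions$). I would take as candidate the graph of the inclusion of the former domain into the latter, namely
\[ \bisrel = \{\, ((t,f),(t,f)) \mid M,t\models\pre(f) \,\}. \]
This relation is nonempty, since $M,s\models\pre(e)$ gives $(s,e)\in\Domain(M\otimes E)$, and it links the two distinguished points $(s,e)$ and $(s,e)$, as required.

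It then remains to check the three clauses for $\bisrel$. The \textbf{atoms} clause is immediate: in $M\otimes E$ we have $(t,f)\in V'(p)$ iff $t\in V(p)$, and in $M*\aumodel(E)$ we have $(t,f)\in V''(p)$ iff $t\in V(p)$, since there $V''(p)=V(p)\times\austates$. For \textbf{forth} and \textbf{back} the only real content is the observation that, on $\Domain(M\otimes E)$, the two accessibility relations coincide. Indeed $(t,f)R'_a(t',f')$ in $M\otimes E$ iff $(t,t')\in R_a$ and $(f,f')\in\arel_a$, where both endpoints already lie in the domain, i.e.\ $M,t\models\pre(f)$ and $M,t'\models\pre(f')$. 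On the other hand, by the construction of $\aumodel(E)$ the only $a$-arrow from $f$ to $f'$ is $(f,\pre(f))\imp_a(f',\pre(f'))$, and it is present iff $(f,f')\in\arel_a$; so by $(\sharp)$, $(t,f)R''_a(t',f')$ in $M*\aumodel(E)$ iff $(t,t')\in R_a$, $(f,f')\in\arel_a$, $M,t\models\pre(f)$ and $M,t'\models\pre(f')$. Hence for $(t,f)\in\Domain(M\otimes E)$ the $R'_a$- and $R''_a$-successors are literally the same pairs, and any such successor $(t',f')$ automatically satisfies $M,t'\models\pre(f')$, so it lies in $\Domain(M\otimes E)$ and is $\bisrel$-related to itself. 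Both \textbf{forth} and \textbf{back} follow at once. (If one instead uses the variant arrows $(f,\top)\imp_a(f',\pre(f'))$ mentioned in the footnote, the same argument goes through, the source condition $M,t\models\pre(f)$ being available precisely because attention is restricted to $\Domain(M\otimes E)$.)

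I do not expect a genuine obstacle here: the proposition is essentially a bookkeeping exercise, and the work done in the main text (choosing the source and target conditions of $\aumodel(E)$ to be the preconditions of the source and target actions) is exactly what makes $R'_a$ and $R''_a$ agree. The only points requiring care are to keep the two product domains apart --- the arrow update product never discards a pair, the action product does --- and to notice that a \emph{one}-directional conditional equivalence is all that is asserted, for precisely that reason.
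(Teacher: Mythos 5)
Your proof is correct and takes essentially the same route as the paper's: both use the (identity/inclusion) relation on $\Domain(M\otimes E)$ as the bisimulation, observe that the two product accessibility relations coincide there because the source and target conditions of $\aumodel(E)$ are exactly the preconditions, and note for \textbf{back} that any successor in $M*\aumodel(E)$ of a pair in the good part must itself satisfy its precondition and hence exist in $M\otimes E$. The only cosmetic difference is that the paper primes the outcomes of $\aumodel(E)$ to keep the two copies of $\Actions$ notationally apart.
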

\begin{proof} Let $M = (\States,R,V)$, $M\otimes E = (S',R',V')$ and $M* U(E) = (S'',R'',V'')$.
Define the relation $\bisrel$ by $\bisrel = \{((s,e),(s,e))\in S'\times S''\mid (s,e)\in S'\}$. We will show that if $(M,s)\in\dom(E,e)$, so if $M,s\models \pre(e)$, then $\bisrel$ is a bisimulation linking $(s,e)\in S'$ to $(s,e)\in S''$.

Take any $((s_1,f_1),(s_1,f_1))\in\bisrel$. The {\bf atoms} clause is trivially satisfied as the states $s$ match (and updates do not change facts). We now consider {\bf forth}. Suppose $(s_1,f_1)R_a'(s_2,f_2)$. Given the definition of action model execution, this implies that $(s_1,s_2)\in R_a$, $(f_1,f_2)\in \arel_a$, $M,s_1\models \pre(f_1)$ and $M,s_2\models \pre(f_2)$. Because of how we constructed $U(E)$, it follows from $(f_1,f_2)\in \arel_a$ that $(f_1,\pre(f_1))\imp_a (f_2,\pre(f_2))$. Then from $M,s_1\models \pre(f_1)$ and $M,s_2\models \pre(f_2)$ it follows that $(s_1,f_1)R_a''(s_2,f_2)$. Also, $((s_2,f_2)(s_2,f_2)) \in \bisrel$ by the definition of the relation $\bisrel$. So {\bf forth} is satisfied. Finally, we consider {\bf back}. Suppose $(s_1,f_1)R_a''(s_2,f_2)$. Then we must have $(f_1,\pre(f_1))\imp_a (f_2,\pre(f_2))$ with $(f_1,f_2)\in \arel_a$, since $U(E)$ only contains arrows of that type. It follows that $M,s_1\models \pre(f_1)$ and $M,s_2\models \pre(f_2)$, so $(s_1,f_1)R_a'(s_2,f_2)$. Again, obviously, $((s_2,f_2)(s_2,f_2)) \in \bisrel$. The {\bf back} condition is therefore also satisfied.

We have now shown that $\bisrel$ is a bisimulation. Since $(s,e)\bisrel (s,e)$ this implies that $M\otimes E,(s,e)\bisim M*U(E),(s,e)$. This holds for every $(M,s)\in\dom(E,e)$, so $(E,e)$ is conditionally update equivalent to $(\aumodel(E),e)$.

\end{proof}

\begin{corollary} \label{cor.actiontoarrow} Let $F \subseteq \Domain(E)$. Then $(E,F)$ is conditionally update equivalent to $(\aumodel(E),F)$.
\end{corollary}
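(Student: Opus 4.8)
The plan is to reduce this multi-pointed statement to the single-pointed Proposition~\ref{prop.actiontoarrow}, exploiting that both the execution of a (multi-pointed) action model or arrow update model and the notion of bisimulation between multi-pointed models are defined \emph{pointwise}. Write $X$ for the update relation of $(E,F)$ and $Y$ for the update relation of $(\aumodel(E),F)$. Since $[E,F]\phi := \Et_{e\in F}[E,e]\phi$ and $[\aumodel(E),F]\phi := \Et_{e\in F}[\aumodel(E),e]\phi$, for $(M,s)\in\Domain(X)$ we have $X(M,s) = \{\, (M\otimes E,(s,e)) \mid e\in F,\ M,s\models\pre(e)\,\}$ and, likewise, $Y(M,s)$ consists exactly of the outcomes $(s,e)$ with $e\in F$ and $M,s\models\pre(e)$, now taken in $M*\aumodel(E)$: the precondition restriction on the arrow-update side is harmless here because an outcome $(s,e)$ with $M,s\not\models\pre(e)$ is a dead state of $M*\aumodel(E)$ and does not correspond to an actual execution --- exactly as in the single-pointed case. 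Crucially, the transformed models $M\otimes E$ and $M*\aumodel(E)$ are the very same models as in Proposition~\ref{prop.actiontoarrow}; multi-pointedness only changes which outcomes are designated.

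Next I would fix $(M,s)\in\Domain(X)$ and establish $X(M,s)\bisim Y(M,s)$ by checking the two halves of the multi-pointed bisimulation clause. For ``forth'', take any $(M\otimes E,(s,e))\in X(M,s)$; then $M,s\models\pre(e)$, and Proposition~\ref{prop.actiontoarrow} applied to the single designated action $e$ yields a bisimulation $\bisrel_e$ linking $(M\otimes E,(s,e))$ to $(M*\aumodel(E),(s,e))\in Y(M,s)$. For ``back'', take $(M*\aumodel(E),(s,e))\in Y(M,s)$; again $M,s\models\pre(e)$, and the same $\bisrel_e$ links it back to $(M\otimes E,(s,e))\in X(M,s)$. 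This already gives the result, and hence ``$(E,F)$ is conditionally update equivalent to $(\aumodel(E),F)$''. If one prefers a single witnessing relation, set $\bisrel := \Union_{e\in F,\, M,s\models\pre(e)} \bisrel_e$, i.e.\ the restriction of the map $(t,f)\mapsto(t,f')$ from the proof of Proposition~\ref{prop.actiontoarrow} to the pairs reachable from the designated outcomes; the \textbf{atoms}, \textbf{forth} and \textbf{back} clauses for $\bisrel$ then follow verbatim from that proof, because each involves only a local condition on a single arrow.

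The step I expect to be the main obstacle is the domain bookkeeping, and it is exactly the one already dealt with in Proposition~\ref{prop.actiontoarrow}, only now with several designated points: $M*\aumodel(E)$ carries the full product $\States\times\Actions$ and therefore contains ``surplus'' pairs with no counterpart in the restricted product $M\otimes E$ --- both pairs unreachable from every designated outcome and pairs $(s,e)$ with $e\in F$ but $M,s\not\models\pre(e)$. As in the single-pointed case, the first kind simply lie outside the carrier of $\bisrel$, and the second kind are dead-state outcomes that the conditional notion of update equivalence excludes from consideration. Once this is settled, the corollary is a routine pointwise lift of Proposition~\ref{prop.actiontoarrow}.
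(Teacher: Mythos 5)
The paper offers no explicit proof of this corollary; it is meant to be the immediate pointwise lift of Proposition~\ref{prop.actiontoarrow} via the pointwise definitions of multi-pointed execution and multi-pointed bisimulation, and your overall strategy is exactly that intended route. The half going from $X(M,s)$ into $Y(M,s)$, and the treatment of executable designated actions, are fine.

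The genuine gap is in your ``back'' half, at precisely the point you wave away in the last paragraph. For $e\in F$ with $M,s\not\models\pre(e)$, the pointed model $(M*\aumodel(E),(s,e))$ \emph{is} an element of $Y(M,s)$: arrow update execution is total, and $[\aumodel(E),F]\phi=\Et_{e\in F}[\aumodel(E),e]\phi$ ranges over all of $F$. So the clause ``for every designated point of $Y(M,s)$ there is a bisimilar designated point of $X(M,s)$'' must be checked for it, and your step ``take $(M*\aumodel(E),(s,e))\in Y(M,s)$; again $M,s\models\pre(e)$'' silently assumes what need not hold. Conditional update equivalence does not excuse this: by the paper's definition it only restricts the set of \emph{input} models to $\Domain(X)$, i.e.\ to $(M,s)$ with $M,s\models\Vel_{e\in F}\pre(e)$; it says nothing about discarding designated outcomes of $Y$. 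Nor is such an outcome innocuous: since every arrow leaving $e$ in $\aumodel(E)$ carries source condition $\pre(e)$, the state $(s,e)$ is a dead end in $M*\aumodel(E)$, and a dead end need not be bisimilar to any $(M\otimes E,(s,e'))$ with $e'\in F$ executable. Concretely, let $E$ have an $\arel_a$-reflexive action $e_1$ with $\pre(e_1)=p$ and an isolated action $e_2$ with $\pre(e_2)=\neg p$, let $F=\{e_1,e_2\}$, and let $M$ be a single reflexive $p$-state $s$: then $X(M,s)$ is a singleton whose point has a reflexive $a$-arrow, while $Y(M,s)$ contains the dead end $(s,e_2)$. A complete proof therefore needs either an explicit argument handling the non-executable members of $F$ (e.g.\ restricting the designated set on the arrow-update side to the executable part of $F$, or reading the equivalence one-directionally), rather than the appeal to conditionality; as it stands this wrinkle sits in the corollary's statement as much as in your proof, but it cannot be dismissed the way you do.
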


From Proposition \ref{prop.actiontoarrow} follows that, for all $\phi\in\langml$, $[E,e]\phi$ is equivalent to $\pre(e) \imp [\aumodel(E),e]\phi$. See also \cite{kooirenne}[Cor.\ 3.9]. This can be used as a clause in an inductively defined translation from the language of AUML to the language of AML. In Corollary \ref{cor.actiontoarrow} the condition for update equivalence is $\Vel_{e\in F} \pre(e)$.

The arrow update models constructed by the above procedure from the action model for Anne reading the letter containing $p$ while Bill may notice her doing so, and for Anne privately learning $p$, are as follows. Note that they are update equivalent (namely on their domain of execution) to the arrow update models for these actions presented in Section \ref{sec.aauml}. In the figure, by $\phi \ _{ab} \ \phi'$ we mean the two arrows  $\phi \imp_a \phi', \phi \imp_b \phi'$. On the left, the dual arrows for $b$ between outcomes have not been labelled. They are as expected: $\T \imp_b \neg p$, $p \imp_b \neg p$, $p \imp_b \T$. (They are update equivalent to the example arrow update models in Section \ref{sec.aaumlex} and Figure~\ref{fig.below}.)

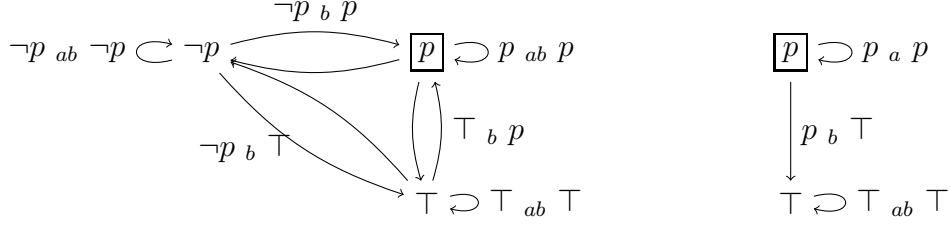
\begin{figure}[h]
\center
\begin{tikzpicture}
\node (0r) at (8,0) {$\T$};
\node (1lr) at (5,2) {$\neg p$};
\node (1r) at (8,2) {$\fbox{$p$}$};
\draw[->,bend right=15] (0r) to node[right] {$\T \ _b \ p$} (1r);
\draw[->,bend right=15] (0r) to (1lr); 
\draw[->,bend left=15] (1r) to (1lr); 
\draw[<-,bend left=15] (0r) to (1r); 
\draw[<-,bend left=15] (0r) to node[left] {$\neg p \ _b \ \T$} (1lr);
\draw[<-,bend right=15] (1r) to node[above] {$\neg p \ _b \ p$} (1lr);
\draw[->] (0r) edge[loop right,looseness=8] node[right] {$\T \ _{ab} \ \T$} (0r); 
\draw[->] (1r) edge[loop right,looseness=8] node[right] {$p \ _{ab} \ p$} (1r); 
\draw[->] (1lr) edge[loop left,looseness=8] node[left] {$\neg p \ _{ab} \ \neg p$} (1lr); 
\end{tikzpicture}
\hspace{2cm}
\begin{tikzpicture}
\node (0r) at (5,0) {$\T$};
\node (1r) at (5,2) {$\fbox{$p$}$};
\draw[<-] (0r) -- node[right] {$p \ _b \ \T$} (1r);
\draw[->] (0r) edge[loop right,looseness=9] node[right] {$\T \ _{ab} \ \T$} (0r); 
\draw[->] (1r) edge[loop right,looseness=8] node[right] {$p \ _a \ p$} (1r); 
\end{tikzpicture}
\caption{From action models to arrow updates --- example}
\label{fig.actiontoarrow}
\end{figure}

\subsection{From arrow updates to action models}

A given arrow update model can be transformed into an update equivalent action model by conditionalizing in each outcome over any possible `valuation' of (any subset of) the source and target conditions of all outcomes. This leads to an exponential blowup. (See \cite[Theorem 3.7]{kooirenne}. Our construction and subsequent proof are different.)  We proceed with the construction.

Let $U = (\austates,\aufunction)$ be given. Let $\Phi$ be the collection of all source and target conditions occurring in $U$: \[ \Phi = \{ \phi \mid \text{there are } a \in \Agents, \phi' \in \lang, o,o' \in \austates \text{ s.t.\ } (o,\phi)\imp_\agent(o',\phi') \text{ or } (o,\phi')\imp_\agent(o',\phi)\}.\] We consider the formulas in $\Phi$ as `atomic constituents' over which we consider `valuations' $v \in 2^\Phi$ (lower case, to distinguish it from the relational model valuation $V$, upper case). The characteristic formula of a valuation is $\delta_v := \Et_{\phi\in\Phi} \overline{\phi}$, where $\overline{\phi} = \phi$ if $v(\phi) = 1$ and $\overline{\phi} = \neg\phi$ if $v(\phi) = 0$. Action model $E(\aumodel) = (\Actions,\arel,\pre)$ is now such that: \[\begin{array}{lcl} \Actions &=& \austates \times 2^\Phi \\ ((o,v), (o',v')) \in \arel_a & \text{iff} & \is\phi,\phi' : (o,\phi)\imp_\agent(o',\phi'), v(\phi) = 1, v'(\phi') = 1 \\ \pre(o,v) &=& \delta_v \end{array} \]
Further, given $(U,o)$, its single point $o$ becomes a set of actions $E(o) := \{o\} \times 2^\Phi$. The corresponding action model $(E(U),E(o))$ is therefore multi-pointed (unless $\Phi = \{\T\}$ or $\Phi = \{\F\}$). We note that the preconditions of actions need not be consistent formulas, just as source and target conditions of arrows need not be consistent formulas. Our construction is therefore different from that in \cite{kooirenne}, wherein only $v \in 2^\Phi$ are considered for which $\delta_v$ is consistent. That construction is more economical, but computational efficiency is not our goal. We can now show that $(U,o)$ is update equivalent to $(E(U),E(o))$. Note that they both can be executed on the entire domain.

\begin{proposition} \label{prop.arrowtoaction} $(\aumodel,o)$ is update equivalent to $(E(\aumodel),E(o))$.
\end{proposition}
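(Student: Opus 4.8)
The plan is to show, for an arbitrary pointed relational model $(M,s)$ with $M=(\States,R,V)$, that $M * \aumodel$ pointed at $(s,o)$ is bisimilar to $M \otimes E(\aumodel)$ pointed at the right outcome, and to observe that this single statement already yields both directions of conditional update equivalence (since both updates are total on pointed models). The first thing to nail down is the behaviour of the valuations $v \in 2^\Phi$. For every $t \in \States$ there is exactly one $v_t \in 2^\Phi$ with $M,t \models \delta_{v_t}$, namely the one given by $v_t(\phi)=1 \iff M,t\models\phi$; and $\pre(o',v_t)=\delta_{v_t}$ is then true at $t$. Two consequences follow immediately. First, the domain of the restricted product $M\otimes E(\aumodel)$ is exactly $\{(t,(o',v_t)) \mid t\in\States,\ o'\in\austates\}$, which is in bijection with $\States\times\austates=\Domain(M*\aumodel)$ via $(t,o')\mapsto(t,(o',v_t))$ --- so, unlike the situation in Proposition~\ref{prop.actiontoarrow}, there are no surplus unreachable states on either side. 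Second, the multi-pointed designation $E(o)=\{o\}\times 2^\Phi$ collapses: when executed at $(M,s)$, the only action in $E(o)$ whose precondition holds at $s$ is $(o,v_s)$, so the update associated with $(E(\aumodel),E(o))$ relates $(M,s)$ to the single pointed model $(M\otimes E(\aumodel),(s,(o,v_s)))$. Hence it suffices to prove $(M*\aumodel,(s,o))\bisim (M\otimes E(\aumodel),(s,(o,v_s)))$ for every $(M,s)$.

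For this I take the bijection above as the candidate bisimulation, $\bisrel = \{\,((t,o'),(t,(o',v_t))) \mid t\in\States,\ o'\in\austates\,\}$. The {\bf atoms} clause is immediate, since both products carry the valuation of $M$ on the first coordinate. For {\bf forth}, suppose $((t,o'),(t',o''))\in R'_a$ in $M*\aumodel$; unfolding $(\sharp)$ this means $(t,t')\in R_a$ and there are $\phi,\phi'$ with $(o',\phi)\imp_a(o'',\phi')$, $M,t\models\phi$ and $M,t'\models\phi'$. Then $v_t(\phi)=1$ and $v_{t'}(\phi')=1$, so by the definition of $\arel_a$ in $E(\aumodel)$ we get $((o',v_t),(o'',v_{t'}))\in\arel_a$, and together with $(t,t')\in R_a$ this puts $((t,(o',v_t)),(t',(o'',v_{t'})))$ in the accessibility relation of $M\otimes E(\aumodel)$; since $(t',(o'',v_{t'}))=\bisrel(t',o'')$, this is the required matching successor. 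For {\bf back} the argument runs in the opposite direction: any $a$-successor of $(t,(o',v_t))$ in $M\otimes E(\aumodel)$ is of the form $(t',(o'',v_{t'}))$ (the valuation is forced by $t'$), and its existence witnesses $(t,t')\in R_a$ together with some $(o',\psi)\imp_a(o'',\psi')$ with $v_t(\psi)=1$, $v_{t'}(\psi')=1$, i.e.\ $M,t\models\psi$ and $M,t'\models\psi'$ --- which are precisely the conditions making $((t,o'),(t',o''))\in R'_a$ in $M*\aumodel$.

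Finally, since $\bisrel$ links $(s,o)$ with $(s,(o,v_s))$, and $(s,(o,v_s))$ is the unique point of $(s,E(o))$ that survives in $M\otimes E(\aumodel)$, we obtain $(\aumodel,o)(M,s)\bisim(E(\aumodel),E(o))(M,s)$ for every $(M,s)$; this statement is symmetric in the two updates, and as both have all pointed models in their domain, it gives conditional update equivalence in both directions, i.e.\ update equivalence. The step requiring the most care is the preliminary bookkeeping: establishing that the domain of the restricted product is exactly the graph of $t\mapsto v_t$ over $\austates$ (so that $\bisrel$ is a genuine total bisimulation) and that the multi-pointed designation collapses to a single executable action; once that is in place, the {\bf forth}/{\bf back} checks are a routine unfolding of $(\sharp)$, the action-model product, and the definition of $\arel_a$, with the only recurring move being the equivalence $v_t(\phi)=1 \iff M,t\models\phi$.
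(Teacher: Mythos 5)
Your proof is correct and follows essentially the same route as the paper's: the same relation $(t,o')\mapsto(t,(o',v_t))$ with $v_t$ the unique valuation satisfying $M,t\models\delta_{v_t}$, the same \textbf{forth}/\textbf{back} unfolding via the definition of $\arel_a$, and the same observation that $\Vel_{v\in 2^\Phi}\delta_v$ is a tautology so the product is executable everywhere. Your only addition is making explicit the bookkeeping that the paper leaves implicit (that the restricted product's domain is exactly the graph of $t\mapsto v_t$ and that only one action in $E(o)$ survives at each state), which is a welcome clarification but not a different argument.
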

\begin{proof} Let $M=(S,R,V)$ and $s\in S$ be given, and let $M*U=(S',R',V')$ and $M\otimes E(U) = (S'',R'',V'')$. We show that for some $(o,v) \in E(o)$, $(M * U, (s,o)) \bisim (M \otimes E(U),(s,o,v))$. Define relation $\bisrel$ as follows: \[\begin{array}{lcl} (s,o)\bisrel (s,o,v) & \text{ iff } & M,s \models \delta_v\end{array} \] We show that $\bisrel$ is a bisimulation.

For {\bf forth}, assume $((s_1,o_1),(s_1,o_1,v_1)) \in \bisrel$ and $((s_1,o_1),(s_2,o_2)) \in R'_a$. The latter implies that there are $\phi_1,\phi_2\in\lang$ such that $(o_1,\phi_1) \imp_a (o_2,\phi_2)$, and that $M,s_1\models\phi_1$ and $M,s_2\models\phi_2$. Choose $v_1$ and $v_2$ such that $M,s_1\models \delta_{v_1}$ and $M,s_2 \models \delta_{v_2}$ (note that $v_1$ and $v_2$ exist and are unique). As $M,s_1\models \phi_1$ and $M,s_1\models \delta_{v_1}$ we have $v_1(\phi_1)=1$ and as $M,s_2\models\phi_2$ and $M,s_2\models \delta_{v_2}$ we have $v_2(\phi_2)=1$.
We claim that $(s_2,o_2,v_2)$ is the requested witness to close the {\bf forth} argument. Firstly, $((s_1,o_1,v_1),(s_2,o_2,v_2))\in R''_a$ because $(s_1,s_2) \in R_a$ and $((o_1,v_1),(o_2,v_2)) \in \arel_a$, where the latter follows from $(o_1,\phi_1) \imp_a (o_2,\phi_2)$, $v_1(\phi_1)=1$ and $v_2(\phi_2)=1$ (see the definition of $E(U)$ above). Secondly, $((s_2,o_2),(s_2,o_2,v_2) \in \bisrel$. Step {\bf back} is very similar; now use that $\pre(o_1,v_1) = \delta_{v_1}$ and that $\pre(o_2,v_2) = \delta_{v_2}$, and observe that $(E(U),E(o))$ can always be executed, it has precondition  $\Vel_{v \in 2^\Phi} \delta_v$, which is equivalent to $\T$.  
\end{proof}

\begin{corollary} \label{cor.arrowtoaction} Let now $Q \subseteq \Domain(\aumodel)$. Then $(\aumodel,Q)$ is update equivalent to $(E(\aumodel),E(Q))$.
\end{corollary}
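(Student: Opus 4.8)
The plan is to lift Proposition~\ref{prop.arrowtoaction} pointwise over the designated set $Q$; the corollary is then essentially immediate. First I would spell out the two updates as relations between pointed relational models. Writing $X$ for the update induced by $(\aumodel,Q)$, unfolding the abbreviation for multi-pointed modalities gives $X(M,s) = \{(M * \aumodel,(s,o)) \mid o \in Q\}$ --- a set of points all lying in the single model $M * \aumodel$. Writing $Y$ for the update induced by $(E(\aumodel),E(Q))$ and recalling $E(Q) = Q \times 2^\Phi$, we get $Y(M,s) = \{(M \otimes E(\aumodel),(s,o,v)) \mid o \in Q,\ M,s \models \delta_v\}$; since for each $(M,s)$ there is a unique valuation $v$ with $M,s \models \delta_v$ --- denote it $v_s$ --- this equals $\{(M \otimes E(\aumodel),(s,o,v_s)) \mid o \in Q\}$. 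Both updates have the full class of pointed models as domain (the arrow update model is functional, and $\bigvee_{v \in 2^\Phi}\delta_v$ is equivalent to $\top$), so update equivalence reduces to showing $X(M,s) \bisim Y(M,s)$ for every $(M,s)$, in the sense of bisimilarity between multi-pointed models from Section~\ref{sec.structures}.

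Next I would invoke Proposition~\ref{prop.arrowtoaction}: for a fixed $(M,s)$, inspecting its proof (the relation $\bisrel$) shows $(M * \aumodel,(s,o)) \bisim (M \otimes E(\aumodel),(s,o,v_s))$, with the same witnessing valuation $v_s$ for every outcome $o$. Hence, given any $(M * \aumodel,(s,o)) \in X(M,s)$ with $o \in Q$, the point $(M \otimes E(\aumodel),(s,o,v_s)) \in Y(M,s)$ (as $(o,v_s) \in E(Q)$) is bisimilar to it; and conversely every member of $Y(M,s)$ has the form $(M \otimes E(\aumodel),(s,o,v_s))$ with $o \in Q$ --- the precondition $\delta_v$ forces $v = v_s$ --- and is bisimilar to $(M * \aumodel,(s,o)) \in X(M,s)$ by the same proposition. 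Both directions of the multi-pointed bisimilarity clause are thus satisfied, so $X(M,s) \bisim Y(M,s)$; as $(M,s)$ was arbitrary, $X$ and $Y$ are update equivalent.

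There is no real obstacle here beyond bookkeeping; the only point deserving a moment's care is lining up the definition of update equivalence for multi-pointed updates with the clause ``$(M,T) \bisim (M',T')$ iff for every $s \in T$ there is $s' \in T'$ with $(M,s) \bisim (M',s')$, and vice versa'', and noticing that the witness valuation $v_s$ is determined by $(M,s)$ alone, independently of the choice of $o \in Q$, so that the two directions genuinely match up. The argument is the exact analogue of Corollary~\ref{cor.actiontoarrow}, which lifts Proposition~\ref{prop.actiontoarrow} in the same way, there with the nondegenerate executability condition $\bigvee_{e \in F}\pre(e)$.
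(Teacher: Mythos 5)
Your proposal is correct and matches the paper's intent exactly: the paper states Corollary~\ref{cor.arrowtoaction} without proof as an immediate consequence of Proposition~\ref{prop.arrowtoaction}, and your argument is precisely that pointwise lifting, with the only genuinely relevant observations (the unique witnessing valuation $v_s$ depends only on $(M,s)$, both updates are total, and the multi-pointed bisimilarity clause then follows in both directions) correctly identified. Nothing is missing.
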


As an example we now show the action models constructed by the above procedure from the two example arrow update models of Section \ref{sec.aauml}. The set of source and target condition formulas is $\{p,\neg p, \T\}$. Of their $8$ valuations the two non-trivial (and different) valuations are characterized by $p$ and $\neg p$. These formulas are also the action preconditions in the action points of the resulting action models. The reader may observe that these action models are, again, update equivalent to their `original' action models at the start of this section.

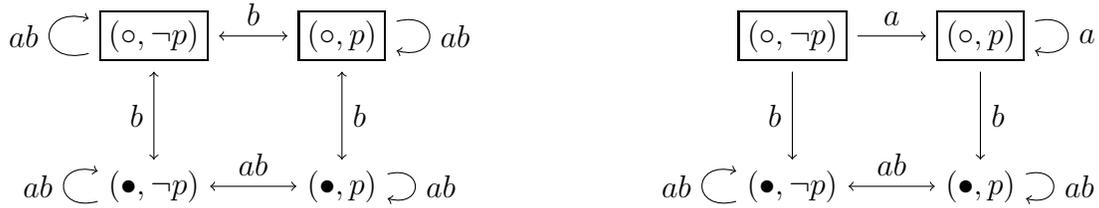
\begin{figure}[h]
\center
\begin{tikzpicture}
\node (0r) at (9.5,0) {$(\bullet,\neg p)$};
\node (1r) at (12,0) {$(\bullet, p)$};
\node (0ra) at (9.5,2) {$\fbox{$(\circ, \neg p)$}$};
\node (1ra) at (12,2) {$\fbox{$(\circ,p)$}$};
\draw[<->] (0r) -- node[above] {$ab$} (1r);
\draw[<->] (0ra) -- node[above] {$b$} (1ra);
\draw[<->] (0r) -- node[left] {$b$} (0ra);
\draw[<->] (1r) -- node[right] {$b$} (1ra);
\draw[->] (0r) edge[loop left,looseness=4] node[left] {$ab$} (0r); 
\draw[->] (1r) edge[loop right,looseness=4] node[right] {$ab$} (1r); 
\draw[->] (0ra) edge[loop left,looseness=4] node[left] {$ab$} (0ra); 
\draw[->] (1ra) edge[loop right,looseness=4] node[right] {$ab$} (1ra); 
\end{tikzpicture}
\hspace{2cm}
\begin{tikzpicture}
\node (0r) at (9.5,0) {$(\bullet,\neg p)$};
\node (1r) at (12,0) {$(\bullet,p)$};
\node (0ra) at (9.5,2) {$\fbox{$(\circ,\neg p)$}$};
\node (1ra) at (12,2) {$\fbox{$(\circ,p)$}$};
\draw[<->] (0r) -- node[above] {$ab$} (1r);
\draw[->] (0ra) -- node[above] {$a$} (1ra);
\draw[<-] (0r) -- node[left] {$b$} (0ra);
\draw[<-] (1r) -- node[right] {$b$} (1ra);
\draw[->] (0r) edge[loop left,looseness=4] node[left] {$ab$} (0r); 
\draw[->] (1r) edge[loop right,looseness=4] node[right] {$ab$} (1r); 
\draw[->] (1ra) edge[loop right,looseness=4] node[right] {$a$} (1ra); 
\end{tikzpicture}
\caption{From arrow updates to action models --- example}
\label{fig.arrowtoaction}
\end{figure}

\subsection{Relative update expressivity}

We are now prepared to harvest the update expressivity results. First, let us show that AUML$_1$ is more update expressive than AML$_1$.
\begin{proposition}
AUML$_1$ is more update expressive than AML$_1$.
\end{proposition}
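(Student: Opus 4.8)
The plan is to establish the two halves of ``more update expressive'' separately: (1) AUML$_1$ is at least as update expressive as AML$_1$, and (2) there is a single-pointed arrow update model whose induced update cannot be matched (conditionally) by any single-pointed action model. Part (1) is immediate from Proposition~\ref{prop.actiontoarrow}: for any single-pointed action model $(E,e)$, the single-pointed arrow update $(\aumodel(E),e)$ constructed there is conditionally update equivalent to it, and $(\aumodel(E),e)$ is indeed single-pointed (its unique point is the outcome $e$). Hence every update modality of AML$_1$ is conditionally update equivalent to an update modality of AUML$_1$. The real content is part (2): exhibiting a witness.

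For the witness I would take the simplest arrow update where a single point ``splits'' the actual world into incomparable successors — something like the second example in Figure~\ref{fig.below} (Anne privately learning $p$), or an even smaller gadget. The key structural fact to exploit is that a single-pointed action model $(E,e)$ produces, from a pointed model $(M,s)$, a pointed model $(M\otimes E,(s,e))$ whose point $(s,e)$ has as many (distinct-up-to-bisimulation) $a$-successors as there are $e'$ with $(e,e')\in\arel_a$ and $M,t\models\pre(e')$ for some $a$-successor $t$ of $s$ — crucially the action model's branching at $e$ is \emph{fixed} and independent of $(M,s)$. By contrast, a single-pointed arrow update $(U,o)$ uses source/target conditions to \emph{select}, in a model-dependent way, which arrows survive, so the out-degree of the point can vary with $(M,s)$. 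So the strategy is: pick a single-pointed $(U,o)$ and two pointed models $(M_1,s_1),(M_2,s_2)$ on which $(U,o)$ yields resulting pointed models with, say, different numbers of bisimulation-classes of $a$-successors of the point (or more robustly, yields non-bisimilar ``shapes'' that no single fixed action-model branching at a single point $e$ can reproduce on both inputs simultaneously). Then argue by contradiction: if some single-pointed $(E,e)$ were conditionally update equivalent to $(U,o)$, it would have to reproduce both resulting models up to bisimulation, forcing the branching structure at $e$ to be two incompatible things at once.

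The main obstacle I anticipate is choosing the witness and the two test models so that the argument is clean and genuinely uses single-pointedness (multi-pointed action models \emph{can} match arrow updates, by Corollary~\ref{cor.arrowtoaction}, so the distinction is subtle). I would look for a $(U,o)$ where the point $o$ has, for agent $a$, two outgoing arrows with \emph{mutually exclusive} source conditions (e.g.\ $p\imp_a\top$ and $\neg p\imp_a\top$): then in a model where $s$ is reflexive with $p$ true, only one arrow fires, whereas in a suitable model where $s$ has both a $p$-successor and a $\neg p$-successor, both fire — giving the point one versus two successor-branches. A single action model point $e$ has a fixed set of out-arrows, so its induced branching at the point cannot be $1$ on the first input and $2$ on the second; the conditional-equivalence requirement (which only demands agreement on $\Domain(U)=$ all pointed models, here everything) then yields the contradiction. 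I would present this as: define the witness $(U,o)$, define $(M_1,s_1)$ and $(M_2,s_2)$, compute $M_i * U$ at the point, observe the successor-structures are non-bisimilar in the relevant sense, then assume a single-pointed $(E,e)$ works and derive that $\arel_a$ at $e$ must simultaneously have the wrong cardinality on both, contradiction. Combining (1) and (2) gives strictly more update expressive.
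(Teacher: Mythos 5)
Your two-step plan (Proposition~\ref{prop.actiontoarrow} for the inclusion, then a single-pointed witness plus two test models for strictness) is exactly the paper's skeleton, but part (2) as you describe it would not go through. First, your ``key structural fact'' --- that the branching of $(M\otimes E,(s,e))$ at the point is fixed and independent of $(M,s)$ --- is false: which $e'$ with $(e,e')\in\arel_a$ actually contribute successors $(t,e')$ depends on which target preconditions $\pre(e')$ are satisfied at which $a$-successors $t$ of $s$, so the effective out-degree and the bisimulation classes of successors of $(s,e)$ vary with the input model just as they do for arrow updates; a count-the-branches argument therefore cannot separate the two. Second, your concrete witness confuses where source conditions are evaluated: with arrows $(o,p)\imp_a(o_1,\top)$ and $(o,\neg p)\imp_a(o_2,\top)$, both source conditions are checked at the actual state $s$, so at most one of the two arrows ever fires from the point --- the scenario in which ``both fire'' because $s$ has both a $p$-successor and a $\neg p$-successor describes \emph{target} conditions, and the update with target conditions $p$ and $\neg p$ \emph{is} reproducible by a single-pointed action model ($e$ with $\pre(e)=\top$ and two $\arel_a$-successors $e_1,e_2$ with $\pre(e_1)=p$, $\pre(e_2)=\neg p$). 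If both your arrows moreover land on the same outcome, the update collapses to $\top\imp_a\top$ and is matched by the trivial action model.

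The dividing line the paper's proof isolates is different: a single-pointed action model can condition the point's successor structure only on the \emph{target} states (via the $\pre(e')$), never on the actual state $s$ itself, because conditional update equivalence with an everywhere-executable $(U,o)$ forces $\pre(e)$ to be equivalent to $\top$. The paper's witness is the single arrow $(o,p)\imp_a(o,\top)$: in a model where $s$ satisfies $p\et\Dia_a\top$ the point of $M*U$ has an $a$-successor, which forces some $e'$ with $(e,e')\in\arel_a$ and $\pre(e')$ true at a successor; then in a model where $s$ satisfies $\neg p$ but still has an $a$-successor satisfying $\pre(e')$, the product with $E$ has an $a$-successor while the product with $U$ has none. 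Your mutually-exclusive-source-conditions idea can be repaired, but only by giving $o_1$ and $o_2$ genuinely different onward structure and arguing about the shape rather than the number of successors; the paper's one-arrow witness is simpler and already sufficient.
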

\begin{proof}
From Proposition \ref{prop.actiontoarrow} follows that AUML$_1$ is at least as update expressive as AML$_1$. To show that the inclusion is strict, we need to show that for some $(U,o)$, there is no $(E,e)$ that induces the same relation.

Let $U$ be the arrow update model with a single outcome $o$, and a single arrow $(o,p)\rightarrow_a(o,\top)$. 
Suppose towards a contradiction that there is a single-pointed action model $(E,e)$ such that $(M*U,(s,o))\bisim(M\otimes E, (s,e))$ for every $(M,s)$. Then, in particular, $(E,e)$ must be executable everywhere, so $\pre(e)$ is equivalent to $\top$. Furthermore, if $M,s\models p \wedge \lozenge_a\top$, then $(M*U,(s,o))$ has at least one $a$-successor $(s',o)$ (note that $\Dia_a \top$ enforces that $s$ has an $a$-successor $s'$, and the arrow $s \imp_a s'$ satisfies $p$ at the source and $\top$ at the target). By assumption, $(M\otimes E,(s,e))$ is bisimilar to $(M*U,(s,o))$, so it has an $a$-successor $(s',e')$. This implies that $e\arel_a e'$. Now, consider a different model $(M',s)$ such that (i) $M',s\models \neg p$ and (ii) $s$ has an $a$-successor $s'$ such that $M',s'\models \pre(e')$. Then $(M'\otimes E,(s,e))$ has an $a$-successor, but $(M'*U,(s,o))$ has no $a$-successor, because $p$ is false in state $s$. This contradicts the assumption that $(M*U,(s,o))\bisim(M\otimes E, (s,e))$ for every $(M,s)$.
\end{proof}

\begin{proposition}
AUML is as update expressive as AML.
\end{proposition}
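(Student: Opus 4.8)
The plan is to read the result straight off the two translations already in place: Corollary~\ref{cor.actiontoarrow}, which turns an action model into an update-equivalent arrow update model, and Corollary~\ref{cor.arrowtoaction}, which does the reverse. Both corollaries were deliberately stated for the multi-pointed case, which is exactly what is needed here, since ``AUML'' and ``AML'' (without the subscript $1$) permit multi-pointed update modalities. So the proof is essentially a two-line combination of earlier results, and the only care needed is in matching the definition of ``at least as update expressive'' (which asks only for conditional update equivalence of modalities).

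First I would show that AUML is at least as update expressive as AML. Let $[E,F]$ be an arbitrary update modality of AML, i.e.\ $(E,F)$ is a (multi-pointed) action model with $F\subseteq\Domain(E)$, inducing an update $X$. Take the arrow update modality $[\aumodel(E),F]$ of AUML, inducing an update $Y$. By Corollary~\ref{cor.actiontoarrow}, $X$ is conditionally update equivalent to $Y$. Since ``at least as update expressive'' requires only that each modality of the weaker logic be matched up to conditional update equivalence, this direction is done.

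Next I would show that AML is at least as update expressive as AUML. Let $[U,Q]$ be an arbitrary update modality of AUML, with $(U,Q)$ a (multi-pointed) arrow update model and $Q\subseteq\Domain(U)$, inducing an update $X$. Take the action model modality $[E(U),E(Q)]$ of AML --- note that $E(Q)=Q\times 2^\Phi$ is in general genuinely multi-pointed --- inducing an update $Y$. By Corollary~\ref{cor.arrowtoaction}, $X$ is update equivalent, hence in particular conditionally update equivalent, to $Y$. Combining the two inclusions yields that AUML and AML are equally update expressive.

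I expect no real obstacle here: all the substantive work was done in Propositions~\ref{prop.actiontoarrow} and~\ref{prop.arrowtoaction} and their multi-pointed corollaries. The only point worth flagging is the passage from single points to sets of points --- a single outcome $o$ of $U$ is sent to the set of actions $\{o\}\times 2^\Phi$, which is why $E(U)$ is multi-pointed even when $(U,o)$ is single-pointed. This is precisely the reason the equivalence holds at the level of AUML and AML but, by the preceding proposition, fails between AUML$_1$ and AML$_1$.
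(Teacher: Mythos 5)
Your proof is correct and follows exactly the paper's own argument: one direction from Corollary~\ref{cor.actiontoarrow}, the other from Corollary~\ref{cor.arrowtoaction}, noting that the definition of update expressivity only demands conditional update equivalence. The extra remarks about multi-pointedness and why the equivalence fails for AUML$_1$ versus AML$_1$ are accurate but not needed beyond what the paper states.
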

\begin{proof}
From Corollary \ref{cor.actiontoarrow} follows that AUML is at least as update expressive as AML. From Corollary \ref{cor.arrowtoaction} follows that AML is at least as update expressive as AUML.
\end{proof}
It is obvious that AML is more update expressive than AML$_1$, and that AUML is more update expressive than AUML$_1$.

\subsection{Applications illustrating the succinctness of arrow updates} \label{sec.applications}

In this section we give some application areas for the modelling of information change, where arrow updates are more succinct that corresponding (i.e., update equivalent) action models.

\paragraph*{Lying}
You are lying if you say that something is true while you believe that it is false --- and with the intention for the addressee(s) to believe that it is true. In the setting of public announcement logic a lie is a public announcement that is false. This is then contrasted to the (usual) public announcement that is true. Both are combined in the announcement that has no relation to its truth. This is known as the conscious update (\cite{gerbrandy:1999}, see the introduction) or, in a setting where lying is also distinguished, as the manipulative update \cite{hvdetal.lying:2012}. The arrow update for the conscious/manipulative update of $\phi$ is the singleton arrow update model with arrows \[ (o,\T) \imp_a (o,\phi) \] for all agents. Lying as such, wherein $\phi$ is required to be false, is not an arrow update as arrow update models have no preconditions, but such executability preconditions can be simulated as  antecedents of logical implications.

A problem with the manipulative update is that an agent who already believes the opposite of the lie, believes everything after incorporating the lie into her beliefs (believing a contradiction comes at that price). This is because the accessibility of that agent becomes empty as a result of the update. A solution to that is the {\em cautious update} that is also known as lying to {\em sceptical} agents \cite{steiner:2006,kooietal:2011,hvd.lying:2014}: the agent only updates her beliefs if the new information is consistent with her current beliefs. The arrow update for the {\em sceptical update of $\phi$} (again, we cannot model sceptical {\em lying} as this requires $\phi$ to be false) is a singleton arrow update model with arrows \[ \begin{array}{lcl} (o,\Dia_a \phi) &\imp_a& (o,\phi) \\ (o,\Box_a \neg\phi) &\imp_a& (o,\T) \end{array}\] for all agents \cite{kooietal:2011}.

Given a group of agents, some may believe the announcement, and others not. The arrow update modelling allows for this. However, if we make an action model for announcements to sceptical agents, we need to distinguish all combinations explicitly (we used a similar construction to get action model $E(U)$ from arrow update $U$, above, in order to prove Prop.~\ref{prop.arrowtoaction}). For example, for two agents $a$ and $b$, the action model consists of eight actions, with preconditions and accessibility relations as follows. In Figure \ref{fig.lying} we `name' the actions with their preconditions. To simplify the visualization, we do not label arrows with $a$ and $b$: solid arrows are for $a$ and dashed arrows for $b$. We also assume transitive closure of accessibility.

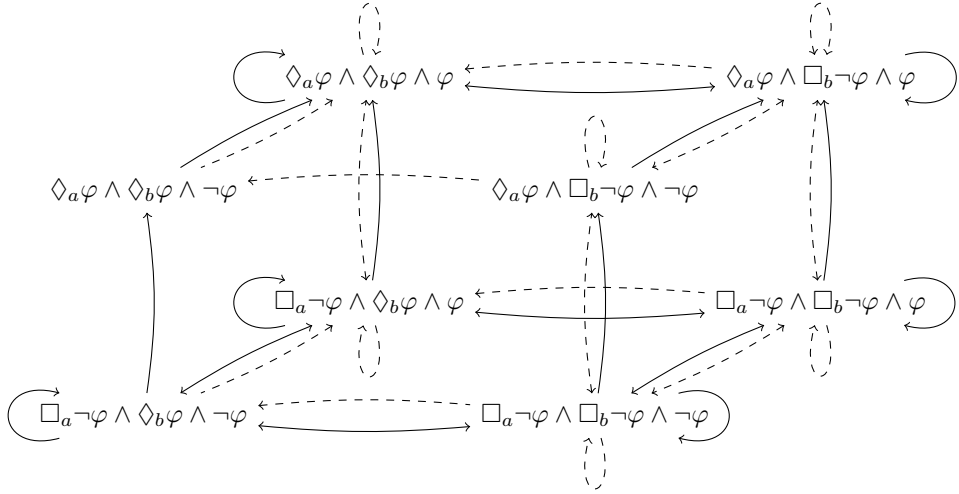
\begin{figure}[h]
\center
\begin{tikzpicture}
\node (01a) at (3,4.5) {\footnotesize $\Dia_a \phi \et \Dia_b \phi \et \phi$};
\node (11a) at (9,4.5) {\footnotesize $\Dia_a \phi \et \Box_b \neg\phi \et \phi$};
\node (00a) at (3,1.5) {\footnotesize $\Box_a \neg\phi \et \Dia_b \phi \et \phi$};
\node (10a) at (9,1.5) {\footnotesize $\Box_a \neg\phi \et \Box_b \neg \phi \et \phi$};
\node (01) at (0,3) {\footnotesize $\Dia_a \phi \et \Dia_b \phi \et \neg\phi$};
\node (11) at (6,3) {\footnotesize $\Dia_a \phi \et \Box_b \neg\phi \et \neg\phi$};
\node (00) at (0,0) {\footnotesize $\Box_a \neg\phi \et \Dia_b \phi \et \neg\phi$};
\node (10) at (6,0) {\footnotesize $\Box_a \neg\phi \et \Box_b \neg \phi \et \neg\phi$};
\draw[<->,bend right=5] (00) to (10);
\draw[->,bend right=8] (00) to (01);
\draw[->,bend right=8] (10) to (11);
\draw[->] (00) edge[loop left,looseness=4] (00); 
\draw[->] (10) edge[loop right,looseness=4] (10); 
\draw[<->,bend right=5] (00a) to (10a);
\draw[<->,bend right=5] (01a) to (11a);
\draw[->,bend right=8] (00a) to (01a);
\draw[->,bend right=8] (10a) to (11a);
\draw[->] (00a) edge[loop left,looseness=4] (00a); 
\draw[->] (10a) edge[loop right,looseness=4] (10a); 
\draw[->] (01a) edge[loop left,looseness=4] (01a); 
\draw[->] (11a) edge[loop right,looseness=4] (11a); 
\draw[<->,bend right=5] (00a) to (00);
\draw[<-,bend right=5] (01a) to (01);
\draw[<->,bend right=5] (10a) to (10);
\draw[<-,bend right=5] (11a) to (11);
\draw[<-,dashed,bend left=5] (00) to (10);
\draw[<-,dashed,bend left=5] (01) to (11);
\draw[<->,dashed,bend left=8] (10) to (11);
\draw[->,dashed] (10) edge[loop below,looseness=15] (10); 
\draw[->,dashed] (11) edge[loop above,looseness=15] (11); 
\draw[<-,dashed,bend left=5] (00a) to (10a);
\draw[<-,dashed,bend left=5] (01a) to (11a);
\draw[<->,dashed,bend left=8] (00a) to (01a);
\draw[<->,dashed,bend left=8] (10a) to (11a);
\draw[->,dashed] (00a) edge[loop below,looseness=15] (00a); 
\draw[->,dashed] (10a) edge[loop below,looseness=15] (10a); 
\draw[->,dashed] (01a) edge[loop above,looseness=15] (01a); 
\draw[<->,dashed] (11a) edge[loop above,looseness=15] (11a); 
\draw[<-,dashed,bend left=5] (00a) to (00);
\draw[<-,dashed,bend left=5] (01a) to (01);
\draw[<->,dashed,bend left=5] (10a) to (10);
\draw[<->,dashed,bend left=5] (11a) to (11);

\end{tikzpicture}
\caption{Action model for lying}
\label{fig.lying}
\end{figure}

\noindent For $n$ agents there are $O(2^n)$ actions in the action model. In \cite{kooietal:2011} this example is treated in greater detail, and also other, similar, examples are shown for which arrow updates are shown to be more succinct (exponentially smaller).

\paragraph*{Attentive announcements}
Another example where action models are exponentially bigger than arrow updates is that of the {\em attention-based announcements} of \cite{bolanderetal:2016}. This work presents a logic of announcements that are only `heard' (received) by agents paying attention to it, paying attention to the announcer, so to speak. Such announcements are modelled employing an auxiliary set of designated `attention (propositional) variables' $h_a$ expressing that agent $a$ pays attention. The corresponding arrow update model has domain $\{o,o'\}$, both outcomes designated, and with arrows \[ \begin{array}{lcl} (o,h_a) &\imp_a& (o,\phi) \\ (o,\neg h_a) &\imp_a& (o',\T) \\ (o',\T) &\imp_a& (o',\T) \end{array}\] for all agents. It cannot be modelled with a (singleton) \cite{kooietal:2011} arrow update, the resulting relational model is typically larger than then model before the update, as the agents not paying attention believe that no announcement was made, and thus reason about the structure of the entire initial model. Incorporating the announcements depends on $h_a$ being true or false, just as for sceptical announcements it depends on $\Dia_a \phi$ being true or false. So, this is similar. But not entirely so, because an agent not paying attention is, so to speak, inconscious of the announcement, and thus believes that the (entire) original model still encodes her beliefs), whereas an agent believing the opposite of the announcement `knows' that if she where to have found the announcement believable, she would have changed her beliefs. So these are different parts of the same model, it is a mere restriction of the accessibility relation. Again, for attentive announcements, a corresponding action model is of exponential size, as any subset of agents may or may not be paying attention. See \cite{bolanderetal:2016}.

\paragraph*{Comparative size of action models and arrow updates}
In general, if the observational powers of {\em all} agents are commonly known to be partial, then we can expect arrow updates for such dynamic phenomena to be exponentially smaller than corresponding action models. This was the case for announcements to sceptical lying and for attention-based announcements, and also for: agents making broadcasts (to all agents), agents seeing each other depending on their orientation, partial networks representing agents with neighbours or friends, etc. On the other hand, dynamic phenomena where all agents observe (some, few) designated agents have similarly-sized arrow updates and action models, such as: the private announcement to an individual agent or a subgroup of agents, and gossip scenarios where two agents call each other in order to exchange secrets, and where this call may be partially observed by all other agents. We do not know of scenarios where action models are more succinct than arrow updates.

\section{Arbitrary arrow updates versus refinements} \label{sec.rml} \label{sec.arrowvrefine}

\subsection{Refinement modal logic}

We now compare the arbitrary arrow update modality of AAUML to the {\em refinement} quantifier of refinement modal logic RML \cite{bozzellietal.inf:2014}. Let us first be precise about its syntax and semantics. 

We recall the definition of bisimulation in Section \ref{sec.structures}. If {\bf atoms} and {\bf back} hold, we call the relation a {\em refinement} (and dually, if {\bf atoms} and {\bf forth} hold, we call the relation a {\em simulation}). In  \cite{bozzellietal.inf:2014} such a refinement relation is considered for any subset of the set of agents and defined as follows: 

A relation ${\mathfrak R}_\Group$ that satisfies {\bf atoms},  {\bf back-$\agent$}, and {\bf forth-$\agent$} for every $\agent\in\Agents\setminus\Group$, and that satisfies {\bf atoms}, and {\bf back-$\agentb$} for every $\agentb\in\Group$, is a $\Group$-{\em refinement}, we say that $(M',{\state'})$ {\em refines} $(M,\state)$ for group of agents $\Group$, and we write $(M,\state) \lumis_\Group (M',{\state'})$. An $\Agents$-{\em refinement} we call a {\em refinement} (clearly any $\Group$-refinement is also an $\Agents$-refinement and thus a `refinement' plain and simple), and $(M,\state) \lumis_\Agents (M',{\state'})$ is denoted $(M,\state) \lumis (M',{\state'})$. With this relation comes a corresponding modality in the obvious way. Let $B \subseteq A$, and $(M,s)$ and $\phi$ given, then \[ M,s \models [\lumis]_B \phi \text{ \ \ \ iff \ \ \ } M',s' \models \phi \text{ for all } (M',s') \text{ such that } (M,s) \lumis_B (M',s'). \]

We continue with the comparison of the refinement quantifier to the other quantifiers. 
We first focus on the refinement relation $\lumis$ for the set of all agents, to which corresponds the $[\lumis]$ modality. Consider three different ways to define quantification in information changing modal logics. We formulate them suggestively so that their correspondences stand out, where we recall Proposition \ref{prop:fully_arbitrary} that the restrictions on source and target conditions need not be met when interpreting $[\AAUL]$, and similarly, \cite{hales2013arbitrary} showed that the restrictions on action preconditions need not be met when interpreting $[\otimes]$.

\[ \begin{array}{lcl} 
M,s \models [\AAUL] \phi & \text{iff} & M,s \models [U,o] \phi \text{ for all arrow update models } (U,o) \\ 
M,s \models [\lumis] \phi & \text{iff} & M',s' \models \phi \text{ for all refinements } (M',s') \\ 
M,s \models [\otimes] \phi & \text{iff} & M',s' \models [E,e] \phi \text{ for all action models } (E,e) 
\end{array} \]

\begin{theorem} \label{theo.drie}
Let $\phi \in \langml$. Then $[\AAUL] \phi$, $[\lumis]\phi$, $[\otimes]\phi$ are pairwise equivalent.
\end{theorem}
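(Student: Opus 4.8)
The plan is to exploit that, on goal formulas $\phi\in\langml$, the three operators $[\AAUL]$, $[\lumis]$ and $[\otimes]$ satisfy the \emph{same} reduction axioms, namely {\bf A1}--{\bf A4}, and that the rewriting procedure behind Lemma~\ref{lemma:aauml_reduction} uses nothing about the quantifier beyond those four schemas (together with {\bf RE} and the passage to DNNF, which are operator-independent). Consequently, running that procedure on $\langle\AAUL\rangle\neg\phi$, on $\langle\lumis\rangle\neg\phi$ and on $\langle\otimes\rangle\neg\phi$ produces one and the same modal formula $\psi'\in\langml$, and soundness of {\bf A1}--{\bf A4} for each operator then yields $\models\langle\AAUL\rangle\neg\phi\leftrightarrow\psi'$, $\models\langle\lumis\rangle\neg\phi\leftrightarrow\psi'$, $\models\langle\otimes\rangle\neg\phi\leftrightarrow\psi'$; negating, $[\AAUL]\phi$, $[\lumis]\phi$, $[\otimes]\phi$ are all equivalent to $\neg\psi'$, hence pairwise equivalent (note $\neg\phi\in\langml$, so Lemma~\ref{lemma:aauml_reduction} applies).

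First I would record that {\bf A1}--{\bf A4} hold for $\langle\AAUL\rangle$: this is just soundness of the axiomatization {\bf AAUML} (Section~\ref{sec.axiomatization}). Next I would verify {\bf A1}--{\bf A4} for the refinement diamond $\langle\lumis\rangle$. Schemas {\bf A1} and {\bf A3} are immediate, since a refinement satisfies the {\bf atoms} clause, hence preserves every formula of $\langpl$, and the identity relation is a refinement; {\bf A2} is the obvious distribution of an existential quantifier over disjunction. The substantive one is {\bf A4}: because an $\Agents$-refinement may only \emph{delete} $a$-arrows and \emph{refine} the surviving targets, the implication from right to left is obtained by, for each agent $a$ and each $\phi_a\in\Phi_a$, keeping one $a$-successor (refined to satisfy $\phi_a\wedge\psi_a$, which is possible by the disjunct $\lozenge_a\langle\lumis\rangle(\phi_a\wedge\psi_a)$) and discarding all other $a$-successors, and gluing these refined subtrees under $s$; the implication from left to right uses {\bf back} to pull back a witnessing $\phi_a\wedge\psi_a$-successor of the refined point. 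Distinct agents are handled independently, so this works for the whole conjunction over $\Agents$ at once. (Up to presentation these are the reduction axioms of RML from \cite{bozzellietal.inf:2014}, as already noted in Section~\ref{sec.axiomatization}.) Finally, {\bf A1}--{\bf A4} for $\langle\otimes\rangle$ can be checked in the same way (following Hales \cite{hales2013arbitrary}), or, more economically, deduced from the $\langle\AAUL\rangle$ case: by Propositions~\ref{prop.actiontoarrow} and~\ref{prop.arrowtoaction}, for every $(M,s)$ the pointed models reachable by executing an executable action model on $(M,s)$ and those reachable by executing an arrow update model on $(M,s)$ coincide up to bisimulation, so for bisimulation-invariant $\chi\in\langml$ we get $M,s\models\langle\otimes\rangle\chi$ iff $M,s\models\langle\AAUL\rangle\chi$, whence $\langle\otimes\rangle$ inherits the schemas (this step, incidentally, already establishes $[\AAUL]\phi\leftrightarrow[\otimes]\phi$ outright).

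Then I would run the reduction. Given $\phi\in\langml$, Lemma~\ref{lemma:aauml_reduction} supplies $\psi'\in\langml$ with $\models\langle\AAUL\rangle\neg\phi\leftrightarrow\psi'$; inspecting its proof, every step that touches the quantifier is an instance of {\bf A1}--{\bf A4} (Corollary~\ref{cor:auml_reduction}, the $[U,o]$-elimination, is never invoked, since the argument of the quantifier is already modal and stays modal throughout), while the remaining manipulations --- normalising to DNNF, applications of {\bf RE}, and the $\square_a\top$-padding --- do not mention the quantifier at all. Hence the identical sequence of rewrites applied to $\langle\lumis\rangle\neg\phi$ and to $\langle\otimes\rangle\neg\phi$ yields the \emph{same} $\psi'$, and the soundness facts just established give $\models\langle\lumis\rangle\neg\phi\leftrightarrow\psi'$ and $\models\langle\otimes\rangle\neg\phi\leftrightarrow\psi'$. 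Negating: $[\AAUL]\phi\leftrightarrow[\lumis]\phi\leftrightarrow[\otimes]\phi\leftrightarrow\neg\psi'$, which is the claim.

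I expect the main obstacle to be two-fold: (i) pinning down {\bf A4} for $\langle\lumis\rangle$ carefully --- arguing that the purely ``backward'' deletion-and-refinement behaviour of refinements is captured \emph{exactly} by the $\square_a$-free right-hand side, with no residual constraint linking the kept successors; and (ii) making precise that the procedure of Lemma~\ref{lemma:aauml_reduction} is genuinely ``quantifier-agnostic'', so that it may legitimately be replayed verbatim with $\langle\lumis\rangle$ or $\langle\otimes\rangle$ in place of $\langle\AAUL\rangle$. As a sanity check and as an independent route for one direction, one can observe purely semantically that every arrow update execution and every action model execution is a refinement of its input (the canonical linking relation satisfies {\bf atoms} and {\bf back}), which already gives $[\lumis]\phi\Rightarrow[\AAUL]\phi$ and $[\lumis]\phi\Rightarrow[\otimes]\phi$ for \emph{all} $\phi$; the real content sits in the converse, where finiteness of the witnessing update is recovered from the bounded modal depth of $\phi\in\langml$.
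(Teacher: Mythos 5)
Your proof is correct, but it takes a genuinely different route from the paper's. The paper proves the theorem by transitivity of two equivalences established elsewhere: $[\lumis]\phi\leftrightarrow[\otimes]\phi$ is simply cited from Hales \cite{hales2013arbitrary}, and $[\otimes]\phi\leftrightarrow[\AAUL]\phi$ is shown semantically at the level of the diamond, chaining Proposition~\ref{prop.actiontoarrow} (every executable action model is conditionally update equivalent to an arrow update model) with Proposition~\ref{prop.arrowtoaction} (every arrow update model is update equivalent to a multi-pointed action model) --- which is exactly the ``more economical'' alternative you offer for the $\langle\otimes\rangle$ case, so that half of your argument coincides with the paper's. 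Where you genuinely diverge is in handling $[\lumis]$: instead of importing Hales's result, you verify that all three diamonds satisfy the schemas {\bf A1}--{\bf A4} and observe that the rewriting of Lemma~\ref{lemma:aauml_reduction} is driven only by the shape of the modal formula under the quantifier (plus DNNF and {\bf RE}), hence produces one common modal correspondent $\psi'$ for all three operators. This is sound: for $\phi\in\langml$ the reduction never invokes Corollary~\ref{cor:auml_reduction}, and your verification of {\bf A4} for $\langle\lumis\rangle$ (prune to one refined successor per $\phi_a$, use {\bf back} for the converse) is the standard RML cover-modality argument --- indeed the paper's suppressed appendix derives {\bf A4}$^{\lumis}$ from the {\bf RML} axioms and proposes {\bf A1}$^{\lumis}$--{\bf A4}$^{\lumis}$ as an alternative axiomatization of RML, so your route is implicitly endorsed there. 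What each approach buys: the paper's proof is shorter given its already-built update-equivalence machinery and outsources the refinement case; yours is self-contained (no reliance on \cite{hales2013arbitrary} for $[\lumis]\leftrightarrow[\otimes]$), produces an explicit common modal reduct $\psi'$, and as a by-product re-establishes that RML and AAML are reducible to ML --- at the cost of having to argue carefully that the reduction procedure is quantifier-agnostic, a point you rightly flag and adequately discharge.
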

\begin{proof} \ \begin{itemize} \item The equivalence of $[\lumis]\phi$ to $[\otimes]\phi$ was shown in \cite{hales2013arbitrary}.
\item To show that $[\otimes]\phi$ is equivalent to $[\AAUL]\phi$, we use the semantics of these modalities. Let us do this for the diamond version. Both directions of the equivalence need to be shown. 

First, suppose that

{\centering $M,s \models \dia{\otimes}\phi$.

}

According to the semantics of $\dia{\otimes}$ this is equivalent to 

{\centering $\is (E,e): M,s \models \dia{E,e}\phi$.

}

By Proposition~\ref{prop.actiontoarrow} that, in turn, is equivalent to

{\centering $\is (U(E),e): M,s \models \pre(e) \ \text{and} \ M,s \models \dia{U(E), e} \phi$.

}

This implies that, in particular,

{\centering $\is (U(E),e): M,s \models \dia{U(E), e} \phi$

}
and therefore, by the semantics of $\dia{\AAUL}$, that

{\centering $M,s \models \dia{\AAUL}\phi$.

}
\medskip
For the other direction, suppose that 

{\centering $M,s \models \dia{\AAUL}\phi$.

}
According to the semantics of $\dia{\AAUL}$ this is equivalent to

{\centering $\is (U,o): M,s \models \dia{U,o}\phi$.

}
By Proposition~\ref{prop.arrowtoaction} that, in turn, is equivalent to

{\centering $\is (E(U),E(o)): M,s \models \dia{E(U), E(o)} \phi$.

}
In particular, this implies that for some $e\in E(o)$ we have

{\centering $\is (E(U),e): M,s \models \dia{E(U), e} \phi$

}
and therefore, by the semantics of $\dia{\otimes}$, that

{\centering $M,s \models \dia{\otimes}\phi$.

}

%
%
%
%
%

\item
To show that $[\AAUL] \phi$ is equivalent to $[\lumis]\phi$ we use the previous two equivalences.
\end{itemize} \vspace{-.7cm}
\end{proof}

The theorem is formulated to make the correspondence between the three quantifiers stand out. Alternatively, we can have an inductively defined translation between the language $\lang$ (of AAUML) and the language of arbitrary action model logic AAML that is compositional to the extent that arrow update quantifiers are translated into action model quantifiers (Theorem \ref{theo.drie}) and arrow update models into action models (Proposition \ref{prop.arrowtoaction}), and vice versa (Proposition \ref{prop.actiontoarrow}).

\subsection{Update expressivity}

Considering that $[\otimes]\phi$, $[\AAUL]\phi$ and $[\lumis]\phi$ are equivalent for  basic modal formulas $\phi$ (Theorem~\ref{theo.drie}), and that $[\otimes]$ and $[\AAUL]$ have the same update expressivity (Section \ref{sec.updateex}), one might expect all three logics  AAML, AAUML, and RML to have the same update expressivity. This, however, is not so, because $[\otimes]$ and $[\AAUL]$ are finitary quantifiers --- they quantify over, respectively, {\em finite} action models and over {\em finite} arrow update models --- whereas refinements can be infinitary.

For one example, consider the relational model $N$ consisting of all valuations, with the universal relation on that domain for all agents, and any state $t$ in that domain. Clearly, the restriction of $N$ to the singleton model consisting of $t$ (wherein the agents have common knowledge of the valuation in $t$) is a refinement of $(N,t)$. It can be obtained by successively announcing the value of each of the infinite number of atoms. However, it cannot be obtained by a single announcement (or, equivalently, by any finite sequence of those).

For another example, consider the models $M$, with as single state $s_0$, and $M'$, with $s_0$ as its leftmost state, in Figure~\ref{fig.modelsmandm}. The pointed model $(M',s_0)$ is a refinement of $(M,s_0)$. But $M'$ contains infinitely many states that are not bisimilar to one another. Furthermore, every arrow update model $U$ in the logical language is finite, so every product of $U$ with $M$ is finite (and therefore contains finitely many non-bisimilar states). As a result, there is no $(U,o)$ such that $(M',s_0)$ is bisimilar to $(M*U,(s,o))$.

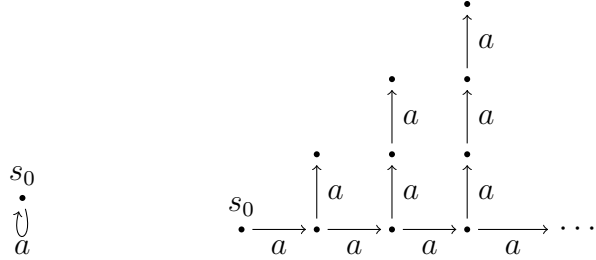
\begin{figure}[h]
\center
\begin{tikzpicture}
	\fill (0,0) circle (0.04) node (o0) {};
	\draw (o0) edge[loop below] (o0) node[midway,below=12pt] {$a$};
	\draw (0,0.3) node {$s_0$};
\end{tikzpicture}
\hspace{2cm}
\begin{tikzpicture}
	\fill (0,0) circle (0.04) node (o0) {};

	\fill (1,0) circle (0.04) node (o10) {};
	\fill (1,1) circle (0.04) node (o11) {};

	\fill (2,0) circle (0.04) node (o20) {};
	\fill (2,1) circle (0.04) node (o21) {};
	\fill (2,2) circle (0.04) node (o22) {};
	
	\fill (3,0) circle (0.04) node (o30) {};
	\fill (3,1) circle (0.04) node (o31) {};
	\fill (3,2) circle (0.04) node (o32) {};
	\fill (3,3) circle (0.04) node (o33) {};
	
	\node (o4) at (4.5,0) {$\cdots$};
	
	\draw[->] (o0) -- (o10) node[midway,below] {$a$};
	\draw[->] (o10) -- (o20) node[midway,below] {$a$};
	\draw[->] (o20) -- (o30) node[midway,below] {$a$};
	\draw[->] (o30) -- (o4) node[midway,below] {$a$};

	\draw[->] (o10) -- (o11) node[midway,right] {$a$};
	
	\draw[->] (o20) -- (o21) node[midway,right] {$a$};
	\draw[->] (o21) -- (o22) node[midway,right] {$a$};
	
	\draw[->] (o30) -- (o31) node[midway,right] {$a$};
	\draw[->] (o31) -- (o32) node[midway,right] {$a$};
	\draw[->] (o32) -- (o33) node[midway,right] {$a$};
	\draw (0,0.3) node {$s_0$};
\end{tikzpicture}
\caption{Arbitrary arrow updates and refinements are incomparable}
\label{fig.modelsmandm}
\end{figure}

It follows that arbitrary arrow updates are \emph{not} at least as update expressive as refinements. But it also follows that refinements are not at least as update expressive as arbitrary arrow updates, since you cannot choose to exclude the above model $(M',s_0)$ when performing a refinement in $(M,s_0)$.
\begin{proposition}
RML and AAUML are incomparable in update expressivity.
\end{proposition}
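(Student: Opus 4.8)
The plan is to read off both non-inclusions from the single pair of models $(M,s_0)$, $(M',s_0)$ of Figure~\ref{fig.modelsmandm}, exploiting the feature noted there that $M'$ contains infinitely many pairwise non-bisimilar states, so that $(M',s_0)$ is bisimilar to no finite relational model. Recall that the update modalities of RML are the refinement quantifiers $[\lumis]_B$ (for $B\subseteq\Agents$), whereas those of AAUML are the concrete modalities $[U,o]$ and $[U,Q]$ together with the quantifier $[\AAUL]$. The structural fact I would isolate first is: if $M$ is finite, then $M*U$ is finite for every arrow update model $U$, since by Definition~\ref{def.language} the domain of $U$ and each $\aufunction_a$ are finite; hence for every AAUML update $Y$, every member of $Y(M,s_0)$ is bisimilar to a finite model. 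On the other hand $\Domain(\lumis)$ consists of all pointed models, and $(M',s_0)$ is a refinement of $(M,s_0)$ (the relation $\{s_0\}\times\Domain(M')$ witnesses \textbf{atoms} and \textbf{back}, as the $s_0$-loop in $M$ is reflexive and no atoms are involved).

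For the direction ``AAUML is not at least as update expressive as RML'' I would take $[\lumis]=[\lumis]_{\Agents}$ as the witness. If some AAUML update $Y$ were conditionally update equivalent to $\lumis$, then, as $\Domain(\lumis)$ is everything, we would need $Y(M,s_0)\bisim\lumis(M,s_0)$; but $(M',s_0)\in\lumis(M,s_0)$ while, by the finiteness fact, no member of $Y(M,s_0)$ is bisimilar to $(M',s_0)$ --- a contradiction.

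For the direction ``RML is not at least as update expressive as AAUML'' I would take $[\AAUL]$ as the witness and show that no $[\lumis]_B$ is conditionally update equivalent to $\AAUL$, splitting on $B$. If $B=\Agents$: the same pointed model $(M,s_0)$ works, now in the opposite way, since $(M',s_0)\in\lumis(M,s_0)$ but, by finiteness, $(M',s_0)\notin\AAUL(M,s_0)$. If $B\subsetneq\Agents$: fix $a\notin B$ and any pointed model $(N,t)$ with an $a$-successor of $t$; the arrow update model with a single outcome and no $a$-arrows yields a member of $\AAUL(N,t)$ whose point has no $a$-successor, and this is not a $B$-refinement of $(N,t)$ because \textbf{forth}-$a$ fails for every candidate relation, so no member of $\lumis_B(N,t)$ is bisimilar to it. In every case conditional update equivalence fails, which is what we want.

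The only non-routine ingredient is the claim that $(M',s_0)$ escapes $Y(M,s_0)$ for every AAUML update $Y$, and this is precisely the combination of (i) finiteness of the product $M*U$ when $M$ and $U$ are finite and (ii) the observation in the caption of Figure~\ref{fig.modelsmandm} that $M'$ is bisimilar to no finite model. I expect no further obstacle: the remainder is a straightforward unwinding of the definitions of refinement, of the $*$-product, and of conditional update equivalence, which can be compressed to a few lines once (i) and (ii) are in place.
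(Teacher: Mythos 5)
Your proof is correct and follows essentially the same route as the paper: both non-inclusions are read off from the pair $(M,s_0)$, $(M',s_0)$ of Figure~\ref{fig.modelsmandm}, using that $(M',s_0)$ is a refinement of $(M,s_0)$ but, since every product $M*U$ with a finite arrow update model is finite, is not bisimilar to any output of an AAUML update on $(M,s_0)$. The only addition is your explicit treatment of the modalities $[\lumis]_B$ for $B\subsetneq\Agents$ (via an arrow update that deletes all $a$-arrows for some $a\notin B$, violating \textbf{forth}-$a$), a case the paper leaves implicit; this is a welcome tightening rather than a different argument.
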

And therefore RML and AAML are also incomparable. The reason that $[\AAUL]\phi$ and $[\lumis]\phi$ (and $[\otimes]\phi$) are nonetheless equivalent is that while there is no $(U,o)$ such that $(M*U,(s_0,o))$ is bisimilar to $(M',s_0)$, it is the case that for every $n$ there is an $(U_n,o_n)$ such that $(M*U_n,(s_0,o_n))$ is $n$-bisimilar to $(M',s_0)$. Since every formula in the languages under consideration is of finite depth, such finite approximations of $M'$ suffice.

Finally, we should note that the incomparability already applies to the language for RML with only the $[\lumis]$ modality. The language above, as in \cite{bozzellietal.inf:2014}, has $[\lumis]_B$ modalities for any subgroup $B \subseteq \Agents$, meaning that, modulo bisimulation, only arrows in $B$ are removed from a relational model. Similarly to the argument above it follows that this would only further increase expressivity.


\weg{

\subsection{Comparing the axiomatizations of AAUML and RML}

We have seen that the refinement modality corresponds to the arrow update modality in the sense that $[\lumis]\phi$ is equivalent to $[\AAUL]\phi$. Given this identification, the language and semantics of AAUML thus extends that of RML. A comparison between the axiomatization {\bf RML} of refinement modal logic RML  (Table~\ref{table.axiomatization2}) and the axiomatization {\bf AAUML} of AAUML (Table~\ref{table.axiomatization} on page \pageref{table.axiomatization}) seems in order. In {\bf RML}, in Table~\ref{table.axiomatization2}, $\covers$ is the {\em coalgebraic cover modality} of \cite{moss:1999}, defined as $\covers \Phi := \Et_{\phi\in\Phi} \Dia \phi \et \Box \Vel_{\phi\in\Phi} \phi$. From now on we abbreviate the right-hand term, and similar expressions, as $\covers \Phi := \Et \Dia \Phi \et \Box \Vel \Phi$.

\begin{table}[h]
\[ \begin{array}{ll}
{\bf Prop} & \text{all tautologies of propositional logic}\\
{\bf K} & \Box_\agent(\phi\imp\psi)\imp \Box_\agent\phi\imp \Box_\agent\psi\\
{\bf R} & [\lumis]_\agent(\phi\imp\psi)\imp [\lumis]_\agent\phi \imp [\lumis]_\agent\psi\\
{\bf RProp} & [\lumis]_\agent \atom \eq \atom \text{ and } [\lumis]_\agent \neg\atom \eq\neg\atom \\
{\bf RK} & \dia{\lumis}_\agent\covers_\agent\Phi\eq\bigwedge\Diamond_\agent\dia{\lumis}_\agent\Phi  \\
{\bf RKmulti} & \dia{\lumis}_\agent\covers_\agentb\Phi\eq \covers_\agentb\dia{\lumis}_\agent\Phi  \hspace{3cm} \hfill \text{ where } \agent\neq\agentb \\
{\bf RKconj} & \dia{\lumis}_\agent \Et_{\agentb\in\Group}\covers_\agentb \Phi^\agentb \eq\Et_{\agentb\in\Group} \dia{\lumis}_\agent \covers_\agentb \Phi^\agentb \\ 
{\bf MP} & \text{from } \phi\imp\psi \text{ and } \phi \text{ infer } \psi\\
{\bf NecK} & \text{from } \phi \text{ infer } \Box_\agent\phi\\
{\bf NecR} & \text{from } \phi \text{ infer } [\lumis]_\agent\phi 
\end{array} \]
\caption{The axiomatization {\bf RML} of RML}
\label{table.axiomatization2}
\end{table}

If we replace $\dia{\AAUL}$ by $\dia{\lumis}$ in axiom {\bf A4} of Table~\ref{table.axiomatization} we get this principle {\bf A4}$^{\lumis}$:
\[ \begin{array}{ll} {\bf A4}^{\lumis} & \langle \lumis\rangle\bigwedge_{a\in A}(\bigwedge \lozenge_a\Phi_a\wedge \square_a\psi_a)\leftrightarrow  \bigwedge_{a\in B}\bigwedge\lozenge_a\langle\lumis\rangle(\Phi_a\wedge\psi_a) \end{array} \]
Obviously, given Theorem \ref{theo.drie}, {\bf A4}$^{\lumis}$ is valid in RML. More interesting is to derive it in {\bf RML} from the axioms {\bf RK}, {\bf RKmulti}, and {\bf RKconj} relating the refinement modality to basic modalities. 

Let us first demonstrate this for the single agent versions of the logics (where w.l.o.g.\ we assume that the language is arrow update model free, to simplify the proof). In other words, we compare:

\[ \begin{array}{ll} {\bf A4}^\succeq_1 & \dia{\lumis} (\Et \Dia \Phi \et  \Box \psi) \eq  \Et \Dia \dia{\lumis} (\Phi \et \psi) \\
{\bf RK}_1 & \dia{\lumis} \covers \Phi \eq \Et \Dia \dia{\lumis} \Phi \\
\end{array} \]

\begin{proposition} \label{prop.rk1}
Axioms $\mathbf{RK}_1$ and $\mathbf{A4}^\succeq_1$ are interchangeable in $\mathbf{RML}_1$.
\end{proposition}
\begin{proof}

Let $\mathbf{Ax} \proves \phi$ denote that $\phi$ is a theorem of system {\bf Ax}. We first show that $(\mathbf{RML}_1-\mathbf{RK}_1+\mathbf{A4}^\succeq_1) \proves \mathbf{RK}_1$. Below, the equivalences either spell out definitions or correspond to provable equivalences.

\bigskip

\noindent $
\dia{\lumis} \covers \Phi \\
\Eq \hfill \text{by definition of } \covers \\
\dia{\lumis}  (\Et \Dia \Phi \et  \Box \Vel \Phi) \\
\Eq \hfill \mathbf{A4}^\succeq_1, \text{ where } \psi = \Vel \Phi \\
\Et \dia{\lumis} ( \Phi \et  \Vel \Phi) \\
\Eq \hfill \text{ use equivalence } \phi \eq (\phi \et  \Vel \Phi) \text{ for all } \phi \in\Phi \\
\Et \dia{\lumis} \Phi
$

\bigskip

\noindent We now show that $\mathbf{RML}_1\proves \mathbf{A4}^\succeq_1$

\bigskip

\noindent $
\dia{\lumis} (\Et \Dia \Phi \et  \Box \psi) \\
\Eq \hfill \text{propositional logic }   \\
\dia{\lumis} (\Et \Dia (\Phi \et\psi) \et  \Box \Vel (\Phi\et\psi)) \\
\Eq \hfill \mathbf{RK}_1, \text{ for the set } \{ \phi\et\psi \mid \phi\in\Phi \} \\
\Et \Dia \dia{\lumis} (\Phi \et\psi)
$
\end{proof}

To compare {\bf RK} and {\bf A4}$^\succeq$ for the multi-agent version we need to use the axioms {\bf RKmulti} and {\bf RMconj} as well. We also use the following validity: let the set of agents $\Agents$ be $\{b_1,\dots,b_n\}$, then $\dia{\lumis} \phi$ (i.e., $\dia{\lumis}_\Agents \phi$) is equivalent to $\dia{\lumis}_{b_1}\dots\dia{\lumis}_{b_n}\phi$ in any order of these agents. We may therefore additionally assume that $b_n = a$. If the language has $[\succeq]_B$ as primitives, in the axiomatization {\bf RML} we may therefore so to speak have an additional axiom {\bf RG}:  $[\lumis]_B\phi \eq [\lumis]_{b_1}\dots[\lumis]_{b_n}\phi$. Finally, below, note that Lemmas \ref{lemma:reduction_simple_2a} (page \ref{lemma:reduction_simple_2a}) and \ref{lemma:reduction_simple_2} (page \pageref{lemma:reduction_simple_2}) can be assumed provable equivalences.
\begin{proposition}
Axiom $\mathbf{A4}^\succeq$ is derivable in {\bf RML}.
\end{proposition}
\begin{proof} \

\noindent $
\langle \lumis\rangle\bigwedge_{b\in A}(\bigwedge\lozenge_a\Phi_a\wedge \square_a\psi_a)\\
\Eq \hfill \mathbf{RG} \\
\dia{\lumis}_{b_1}\dots\dia{\lumis}_{a}\bigwedge_{a\in A}(\bigwedge\lozenge_a\Phi_a\wedge \square_a\psi_a) \\
\Eq \hfill \text{$n$ applications of {\bf RKconj}} \\
\bigwedge_{a\in A}\dia{\lumis}_{b_1}\dots\dia{\lumis}_{a}(\bigwedge\lozenge_a\Phi_a\wedge \square_a\psi_a) \\
\Eq \hfill \mathbf{RK}_1 \text{ (Prop.~\ref{prop.rk1})} \\
\bigwedge_{b\in A}\dia{\lumis}_{b_1}\dots\bigwedge\Dia_a\dia{\lumis}_{a}(\Phi_a\wedge\psi_a) \\
\Eq \hfill \text{repeated applications of Lemmas \ref{lemma:reduction_simple_2a} and \ref{lemma:reduction_simple_2}} \\
\bigwedge_{b\in A}\bigwedge\dia{\lumis}_{b_1}\dots\Dia_a\dia{\lumis}_{a}(\Phi_a\wedge\psi_a) \\
\Eq  \hfill \text{repeated applications of {\bf RKmulti}, as $b_i \neq a$ for $i <n$} \\
\bigwedge_{b\in A}\bigwedge\lozenge_a\dia{\lumis}_{b_1}\dots\dia{\lumis}_{a}(\Phi_a\wedge\psi_a) \\
\Eq \hfill \mathbf{RG}\\
\bigwedge_{a\in A}\bigwedge\lozenge_a\langle\lumis\rangle(\Phi_a\wedge\psi_a) 
$
\end{proof}

\paragraph*{Alternative axiomatization for RML}

Although we have now shown that {\bf A4}$^\succeq$ is derivable from {\bf RML}, it is of course impossible to show that {\bf RK} is derivable from $\mathbf{RML} - \mathbf{RK} + \mathbf{A4}^\succeq$, because the axiomatization {\bf RML} uses individual refinements $[\lumis]_a$ as primitives and not $[\lumis]$. As both logics are equally expressive as the base (multi-agent) modal logic, at some level there is a correspondence but this is not very interesting. 

The axiomatization {\bf AAUML} provides us with an alternative axiomatization for RML, for the language with $[\lumis]$ as the unique update modality. Given the system {\bf AAUML} of Table \ref{table.axiomatization}, remove axioms {\bf U1} --- {\bf U4}, and replace in the axioms {\bf A1} --- {\bf A4} $\dia{\AAUL}$ by $\dia{\lumis}$ and call the result {\bf A1}$^\lumis$ --- {\bf A4}$^\lumis$. The resulting proof system is an alternative axiomatization for refinement modal logic. Let us call it {\bf RMLalt}. It is displayed in Table \ref{table.axiomatization3}. The correspondence between {\bf RKProp} and {\bf A1}$^\lumis$ is trivial. Other differences may be considered of interest. For example, {\bf RMLalt} contains {\bf RE} (replacement of equivalents), whereas {\bf RML} contains {\bf NecR} (necessitation for the refinement quantifier --- we recall that necessitation for the arbitrary update quantifier is indeed derivable using {\bf RE}).

\begin{table}[h]
\[\begin{array}{ll}
{\bf Prop} & \text{all tautologies of propositional logic} \\
{\bf K} & \square_a(\phi \rightarrow \psi)\rightarrow (\square_a\phi\rightarrow \square_a\psi)\\
{\bf A1}^\lumis & \langle \lumis \rangle \phi_0 \leftrightarrow \phi_0 \hspace{4cm} \hfill \text{where } \phi_0\in \langpl\\
{\bf A2}^\lumis & \langle \lumis \rangle (\phi \vee \psi)\leftrightarrow (\langle \lumis\rangle \phi \vee \langle \lumis\rangle \psi)\\
{\bf A3}^\lumis & \langle \lumis \rangle (\phi_0 \wedge \phi)\leftrightarrow (\phi_0\wedge \langle \lumis\rangle\phi) \hfill \text{where } \phi_0\in \langpl\\
{\bf A4}^\lumis & \langle \lumis\rangle\bigwedge_{a\in A}(\bigwedge\lozenge_a\Phi_a\wedge \square_a\psi_a)\leftrightarrow  \bigwedge_{a\in A}\bigwedge\lozenge_a\langle \lumis\rangle(\Phi_a\wedge\psi_a) \hspace{1cm} \ \\
{\bf MP} & \text{from } \phi\rightarrow \psi \text{ and } \phi \text{ infer }  \psi\\
{\bf NecK} & \text{from } \phi \text{ infer } \square_a\phi\\
{\bf RE} & \text{from } \chi\eq\psi \text{ infer } \phi[\chi/\atom] \eq \phi[\psi/\atom]
\end{array} \]
\caption{The alternative axiomatization {\bf RMLalt} of RML}
\label{table.axiomatization3}
\end{table}
}

\section{Conclusions and further research} \label{sec.conclusion}

\paragraph*{Conclusions} We presented {\em arbitrary arrow update model logic} (AAUML). We provided an axiomatization of AAUML, which also demonstrates that AAUML is decidable and equally expressive as multi-agent modal logic. We established arrow update model synthesis for AAUML. We determined the update expressivity hierarchy including AAUML and many other update logics, including other arrow update logics, action model logics, and refinement modal logic. 

\paragraph*{Further research on $B$-restricted arrow update synthesis} Let $B$ be any subset of the set of all agents. Building upon the $B$-refinements of \cite{bozzellietal.inf:2014} and motivated by a similar approach used in \cite{hales2013arbitrary}, a variant of the synthesis problem for AAUML is to consider \emph{$B$-restricted} arrow update models. Roughly speaking, a $B$-restricted arrow update model represents an event where only the agents in $B$ can gain more factual information, while the agents outside $B$ remain at least as uncertain as they were before the event. The $B$-restricted synthesis problem can be solved in a very similar way to the unrestricted problem that we presented in this paper. 

Similarly to how arrow update models have the same update expressivity as action models and refinements, $B$-restricted arrow update models have the same update expressivity as $B$-restricted action models, and $B$-refinements have larger update expressivity.

Formally introducing $B$-restricted arrow update models, and showing that the results apply there as well, would require a lot of complicate notation and several complex definitions. So for the sake of readability we did not include them in this paper.

\paragraph*{Knowledge and belief} Arrow updates result in changes of knowledge and belief. Of particular interest are therefore updates that preserve the $\mathcal{S}5$ or $\mathcal{KD}45$ nature of relational models. It is unclear how to enforce such preservation semantically, as discussed in Section \ref{sec.aaumlex}. Such issues need to be resolved in order to find an axiomatization for arrow update logic, or arrow update model logic, or quantified versions of these, for the class $\mathcal{S}5$ or $\mathcal{KD}45$. These well-known problems  \cite{kooietal:2011,hvdetal.aij:2017} are related to similar issues for refinement modal logic, as refinement also is relational change. Non-trivial $\mathcal{S}5$ and $\mathcal{KD}45$ versions of refinement modal logic have been proposed in work by Hales {\em et al.} \cite{halesetal:2011,hales.aiml:2012}, and particular mention deserves  Hales' Ph.D.\ thesis \cite[Section 9.3, Section 9.4]{hales:2016} wherein the axiomatization AAUML is adapted to the classes $\mathcal{KD}45$ and $\mathcal{S}5$, respectively. These diverse results might inspire similar solutions for truly `epistemic' arrow update logics.

\paragraph*{Work in progress on the complexity of synthesis} We have shown that it is possible to perform synthesis for AAUML, and described an algorithm that does this synthesis. We have not, however, discussed the {\em computational complexity} of that algorithm. The complexity is non-elementary. The non-elementary blowup occurs when $\langle \AAUL\rangle$ operators are nested. In general, if $\phi$ is a  basic modal formula, then the procedure outlined in this paper allows us to find a basic modal formula $\phi'$ that is equivalent to $\langle \AAUL\rangle \phi$, but the size of this $\phi'$ is, in the worst case, exponential in the size of $\phi$. As a result, a formula $\psi$ containing $n$ nested $\langle \AAUL\rangle$ operators can be translated to an equivalent formula $\psi'$ of modal logic using this method, but both the formula $\psi'$ and the arrow update model synthesized for the outermost $\langle \AAUL\rangle$ operator will have experienced exponential blowup $n$ times, resulting in a non-elementary blowup overall. We suspect that this is unavoidable, i.e., that the difficulty of the synthesis problem is non-elementary. We do not, for now, have a hardness proof, however. 

\bibliographystyle{plain}
\bibliography{biblio2019}

\end{document}